\crefname{proposition}{Proposition}{Propositions}
\Crefname{proposition}{Proposition}{Propositions}
\theoremstyle{plain}
\newtheorem{theorem}{Theorem}[section]
\newtheorem{proposition}[theorem]{Proposition}
\theoremstyle{remark}
\newtheorem{definition}[theorem]{Definition}
\newtheorem*{remark}{Remark}
\def\journal@name{} 
\newcommand{\V}{\mathcal{V}}
\newcommand{\X}{\mathcal{X}}
\newcommand{\Xb}{\mathscr{X}}
\newcommand{\M}{\mathcal{M}}
\newcommand{\C}{\mathcal{C}}
\newcommand{\setx}{X}
\newcommand{\sety}{Y}
\newcommand{\setz}{Z}
\newcommand{\seqx}{X}
\newcommand{\seqy}{Y}
\newcommand{\seqz}{Z}
\newcommand{\elmtx}{x}
\newcommand{\elmty}{y}
\newcommand{\elmtz}{z}
\begin{document}

\begin{frontmatter}

\title{Distances for Comparing Multisets and Sequences}
\runtitle{Distances for Comparing Multisets and Sequences}

\begin{aug}
\author[A]{\fnms{George}~\snm{Bolt}\ead[label=e1]{g.bolt@lancaster.ac.uk}},
\author[B]{\fnms{Sim\'{o}n}~\snm{Lunag\'{o}mez}\ead[label=e3]{simon.lunagomez@itam.mx}}
\and
\author[A]{\fnms{Christopher}~\snm{Nemeth}\ead[label=e2]{c.nemeth@lancaster.ac.uk}\orcid{0000-0002-9084-3866}}
\address[A]{Lancaster University\printead[presep={,\ }]{e1,e2}}

\address[B]{Instituto Tecnológico Autónomo de México (ITAM)\printead[presep={,\ }]{e3}}
\end{aug}

\begin{abstract}
Measuring the distance between data points is fundamental to many statistical techniques, such as dimension reduction or clustering algorithms. However, improvements in data collection technologies has led to a growing versatility of structured data for which standard distance measures are inapplicable. 
In this paper, we consider the problem of measuring the distance between sequences and multisets of points lying within a metric space, motivated by the analysis of an in-play football data set. Drawing on the wider literature, including that of time series analysis and optimal transport, we discuss various distances which are available in such an instance. For each distance, we state and prove theoretical properties, proposing possible extensions where they fail. Finally, via an example analysis of the in-play football data, we illustrate the usefulness of these distances in practice.
\end{abstract}


\end{frontmatter}

\section{Introduction}
\label{sec:intro}
Distance measures represent a versatile tool for the practicing data analyst. Once a distance has been specified, an array of subsequent methodologies immediately become available. These include clustering algorithms such as hierarchical clustering \cite[Sec. 12.3]{izenman2008modern} and DBSCAN \citep{ester1996density}, placing data points into groups; dimension reduction or embedding techniques, such as multidimensional scaling (MDS) \cite[Ch. 13]{izenman2008modern} and UMAP \citep{mcinnes2018umap,becht2019dimensionality}, facilitating data visualisation; or prediction algorithms such as k-nearest neighbours regression \cite[Sec. 13.3]{hastie2009elements}. 

However, with improvements in data collection comes an increasingly diverse array of structured data for which standard distance measures are unsuitable. This motivates consideration of distances tailored to fit the objects of focus. Examples include graph distances \citep{Donnat2018b}, appearing frequently in the network data analysis literature, or distances between ranks \citep{kumar2010generalized}, which often appear in the context of preference learning.

In this paper, we consider the problem of eliciting distances between sequences and multisets. In the most general sense, a sequence is an enumerated collection of objects within some underlying space, with a multiset being the un-ordered analogue of a sequence. An intuitive example is a text document. Naturally, this can be seen as a sequence of words. However, it can also be seen as a multiset of words, or what is referred to by some as a `bag-of-words' \citep{kusner2015word}, wherein two documents equal up to a permutation of word order would be considered one and the same. Other examples of data interpretable in this manner are
\begin{itemize}
    \item Temporal networks, for example, in the analysis of \cite{Donnat2018b} biological measurements were encoded via a graph for a given patient through a longitudinal study;
    \item User interactions within online platforms. For example, the Foursquare data set \citep{Yang2015} records users checking into different venues throughout the day, e.g. cinemas, cafes, sports venues etc., which leads to a sequence of sequences for each user, with the inner sequence for a user representing one day of their venue check-ins;
    \item Historical purchases, where the purchase history of a single customer could be represented as a sequence or multiset of orders, with each order encoded as set of products. Such data often appears in the  market basket analysis literature \citep{raeder2011market}.
\end{itemize}

Another instance of data in this form, and one that will serve as our running example, can be obtained from an in-play football data set shared by StatsBomb.\footnote{\url{https://github.com/statsbomb/open-data}} This incredibly rich data set contains high-granularity information concerning events within football matches, and of particular interest to us is information regarding passes. In particular, it is possible from these data to infer, for a given team in a given match, series of un-interrupted passes between their players. Intuitively, each series of passes can be seen as a path over player positions (\Cref{fig:football_ex}). Moreover, over the course of a football match this will lead to a series of such paths being observed, for example, \Cref{fig:football_ex} shows the first ten enacted by England in a match against Italy during the UEFA Euro 2020 competition. As such, a football match (for a given team) can be seen as a sequence or multiset of paths.


Though the problem of eliciting distances between sequences and multisets is not new, little has been done in the way of an overarching review. Moreover, oftentimes these have been addressed separately, with no recognition of the connections inherent from the fact sequences and multisets are closely related. Herein lies the motivation for this work. The intention is for this to serve as a point for reference for anyone faced with data of this structure; which we feel is a very general one. The only restriction we impose is that a distance metric be defined over the underlying space. For each distance, we provide an intuitive interpretation, prove theoretical properties and discuss how they can be computed. 

The remainder of this paper will be structured as follows. In \Cref{sec:background}, we introduce the notation to be used throughout and provided background on distance metrics. \Cref{sec:multisets} then introduces distances to compare multisets, whilst \Cref{sec:sequences} does so for sequences. Finally, in \Cref{sec:data_analysis} we illustrate the use of these distances in practice through an analysis of the StatsBomb data set, where we consider visualising data structure via a dimension reduction technique.

\begin{figure}
    \centering
    \includegraphics[width=0.95\linewidth]{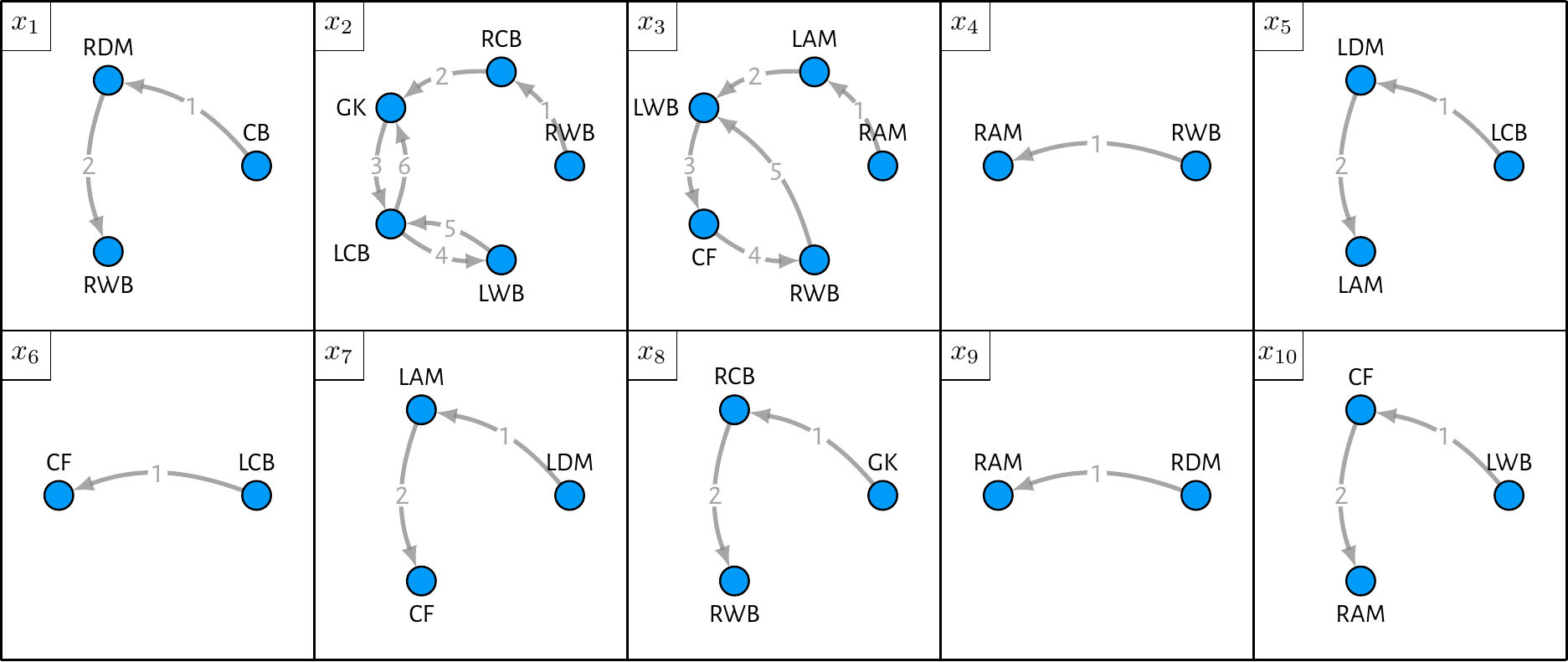}
	\caption{Data from a single match of the StatsBomb data set, where each $\elmtx_i$ represents a un-interrupted series of passes by the given team, with indices indicating order of observation, that is, $\elmtx_i$ was observed before $\elmtx_{i+1}$. Here vertices correspond to player positions (abbreviated according to \Cref{table:position_abbrv}) and edges represent passes, labelled according to the order in which they occurred. These particular observations correspond to the first ten series of passes made by England in their game against Italy during the UEFA Euro 2020 competition. }
	\label{fig:football_ex}
\end{figure}

\section{Background and notation}

\label{sec:background}

A single observation will be denoted by $X$, which may represent either a sequence or multiset. A sequence we denote as follows
$$\seqx=(\elmtx_1,\dots,\elmtx_N)$$
where $x_i \in \X$ for some general space $\X$. For example, regarding the football data (\Cref{fig:football_ex}), $\X$ would denote the space of all paths over the player positions. A multiset is the order-invariant analogue of a sequence and we denote it as follows
$$\setx = \{\elmtx_1,\dots, \elmtx_N\}$$
where $x_i\in \X$, with the curly braces $\{\}$ being used to signify this is a multiset and hence the order of elements therin is arbitrary. Note in both we allow $\elmtx_i = \elmtx_j$ for $i\not=j$, and hence the need to opt for multiset over regular sets. We let $|\setx|=N$ denote sequence length, or equivalently multiset cardinality. 

A multiset $X$ can also be represented via a function $m_\setx \,: \, \X \to \mathbb{Z}_+$ where $m_\setx(\elmtx)$ denotes the multiplicity of $x$ in $\setx$, which we refer to as the \textit{multiplicity function}. Moreover, this defines the support of $\setx$ in $\X$ as follows $$\text{Supp}(X) := \{x \in \X \,:\, m_X(x) > 0\},$$ denoting the set of unique elements in $\setx$. As an example, we might have $\X = \mathbb{Z}_+$ with $X=\{1,1,1,2,2,3\}$ a multiset, where $m_\setx(1)=3$, $m_\setx(2)=2$ and $m_\setx(3)=1$, whilst $\text{Supp}(X)=\{1,2,3\}$.

A distance measure over the space $\Xb$ is a function $d \, : \, \Xb \times \Xb \to \mathbb{R}_+$, taking as input two elements of the space and outputting some measure of dissimilarity between them. It is natural to require that such functions satisfy certain properties, which are formalised mathematically via the notion of a distance metric.

\begin{definition}[Distance metric]
    \label{def:metric}
    A function $d \, : \, \Xb \times \Xb \to \mathbb{R}_+$ is a \textit{distance metric} over the space $\Xb$ if, for any $x,y,z \in \Xb$, the following conditions are satisfied
    \begin{longlist}
	\item $d(x, y) = 0 \iff x = y$  (identity of indiscernibles);\label{metric_condition2:indentity}
	\item $d(x,y) = d(y,x)$  (symmetry);\label{metric_condition3:symmetry}
	\item $d(x,y) \leq d(x,z) + d(z,y)$ (triangle inequality);\label{metric_condition4:tri}
    \end{longlist}
    with the pair $(\Xb,d)$ being referred to as a \textit{metric space}.
\end{definition}

Of particular interest in this work are distance measures between sequences and multisets, so that $\Xb$ of \Cref{def:metric} would denote the space of all sequences or multisets over the underlying space $\X$.
As mentioned in the introduction, towards defining such distances it will be assumed that one has access to a distance $d \, : \, \X \times \X \to \mathbb{R}_+$ over the underlying space $\X$, which we refer to as the \textit{ground distance}. For example, regarding the football data, this would amount to a distance between paths. Moreover, it will be assumed that $d$ satisfies the conditions of \Cref{def:metric} (with $\Xb=\X$), and hence is a distance \textit{metric}. In this way, the multiset or sequence $\seqx$ can be seen as collections of points within the metric space $(\X,d)$.



Finally, we discuss distance normalisation. Often when comparing objects of different sizes via distance measures it can be useful to normalise them. A solution is to use an approach adopted by \cite{Donnat2018b}, based on a metric transformation. This transform is referred to therein as the \textit{Steinhaus transform}, but is also seen in \cite{Deza2009} where it is referred to as the \textit{biotope transform}. Given a distance metric $d$ (note it must be a metric) over the space $\Xb$, with $c \in \Xb$ some reference element of this space, the Steinhaus transform of $d$ is given by
\begin{align}
	\label{eq:steinhaus}
	\bar{d}(x,y) := \frac{2d(x,y)}{d(x,c) + d(y,c) + d(x,y)},
\end{align}
defining a new distance $\bar{d}$, which can be shown to be a metric. Note, we leave out any reference to $c$ in this notation, though one should be aware that by definition $\bar{d}$ does depend on it. Observe that since $d$ is a metric, and hence obeys the triangle inequality \eqref{metric_condition4:tri}, we have the following result
\begin{equation*}
    \begin{aligned}
      d(x,y) &\leq d(x,c)+d(c,y)  \\
     \implies 2d(x,y) &\leq d(x,c)+d(y,c)+d(x,y) \\
     \implies \bar{d}(x,y) &\leq 1
    \end{aligned}
\end{equation*}
that is, $\bar{d}$ is bounded. Moreover, it will to be non-negative since it is a ratio of non-negative terms. As such we have $\bar{d}(x,y) \in [0,1]$ for any $x,y \in \Xb$.

\section{Distances between multisets}
\label{sec:multisets}
In this section, we outline distance measures one can use to compare multisets. All of the distances here share a similar structure, each considering possible relations between elements of either observation, seeking to find the relation which is in some sense `optimal'. The objective to optimise is typically some notion of cost, and it is the minimal value of this cost which is taken as the distance. Where these measures differ is in the structure of this relation. For each distance in this section, we give an intuitive and formal definition, discuss theoretical properties and provide details on how they are computed. 


\subsection{Matching distances}
\label{sec:matching_distances}

A natural route to defining a distance between multisets is to consider pairing the elements from multiset with the other, defined formally via the notion of a \textit{matching} (\Cref{fig:ex_matching}). For each matching, one can make use of the ground distance between set elements to define a notion of cost. A distance is then defined by finding the minimum cost matching. The resulting distance also has an alternative interpretation; seen as the minimum cost of turning one multiset into another by (i) inserting or deleting elements with some specified cost or, (ii) substituting one element for another at a cost proportional to their dissimilarity.

Formally, given two multisets $\setx$ and $\sety$ a \textit{matching} is a multiset of pairs 
\begin{equation}
    \begin{aligned}
        \M = \{(\elmtx, \elmty) \, : \, \elmtx \in \setx, \, \elmty \in \sety\}
    \end{aligned}
    \label{eq:matching}
\end{equation}
such that each $\elmtx\in \setx$ is matched to at most one $\elmty \in \sety$, taking into account multiplicities, and \textit{vice versa}. Equivalently, each $\elmtx \in \text{Supp}(\setx)$ can be matched to at most $m_\setx(\elmtx)$ elements $y \in \text{Supp}(\sety)$, and \textit{vice versa}. Observe by this definition that one must have $0 \leq |\M| \leq \min(|\setx|,|\sety|)$, and a matching which achieves this upper bound is said to be \textit{complete}. For example, the matching of \Cref{fig:ex_matching} is complete. Finally, we define
$$\M_\setx := \{x\in \setx \, : \, \exists\,\, \elmty \in \sety, \text{ with }  (\elmtx,\elmty) \in \M\}$$
so that $\M_\setx \subseteq \setx$ denotes the elements of $\setx$ which are included in the matching $\M$, whilst we introduce the shorthand $\M_\setx^c := \setx \setminus \M_\setx$ to denote the elements of $\setx$ \textit{not} included in the matching $\M$.

For any given matching $\M$ we can assign it a cost as follows 
\begin{equation}
	C(\M) = \sum_{(\elmtx,\elmty) \in \M} d(\elmtx,\elmty) + \lambda(\M)
	\label{eq:gen_matching_cost}
\end{equation}
with $d(\cdot,\cdot)$ is the ground distance over the underlying space $\X$ and $\lambda(\M) \geq 0$ is some penalty term for un-matched elements, that is, we sum the pairwise distances of matched elements and penalise un-matched elements. A distance between $\setx$ and $\sety$ is now defined by minimising this cost over all matchings. 

Each choice for $\lambda(\M)$ will define a different distance, and we consider two. For the first, we use the ground distance $d(\cdot,\cdot)$, letting
\begin{equation*}
    \lambda(\M) = \sum_{\elmtx \in  \M_\setx^c} d(\elmtx, \Lambda) + \sum_{\elmty \in \M_\sety^c} d(\elmty, \Lambda)
\end{equation*}
where $\Lambda \in \X$ denotes a reference value, typically taken to be the null or equivalent, with $d(\elmtx,\Lambda)$ often capturing some notion of size for the element $\elmtx \in \X$, though this will depend on the choice of metric $d$ and the underlying space. For example, if $x \in \X$ are paths we might take $\Lambda$ to be the empty path.

\begin{definition}[Matching distance]
    For two multisets $\setx$ and $\sety$ the matching distance is given by the following
	\begin{align}
		d_{\mathrm{M}}(\setx,\sety) := \min_{\M} \left\{  \left(  \sum_{(\elmtx,\elmty) \in \M} d(\elmtx,\elmty) \right) + \sum_{\elmtx \in \M_\setx^c} d(\elmtx, \Lambda) + \sum_{\elmty \in \M_\sety^c} d(\elmty, \Lambda)\right\}
		\label{dist:matching}
	\end{align}
	where $\M$ denotes a matching of $\setx$ and $\sety$, and $\Lambda \in \X$ denotes a reference element of $\X$, typically the null element.
	\label{def:matching}
\end{definition}

An alternative approach is to penalise each un-matched entry equally by some pre-specified amount $\rho>0$, that is 
\begin{equation*}
    \begin{aligned}
        \lambda(\M) &=\rho \times \text{(\# un-matched elements)}\\
        &=\rho( |\setx| + |\sety| - 2|\M| ),
    \end{aligned}
\end{equation*}
which leads to the following distance.

\begin{definition}[Fixed-penalty matching distance]
    For two multisets $\setx$ and $\sety$ the fixed-penalty matching distance is given by the following
	\begin{equation}
		\begin{aligned}
			d_{\mathrm{M}, \rho}(\setx,\sety) := \min_{\M} \left\{ \sum_{(\elmtx,\elmty) \in \M} d(\elmtx,\elmty) + \rho( |\setx| + |\sety| - 2|\M| ) \right\}
			\label{dist:fp_matching}
		\end{aligned}
	\end{equation}
	where $\M$ is a matching of $\setx$ and $\sety$, and $\rho>0$ is a parameter controlling the penalty for un-matched elements.
	\label{def:fp_matching}
\end{definition}

\begin{figure}
    \centering
    \begin{subfigure}{0.48\linewidth}
        \centering
        \vspace{0.2cm}
        \includegraphics[width=0.5\linewidth, page=1]{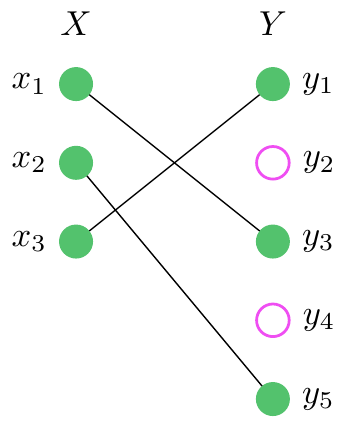}   
        \vspace{0.2cm}
        \caption{Matching of $\setx$ and $\sety$}
        \label{fig:ex_matching}
    \end{subfigure}
    \begin{subfigure}{0.48\linewidth}
        \centering
        \vspace{0.2cm}
        \includegraphics[width=0.5\linewidth, page=2]{img/example_relations.pdf}   
        \vspace{0.2cm}
        \caption{Coupling of $\mu_\setx$ and $\mu_\sety$}
        \label{fig:ex_coupling_emd}
    \end{subfigure}
    \caption{Example relations found when evaluating multiset distances, with (a) showing a matching $\M$ of the multisets $\setx = \{\elmtx_1,\dots,\elmtx_3\}$ and $\sety= \{\elmty_1, \dots, \elmty_5\}$, whilst (b) shows a coupling $\mathbf{P}$ of the distributions $\mu_\setx$ and $\mu_\sety$, where the edge from $\elmtx_i$ to $\elmty_j$ is proportional to $\mathbf{P}_{ij}$, the mass moved from $\elmtx_i \in \X$ to $\elmty_j \in \X$, and node (circle) radii of $\elmtx_i$ and $\elmty_i$ are proportional to $\mu_\setx(\elmtx_i)$ and $\mu_\sety(\elmty_j)$, respectively. For simplicity, here we assume the elements of $\setx$ and $\sety$ are distinct, so that within $\mu_\setx$ and $\mu_\sety$ the masses are equal.}
\end{figure}

Computation of these distances requires finding an optimal matching, which can be achieved via the Hungarian algorithm \citep{Kuhn1955}. This is an algorithm proposed to solve the assignment problem, which seeks an optimal assignment of $n$ `workers' to $n$ `tasks', doing so with a complexity of $\mathcal{O}(n^4)$. As we detail in \Cref{sec:computing_matching_distances}, by setting up the right optimisation problem, we can obtain these two distances at a computational complexity of $\mathcal{O}(\max(N,M)^4 + N M)$, where $N=|\setx|$ and $M=|\sety|$, with the $\max(N,M)^4$ term due to the Hungarian algorithm, whilst the $N M$ term arises through the need to evaluate all pairwise distances between elements of $\setx$ and $\sety$.

We now discuss some theoretical properties of both distances, proofs of which can be found in \Cref{proof:matching_is_metric}. Firstly, both are distance metrics (\Cref{def:metric}).

\begin{proposition}
    Both $d_{\mathrm{M}}$ and $d_{\mathrm{M},\rho}$ satisfy metric conditions \eqref{metric_condition2:indentity}-\eqref{metric_condition4:tri}.
	\label{thm:matching_is_metric}
	\label{thm:fp_matching_is_metric}
\end{proposition}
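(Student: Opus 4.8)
The plan is to verify the three conditions of \Cref{def:metric} for both distances, treating them in increasing order of difficulty. Throughout I work directly with the cost of a matching, writing $C(\M)$ for the objective inside the minima of \eqref{dist:matching} or \eqref{dist:fp_matching} (the argument runs in parallel for the two), and I use repeatedly that every summand is non-negative (the ground distance is $\mathbb{R}_+$-valued and $\rho>0$), so that each distance is itself non-negative and is attained at some optimal matching.

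Symmetry \eqref{metric_condition3:symmetry} is immediate. Given a matching $\M$ of $\setx$ and $\sety$, the flipped multiset of pairs $\M' = \{(\elmty,\elmtx) : (\elmtx,\elmty)\in\M\}$ is a matching of $\sety$ and $\setx$ with $\M'_\sety = \M_\sety$ and $\M'_\setx = \M_\setx$; since $d(\elmtx,\elmty)=d(\elmty,\elmtx)$ and the penalty terms are symmetric in their arguments, $\M \mapsto \M'$ is a cost-preserving bijection between the two families of matchings, so the minima coincide. For identity of indiscernibles \eqref{metric_condition2:indentity}, the forward implication is handled by exhibiting a zero-cost matching when $\setx=\sety$: pairing each element of $\setx$ with an identical copy in $\sety$ gives a complete matching with no un-matched elements, each pair contributing $d(\elmtx,\elmtx)=0$, so by non-negativity the distance is $0$. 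The reverse implication is where the two distances diverge. Assuming the distance is $0$ and taking an optimal $\M$, non-negativity forces every summand to vanish. For $d_{\mathrm{M},\rho}$ this is clean: the penalty $\rho(|\setx|+|\sety|-2|\M|)=0$ with $\rho>0$ gives $|\M|=(|\setx|+|\sety|)/2$, which combined with $|\M|\le\min(|\setx|,|\sety|)$ forces $|\setx|=|\sety|=|\M|$, i.e.\ a complete matching leaving nothing un-matched; each matched pair then satisfies $d(\elmtx,\elmty)=0$, hence $\elmtx=\elmty$, so $\setx=\sety$. For $d_{\mathrm{M}}$ the same vanishing shows matched pairs are identical and every un-matched $\elmtx$ satisfies $d(\elmtx,\Lambda)=0$, i.e.\ equals $\Lambda$; to conclude $\setx=\sety$ one must exclude spurious un-matched copies of $\Lambda$, so here I would record the mild assumption that the reference element $\Lambda$ does not itself occur in the multisets, without which identity can genuinely fail (e.g.\ $\setx=\{\Lambda\}$ versus the empty multiset both give cost $0$).

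The triangle inequality \eqref{metric_condition4:tri} is the main obstacle and the one step requiring real work. The strategy is a composition-of-matchings argument: fix optimal matchings $\M_1$ of $(\setx,\sety)$ and $\M_2$ of $(\sety,\setz)$, and build a matching $\M$ of $(\setx,\setz)$ by composing through the intermediate multiset, placing $(\elmtx,\elmtz)\in\M$ whenever a common $\elmty$ has $(\elmtx,\elmty)\in\M_1$ and $(\elmty,\elmtz)\in\M_2$. Because matchings are partial injections respecting multiplicities, this composition is again a valid matching, so $d(\setx,\setz)\le C(\M)$, and it suffices to show $C(\M)\le C(\M_1)+C(\M_2)=d(\setx,\sety)+d(\sety,\setz)$. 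I would prove this by a charging argument classifying each element of $\sety$ according to whether it is matched in both, in $\M_1$ only, in $\M_2$ only, or in neither, and likewise tracking the elements of $\setx$ and $\setz$ left un-matched by $\M_1$ and $\M_2$. Each matched pair $(\elmtx,\elmtz)$ of $\M$ is charged via the ground triangle inequality $d(\elmtx,\elmtz)\le d(\elmtx,\elmty)+d(\elmty,\elmtz)$; each element left un-matched by $\M$ because its intermediate partner dropped out is charged either to a matched cost plus a penalty (for $d_{\mathrm{M}}$, via $d(\elmtx,\Lambda)\le d(\elmtx,\elmty)+d(\elmty,\Lambda)$) or to a single penalty term of the other matching (for $d_{\mathrm{M},\rho}$, via $\rho\le\rho$).

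The crux of this last step, and the place where care is needed, is verifying that the assignment uses each cost and penalty term of $C(\M_1)+C(\M_2)$ at most once, so that nothing is double-counted; the remaining (unused) matched costs from $\M_1$ and $\M_2$ only help. Once that bookkeeping is in place the inequality $C(\M)\le C(\M_1)+C(\M_2)$ holds and the triangle inequality follows for both distances. I expect the composition being a valid partial injection and the non-overlap of the charges to be the only genuinely delicate points; everything else reduces to the ground-metric axioms and the non-negativity already used above.
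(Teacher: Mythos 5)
Your proposal is correct and takes essentially the same route as the paper's own proof: the same zero-cost matching for the forward direction of \eqref{metric_condition2:indentity}, the same assumption that $\Lambda$ occurs in neither multiset for the converse (which the paper also invokes, calling it implicit), and the same composition-of-matchings construction through the intermediate multiset followed by a term-by-term charging argument (ground triangle inequality for matched pairs, penalty comparisons such as $d(\elmtx,\Lambda)\le d(\elmtx,\elmty)+d(\elmty,\Lambda)$ and $\rho\le\rho$ for un-matched elements) to establish \eqref{metric_condition4:tri}. No gaps; the only addition beyond the paper is your explicit remark that the composed relation is itself a valid matching, which the paper leaves tacit.
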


We also have results regarding the form of optimal matchings for each distance, which are particularly useful when it comes to evaluating these distances via the Hungarian algorithm. 

\begin{proposition}
    \label{thm:matching_complete}
    For the matching distance $d_{\mathrm{M}}$, there always exists a complete matching achieving the optimum of \cref{dist:matching}.
\end{proposition}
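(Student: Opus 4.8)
The plan is to take any optimal matching and show that, whenever it fails to be complete, one can augment it by a single pair without increasing the cost; repeating this yields a complete optimal matching. The crucial ingredient is the triangle inequality \eqref{metric_condition4:tri} of the ground distance applied at the reference element $\Lambda$.

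First I would fix a matching $\M^*$ attaining the minimum in \cref{dist:matching}; such a matching exists since there are only finitely many matchings of two finite multisets. I would then observe that, because each element of $\setx$ participates in at most one pair, exactly $|\M^*|$ elements of $\setx$ are matched, and likewise for $\sety$. Hence if $\M^*$ is not complete, i.e.\ $|\M^*| < \min(|\setx|,|\sety|)$, then $|\M^*| < |\setx|$ and $|\M^*| < |\sety|$ simultaneously, so there is at least one unmatched element $\elmtx \in \M_\setx^c$ and at least one unmatched element $\elmty \in \M_\sety^c$ (with respect to $\M^*$).

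Next I would form the augmented matching $\M' := \M^* \cup \{(\elmtx,\elmty)\}$, which is again a valid matching since the newly paired elements were previously unmatched, and compute the change in cost $C$ from \eqref{eq:gen_matching_cost}. Moving $(\elmtx,\elmty)$ out of the unmatched penalty terms and into the matched sum gives
\begin{equation*}
    C(\M') - C(\M^*) = d(\elmtx,\elmty) - d(\elmtx,\Lambda) - d(\elmty,\Lambda).
\end{equation*}
Applying the triangle inequality at $\Lambda$, namely $d(\elmtx,\elmty) \leq d(\elmtx,\Lambda) + d(\Lambda,\elmty)$, together with symmetry \eqref{metric_condition3:symmetry}, shows this difference is $\leq 0$, so $C(\M') \leq C(\M^*)$. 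Since $\M^*$ is optimal we must in fact have equality, whence $\M'$ is itself an optimal matching with $|\M'| = |\M^*| + 1$.

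Finally I would iterate: each augmentation strictly increases the number of matched pairs while preserving optimality, and $|\M|$ is bounded above by $\min(|\setx|,|\sety|)$, so after finitely many steps the process terminates at an optimal matching that admits no further augmentation, i.e.\ one in which one side is fully matched and hence $|\M| = \min(|\setx|,|\sety|)$, a complete optimal matching. I do not anticipate a genuine obstacle here; the only point requiring care is verifying that incompleteness really forces unmatched elements on \emph{both} sides at once, which is precisely what the counting argument in the second step secures.
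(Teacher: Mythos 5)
Your proposal is correct and follows essentially the same argument as the paper: augment an optimal but incomplete matching with an unmatched pair $(\elmtx^*,\elmty^*)$, use the triangle inequality $d(\elmtx,\elmty) \leq d(\elmtx,\Lambda) + d(\elmty,\Lambda)$ to show the cost cannot increase, conclude the augmented matching is also optimal, and iterate until completeness. The only difference is cosmetic --- the paper frames the iteration as a proof by contradiction whereas you state it directly (and you also make explicit the counting step showing incompleteness leaves unmatched elements on both sides, which the paper leaves implicit).
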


\begin{proposition} 
    \label{thm:fp_matching_complete}
    For the fixed-penalty matching distance $d_{\mathrm{M},\rho}$ with $$\rho \geq \frac{1}{2} \left[\max_{\elmtx\in \setx, \elmty\in\sety} d(\elmtx,\elmty)\right]$$ there exists a complete matching achieving the optimum of \cref{dist:fp_matching}.
\end{proposition}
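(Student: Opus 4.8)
The plan is a straightforward exchange argument: I would start from any optimal matching and show that, while it remains incomplete, one can always adjoin a matched pair without increasing the cost, so that repeating this produces a complete matching that is still optimal.

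First I would fix a matching $\M^*$ attaining the minimum in \cref{dist:fp_matching}, which exists since there are only finitely many matchings of $\setx$ and $\sety$. Suppose $\M^*$ is not complete, i.e. $|\M^*| < \min(|\setx|,|\sety|)$. Since every pair in a matching consumes exactly one multiplicity-counted copy from each side, $\M^*$ leaves $|\setx| - |\M^*|$ unmatched copies in $\setx$ and $|\sety| - |\M^*|$ in $\sety$; incompleteness makes both of these quantities strictly positive, so there exist unmatched elements $\elmtx_0 \in \setx$ and $\elmty_0 \in \sety$.

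Next I would form the augmented matching $\M' := \M^* \cup \{(\elmtx_0, \elmty_0)\}$, which is again a valid matching (the extra pair uses the previously unused copies of $\elmtx_0$ and $\elmty_0$) with $|\M'| = |\M^*| + 1$. Writing $C$ for the fixed-penalty cost, i.e. \eqref{eq:gen_matching_cost} with $\lambda(\M) = \rho(|\setx|+|\sety|-2|\M|)$, the matched-distance sum grows by $d(\elmtx_0,\elmty_0)$ while the penalty shrinks by $2\rho$, so
\begin{equation*}
    C(\M') - C(\M^*) = d(\elmtx_0, \elmty_0) - 2\rho.
\end{equation*}
By the hypothesis $\rho \geq \tfrac{1}{2}\max_{\elmtx\in\setx,\elmty\in\sety} d(\elmtx,\elmty)$ we have $2\rho \geq d(\elmtx_0,\elmty_0)$, whence $C(\M') \leq C(\M^*)$; since $\M^*$ was optimal, $\M'$ is optimal too.

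Finally I would iterate: each augmentation raises $|\M|$ by one while keeping the cost at the minimum, so after at most $\min(|\setx|,|\sety|) - |\M^*|$ steps the matching becomes complete and still attains the optimum of \cref{dist:fp_matching}, which is exactly the claim. The only point needing care — and the one I would treat most explicitly — is the counting step guaranteeing an unmatched copy on \emph{both} sides whenever the matching is incomplete; the rest is a one-line cost comparison driven entirely by the bound on $\rho$.
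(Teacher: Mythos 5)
Your proof is correct and takes essentially the same approach as the paper: an exchange argument that adjoins an unmatched pair $(\elmtx_0,\elmty_0)$ to an optimal matching, observes the cost changes by $d(\elmtx_0,\elmty_0) - 2\rho \leq 0$ under the hypothesis on $\rho$, and iterates until the matching is complete. The only cosmetic difference is that the paper wraps this augmentation in a proof by contradiction (assuming every optimal matching is incomplete), whereas you argue directly, which is if anything slightly cleaner.
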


As consequence of \Cref{thm:matching_complete,thm:fp_matching_complete}, one only needs to optimise over complete matchings when the relevant conditions hold. As such, the size of optimisation problem to be solved via the Hungarian algorithm can be minimised (\Cref{sec:computing_matching_distances}). 

\Cref{thm:fp_matching_complete} also sheds light on how one might choose $\rho$. Following the rationale that when an optimal matching is incomplete one is to some extent ignoring information, it makes sense to choose $\rho$ such that a complete optimal matching can always be found. From \Cref{thm:fp_matching_complete}, one can see this will depend on the pairwise distances of $\setx$ and $\sety$. However, if the ground distance $d$ happens to be a bounded, that is $d(x,y) \leq K$ for all $x,y \in \X$ and $0 < K < \infty$, then by \Cref{thm:fp_matching_complete} if $\rho \geq K/2$ one is guaranteed to have a complete optimal matching. Moreover, if one would like as much of the distance to be driven by the pairwise distances as possible then $\rho = K/2$ is a sensible choice. Though having a bounded distance appears restrictive, recall that via the Steinhaus transform of \cref{eq:steinhaus} one can obtain a bounded distance with $K=1$ given \textit{any} distance metric. 

We finish by noting that both matching distances are similar to those proposed by others. In particular, \cite{Ramon2001} and \cite{eiter1997distance} both considered the problem of comparing sets within a metric space, though they considered genuine sets whereas we consider multisets, with both defining their respective measures via an optimal relationship between the two sets. Of the two, \cite{Ramon2001} is most similar, and in fact the vernacular and notation of matchings that we adopted here was inspired by theirs.

\subsection{Earth mover's distance}
\label{sec:emd}
Though theoretically sound, a drawback of the matching distances is that when $\setx$ and $\sety$ are of different sizes the pairwise information of certain elements is to an extent ignored, with the contribution of un-matched elements coming solely via the penalisation terms. Towards defining a distance which avoids such issues, one can use ideas from the literature on Optimal Transport (OT) \citep{Peyre2019}, an area of research which has considered the problem of quantifying the dissimilarity of probability distributions over general metric spaces. Namely, by converting multisets to distributions an OT-based distance thereof can be serve as a proxy for a distance between the original observations.


We convert a multiset $\setx$ to a distribution as follows. Define $\mu_\setx \, : \, \X \to [0,1]$ via
\begin{equation}
    \begin{aligned}
        \mu_\setx(x) := \frac{m_\setx(x)}{|\setx|},
    \end{aligned}
    \label{eq:multiset_distribution_def}
\end{equation}
with $\mu_\setx(x)$ seen as the probability mass located at $x \in \X$.
Given two multisets $\setx$ and $\sety$, we now consider using an OT-based distance between $\mu_\setx$ and $\mu_\sety$ to measure their dissimilarity; namely the 1-Wasserstein distance \citep[Prop. 2.2]{Peyre2019}, also known as the \textit{earth mover's distance} (EMD).

The EMD admits the following intuition. One imagines that $\mu_\setx$ and $\mu_\sety$ represent quantities of mass at various locations within the space $\X$, that is, there is $\mu_\setx(\elmtx)$ mass at $\elmtx\in\X$ and $\mu_\sety(\elmty)$ mass at $\elmty\in\X$. Moreover, one considers transforming $\mu_\setx$ into $\mu_\sety$ by `transporting' the mass from one set of locations to the other. Assuming the cost of moving mass between two points is proportional to their distance, that is, moving one unit of mass from $x\in \X$ to $y \in \X$ incurs a cost $d(x,y)$, the EMD is then defined to be the minimum cost required to transform $\mu_\setx$ into $\mu_\sety$. 

Formally, this can be cast as a linear optimisation problem. Note that by definition, any $\mu_\setx$ and $\mu_\sety$ have non-zero mass at a finite number of points in the space $\X$, namely at $\text{Supp}(X)=\{x_1,\dots,x_K\}$ and $\text{Supp}(Y)=\{y_1,\dots,y_L\}$, respectively. As such, in transforming $\mu_\setx \to \mu_\sety$ we need only consider the movement of mass between this finite collection of locations. The decision variables will now be the mass sent from $x_i \in \text{Supp}(\setx)$ to $y_j \in \text{Supp}(\sety)$ for each pair $(\elmtx_i,\elmty_j)$, which we denote $\mathbf{P}_{ij}$ and collate into the $K \times L$ matrix $\mathbf{P}$ (\Cref{fig:ex_coupling_emd}). Furthermore, with $\mathbf{D}$ the $K \times L$ matrix of pairwise distances, where $\mathbf{D}_{ij} = d(x_i, y_j)$, the goal is to find a $\mathbf{P}$ minimising the total cost, that is
\begin{equation*}
    \begin{aligned}
        \min \sum_{i=1}^K \sum_{j=1}^L \mathbf{D}_{ij}\mathbf{P}_{ij}
    \end{aligned}
\end{equation*}
subject to the constraints 
\begin{equation*}
    \begin{aligned}
        \sum_{j=1}^L \mathbf{P}_{ij} = \mu_\setx(\elmtx_i) \quad\text{(for $i=1,\dots,K$)} && \text{and} &&
        \sum_{i=1}^K \mathbf{P}_{ij} = \mu_\sety(\elmty_j) \quad\text{(for $j=1,\dots,L$)} 
    \end{aligned}
\end{equation*}
which ensure that $\mathbf{P}$ defines a movement of mass which starts with the distribution $\mu_\setx$ and ends with $\mu_\sety$, as desired. The EMD is subsequently defined to be the total cost of an optimal $\mathbf{P}$.

Towards a more succinct definition, we adopt notation of \cite{Peyre2019}, denoting a set of feasible $\mathbf{P}$ as follows
$$\mathbf{U}(\mu_\setx,\mu_\sety) := \{\mathbf{P} \in \mathbb{R}_+ \, : \mathbf{P} \mathbbm{1}_L = \mathbf{p}_\setx, \mathbf{P}^\mathrm{T} \mathbbm{1}_K = \mathbf{p}_\sety\}$$
where $\mathbbm{1}_N = (1,\dots,1)$ is the length $N$ vector of ones, and 
\begin{align*}
        \mathbf{p}_\setx= (\mu_\setx(\elmtx_1),\dots,\mu_\setx(\elmtx_K)) &&  \mathbf{p}_\sety= (\mu_\sety(\elmty_1),\dots,\mu_\sety(\elmty_L))
\end{align*}
denote probability vectors associated with each distribution. Adopting the vernacular therein, we refer to any $\mathbf{P}\in \mathbf{U}(\mu_\setx,\mu_\sety)$ as a \textit{coupling} of $\mu_\setx$ and $\mu_\sety$. With this, the EMD can be defined a follows.


\begin{definition}[Earth mover's distance]
    For two multisets $\setx$ and $\sety$ the earth movers distance is given by the following
    \begin{align}
		d_{\mathrm{EMD}}(\setx, \sety) &:=\min_{\mathbf{P} \in \mathbf{U}(\mu_\setx,\mu_\sety)} \sum_{i=1}^K\sum_{j=1}^L\mathbf{D}_{ij} \mathbf{P}_{ij}
	\end{align}
	where $\mu_\setx$ and $\mu_\sety$ are the distributions obtained from $\setx$ and $\sety$ as defined by eq. \eqref{eq:multiset_distribution_def}.
	\label{def:emd}
\end{definition}

Computation of the EMD reduces to solving a linear optimisation problem; specifically the transportation problem. As such, one can appeal to literature on solvers thereof \cite[details can be found in][Ch. 3]{Peyre2019}. There also exist packages in various programming languages which can be used to implement these algorithms easily, for example, the Python Optimal Transport (POT) toolbox \citep{flamary2021pot}.

We now consider theoretical properties of $d_{\mathrm{EMD}}$ as a distance between multisets. Since the EMD is a distance metric between probability distributions \cite[Prop. 2.2]{Peyre2019}, some properties will be naturally inherited. However, thanks to the normalisation enacted when constructing distributions via \cref{eq:multiset_distribution_def}, not all of the metric conditions will hold, as summarised by the following result (proof in \Cref{proof:emd_metric_cond}).

\begin{proposition}
    The earth mover's distance $d_{\mathrm{EMD}}$ satisfies metric conditions \eqref{metric_condition3:symmetry} (symmetry) and \eqref{metric_condition4:tri} (triangle inequality), but fails \eqref{metric_condition2:indentity} (identity of indiscernibles). 
    \label{thm:emd_metric_cond}
\end{proposition}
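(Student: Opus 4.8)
The plan is to obtain symmetry and the triangle inequality for free from the fact that the underlying 1-Wasserstein distance is a genuine metric on probability distributions, and then to exhibit an explicit counterexample showing that identity of indiscernibles must fail. The key object is the map $\setx \mapsto \mu_\setx$ sending a multiset to the distribution of \cref{eq:multiset_distribution_def}; writing $W_1$ for the 1-Wasserstein distance, \Cref{def:emd} gives $d_{\mathrm{EMD}}(\setx,\sety) = W_1(\mu_\setx,\mu_\sety)$, and by \cite[Prop.~2.2]{Peyre2019} the function $W_1$ is a distance metric on the space of probability distributions over $(\X,d)$. Both positive conditions will therefore be pulled back along this map.

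For symmetry, I would simply note that $W_1$ is symmetric in its two arguments, so $d_{\mathrm{EMD}}(\setx,\sety) = W_1(\mu_\setx,\mu_\sety) = W_1(\mu_\sety,\mu_\setx) = d_{\mathrm{EMD}}(\sety,\setx)$. For the triangle inequality, given a third multiset $\setz$ with associated distribution $\mu_\setz$, I would apply the triangle inequality of $W_1$ to the three distributions $\mu_\setx,\mu_\setz,\mu_\sety$, obtaining
\begin{equation*}
    d_{\mathrm{EMD}}(\setx,\sety) = W_1(\mu_\setx,\mu_\sety) \leq W_1(\mu_\setx,\mu_\setz) + W_1(\mu_\setz,\mu_\sety) = d_{\mathrm{EMD}}(\setx,\setz) + d_{\mathrm{EMD}}(\setz,\sety).
\end{equation*}
Both properties thus transfer verbatim from $W_1$, requiring no computation beyond unwinding the definition.

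The failure of identity of indiscernibles is the substantive part, and it stems from the fact that the normalisation in \cref{eq:multiset_distribution_def} makes $\setx \mapsto \mu_\setx$ non-injective. Since $W_1$ is itself a metric on distributions, $d_{\mathrm{EMD}}(\setx,\sety)=0$ holds if and only if $\mu_\setx = \mu_\sety$; hence identity of indiscernibles for $d_{\mathrm{EMD}}$ would require $\mu_\setx=\mu_\sety \implies \setx=\sety$, which is false. To make this concrete I would take any multiset $\setx$ and the multiset $\sety$ obtained by scaling every multiplicity by a fixed integer $k \geq 2$, i.e.\ $m_\sety(\elmtx)=k\,m_\setx(\elmtx)$ for all $\elmtx\in\X$; then $|\sety|=k|\setx|$ and $\mu_\sety(\elmtx) = k\,m_\setx(\elmtx)/(k|\setx|) = \mu_\setx(\elmtx)$, so $\mu_\setx=\mu_\sety$ and $d_{\mathrm{EMD}}(\setx,\sety)=0$, yet $\setx\neq\sety$. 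A minimal instance is $\setx=\{\elmtx_1,\elmtx_2\}$ and $\sety=\{\elmtx_1,\elmtx_1,\elmtx_2,\elmtx_2\}$ with $\elmtx_1\neq\elmtx_2$.

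The proof presents no real analytical obstacle; the only point requiring care is the conceptual observation that the source of the failure is precisely the loss of cardinality information under normalisation, so the counterexample should differ from $\setx$ only through a rescaling of multiplicities rather than a change of support. Everything else is inherited directly from the metric structure of $W_1$.
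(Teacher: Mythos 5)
Your proposal is correct and follows essentially the same route as the paper's proof: symmetry and the triangle inequality are inherited from the 1-Wasserstein metric via $d_{\mathrm{EMD}}(\setx,\sety)=d_{\mathrm{W}_1}(\mu_\setx,\mu_\sety)$, and identity of indiscernibles fails by the same counterexample of scaling all multiplicities by an integer $C>1$, which leaves $\mu_\setx$ unchanged while changing the multiset. Your added remark that $d_{\mathrm{EMD}}(\setx,\sety)=0 \iff \mu_\setx=\mu_\sety$ and your explicit minimal instance are harmless refinements of the same argument.
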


The failure of condition \eqref{metric_condition2:indentity} occurs when one multiset is a multiple of the other, that is, if there is some $C>0$ such that $m_\setx(x)=C \cdot m_\sety(x)$ for all $x \in \X$. However, assuming the multisets $\setx$ and $\sety$, and the underlying space $\X$, are all of reasonable size, the chances of this occurring are likely to be low. As such, though \Cref{thm:emd_metric_cond} may appear unattractive, the practical consequences are unlikely to be severe; though this will clearly depend on how one intends to use the distance. In any case, if necessary, one can extend $d_{\mathrm{EMD}}$ to define a valid metric as follows.

\begin{definition}[Earth mover's distance with cardinality comparison]
    \begin{align}
		d_{\mathrm{sEMD}}(\setx, \sety) := \tau d_{\mathrm{EMD}}(\setx, \sety)+ (1-\tau) d_s(|\setx|, |\sety|)
	\end{align}
	where $d_s(\cdot, \cdot)$ denotes a distance metric between integer values, whilst $0 < \tau < 1$ controls the relative contributions of $d_{\mathrm{EMD}}$ and $d_s$ to the overall distance.
\end{definition}

\begin{proposition}
    The distance $d_{\mathrm{sEMD}}$ satisfies metric conditions \eqref{metric_condition2:indentity}-\eqref{metric_condition4:tri}.
    \label{thm:ex_emd_is_metric}
\end{proposition}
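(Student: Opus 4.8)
The plan is to leverage \Cref{thm:emd_metric_cond}, which already establishes symmetry and the triangle inequality for $d_{\mathrm{EMD}}$, together with the assumption that $d_s$ is a metric on the integers. Since $d_{\mathrm{sEMD}}$ is a fixed convex combination of these two functions with strictly positive weights $\tau$ and $1-\tau$, most of the argument reduces to observing that symmetry and the triangle inequality are preserved under non-negative linear combinations of functions enjoying them.

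For symmetry \eqref{metric_condition3:symmetry}, I would note that $d_{\mathrm{EMD}}(\setx,\sety)=d_{\mathrm{EMD}}(\sety,\setx)$ by \Cref{thm:emd_metric_cond} and $d_s(|\setx|,|\sety|)=d_s(|\sety|,|\setx|)$ since $d_s$ is a metric, so the weighted sum is unchanged upon swapping $\setx$ and $\sety$. For the triangle inequality \eqref{metric_condition4:tri}, given any three multisets $\setx,\sety,\setz$, I would apply the triangle inequality to each component separately, namely $d_{\mathrm{EMD}}(\setx,\sety)\leq d_{\mathrm{EMD}}(\setx,\setz)+d_{\mathrm{EMD}}(\setz,\sety)$ and the analogous bound for $d_s$ on the cardinalities, then multiply the first by $\tau>0$ and the second by $1-\tau>0$ and add, yielding exactly $d_{\mathrm{sEMD}}(\setx,\sety)\leq d_{\mathrm{sEMD}}(\setx,\setz)+d_{\mathrm{sEMD}}(\setz,\sety)$.

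The substantive step is the identity of indiscernibles \eqref{metric_condition2:indentity}, and this is where the cardinality term does its work. The reverse direction is immediate: if $\setx=\sety$ then $\mu_\setx=\mu_\sety$ and $|\setx|=|\sety|$, forcing both components, and hence $d_{\mathrm{sEMD}}$, to vanish. For the forward direction, suppose $d_{\mathrm{sEMD}}(\setx,\sety)=0$. Because $d_{\mathrm{EMD}}$ and $d_s$ are both non-negative and the weights $\tau$ and $1-\tau$ are strictly positive, a sum of two non-negative terms equalling zero forces each to be zero individually, so $d_{\mathrm{EMD}}(\setx,\sety)=0$ and $d_s(|\setx|,|\sety|)=0$. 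Since $d_s$ is a metric, the latter gives $|\setx|=|\sety|$. The former gives $\mu_\setx=\mu_\sety$, that is, $m_\setx(x)/|\setx|=m_\sety(x)/|\sety|$ for every $x\in\X$. Combining the cardinality equality with this proportionality yields $m_\setx(x)=m_\sety(x)$ for all $x$, which is precisely $\setx=\sety$ as multisets.

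The only point requiring care, and the main obstacle, is making rigorous the claim that $d_{\mathrm{EMD}}(\setx,\sety)=0$ implies $\mu_\setx=\mu_\sety$, since \Cref{thm:emd_metric_cond} only asserts that $d_{\mathrm{EMD}}$ fails identity of indiscernibles as a distance between multisets, not as a distance between distributions. I would resolve this by appealing to the fact that $d_{\mathrm{EMD}}$ is a genuine metric on probability distributions (being the $1$-Wasserstein distance), so $d_{\mathrm{EMD}}(\setx,\sety)=0$ is equivalent to $\mu_\setx=\mu_\sety$; the failure of identity at the multiset level is exactly the gap between $\mu_\setx=\mu_\sety$ and $\setx=\sety$, namely the degenerate case where one multiset is a positive multiple of the other, and the cardinality term closes precisely this gap.
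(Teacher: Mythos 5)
Your proposal is correct and follows essentially the same route as the paper's own proof: symmetry and the triangle inequality are inherited by the positive linear combination, and the identity of indiscernibles is obtained by forcing both non-negative components to vanish, using that the EMD equals the $1$-Wasserstein distance (a genuine metric on distributions) to get $\mu_\setx=\mu_\sety$, and the cardinality term to upgrade this to $m_\setx=m_\sety$. Your explicit remark that the Wasserstein metric property on distributions is the rigorous justification for $d_{\mathrm{EMD}}(\setx,\sety)=0\implies\mu_\setx=\mu_\sety$ is precisely the step the paper handles via its $d_{\mathrm{W}_1}$ notation, so no gap remains.
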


Again, we note this approach to compare multisets via the EMD is not a new idea. For example, \cite{kusner2015word} did exactly this to define a distance between text documents.



\section{Distances between sequences}
\label{sec:sequences}

In this section, we turn to the problem of measuring the dissimilarity of sequences, introducing two distances taken from the time series literature. These are typically interpreted as some form of minimum cost transformation, but can also be defined via an optimal relation between the two observations, much like the multiset distances. Again, for each distance we give an intuitive and formal definition before discussing theoretical and computational aspects.

\subsection{Edit distances}

\label{sec:edit_distance}

Here we introduce what we call the \textit{edit distance}, closely related to the Geometric Edit Distance (GED) \citep{Gold2018,Fox2019}. As the name suggests, this can be seen as the minimum cost of transforming one sequence into the other by (i) substituting one entry for another at a cost proportional to their dissimilarity, and (ii) inserting and deleting entries with some pre-specified penalty. It is also closely related to the matching distances (\Cref{sec:matching_distances}), admitting an alternative interpretation via an optimal matching between the two sequences; though the matching must in this case satisfy extra conditions to ensure the preservation of order. It is via this latter interpretation that we provide a formal definition.

Suppose that $\seqx = (\elmtx_1, \dots, \elmtx_N)$ and $\seqy = (\elmty_1, \dots, \elmty_M)$ are the two sequences to be compared. Observe the notion of a matching as stated in \cref{eq:matching} continues to make sense for sequences. However, since entries have an ordering one can further constrain the form of this matching. Namely, following \cite{Gold2018}, a matching $\M$ of $\setx$ and $\sety$ is said to be \textit{monotone} if for any $(\elmtx_{i_1}, \elmty_{j_1}),(\elmtx_{i_2}, \elmty_{j_2}) \in \M$ we have
 $$i_1 < i_2 \iff j_1<j_2$$ 
which ensures that $\M$ preserves the ordering of each sequence, or more informally and visually, when one draws the matching no lines cross (\Cref{fig:ex_monotone_matching}).

\begin{figure}
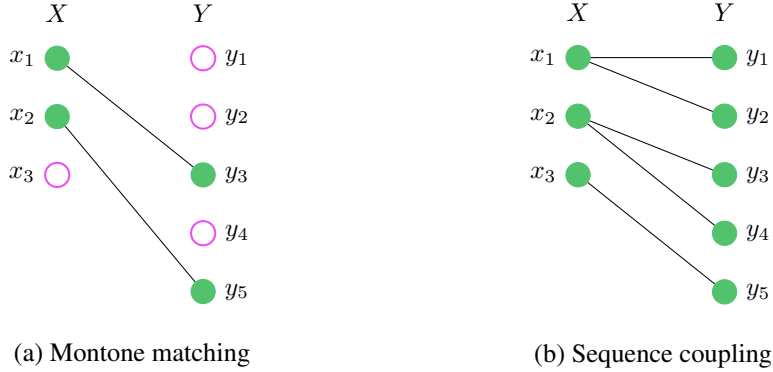

    \centering
    \begin{subfigure}{0.48\linewidth}
        \centering
        \vspace{0.2cm}
        \includegraphics[width=0.5\linewidth, page=3]{img/example_relations.pdf} 
        \vspace{0.2cm}
        \caption{Montone matching}
        \label{fig:ex_monotone_matching}
    \end{subfigure}
    \begin{subfigure}{0.48\linewidth}
        \centering
        \vspace{0.2cm}
        \includegraphics[width=0.5\linewidth, page=4]{img/example_relations.pdf}    
        \vspace{0.2cm}
        \caption{Sequence coupling}
        \label{fig:ex_seq_coupling}
    \end{subfigure}
    \caption{Example relations used to define sequence distances, where (a) shows an example of a monotone matching of the two sequences $\seqx$ and $\seqy$, used to define the edit distances, whilst (b) shows a coupling, used to define the dynamic time warping distances.}
\end{figure}

Following the approach taken for the matching distances (\Cref{sec:matching_distances}), by assigning a cost to each matching, a distance can be defined by minimising the cost over all feasible matchings. Our choice of cost functions for sequences will have similar features as the cost functions for multisets, where we (i) sum pairwise distances of matched entries, and (ii) penalise un-matched entries. Each choice of penalty again defines a different distance, and we consider using the exact same penalties as in \Cref{sec:matching_distances}, defining the edit distance and fixed-penalty edit distance as follows.
 
 \begin{definition}[Edit distance]
 For two sequences $X$ and $Y$, the edit distance is given by the following
 \begin{align}
		d_{\mathrm{E}}(\setx,\sety) := \min_{\M} \left\{  \left(  \sum_{(\elmtx,\elmty) \in \M} d(\elmtx,\elmty) \right) + \sum_{\elmtx \in \M_\setx^c} d(\elmtx, \Lambda) + \sum_{\elmty \in \M_\sety^c} d(\elmty, \Lambda)\right\}
		\label{dist:edit}
	\end{align}
	where $\M$ denotes a monotone matching of $\setx$ and $\sety$, and $\Lambda \in \X$ denotes a reference element of $\X$, typically the null element.
	\label{def:edit}
 \end{definition}
 
 \begin{definition}[Fixed-penalty edit distance]
    For two seqeunces $\setx$ and $\sety$ the fixed-penalty edit distance is given by the following
	\begin{equation}
		\begin{aligned}
			d_{\mathrm{E}, \rho}(\setx,\sety) := \min_{\M} \left\{ \sum_{(\elmtx,\elmty) \in \M} d(\elmtx,\elmty) + \rho( |\setx| + |\sety| - 2|\M| ) \right\}
			\label{dist:fp_edit}
		\end{aligned}
	\end{equation}
	where $\M$ is a monotone matching of $\setx$ and $\sety$, and $\rho>0$ is a parameter controlling the penalty for un-matched entries.
	\label{def:fp_edit}
 \end{definition}

Notice \Cref{def:edit,def:fp_edit} are more-or-less identical to \Cref{def:matching,def:fp_matching}; the difference being monotonicity of matchings. As with the matching distances, both can be shown to satisfy all metric conditions, summarised via the following result (proofs in \Cref{sec:proofs_sequences}).

\begin{proposition}
Both $d_\mathrm{E}$ and $d_{\mathrm{E},\rho}$ satisfies metric conditions \eqref{metric_condition2:indentity}-\eqref{metric_condition4:tri}
\label{thm:edit_distance_metric}
\end{proposition}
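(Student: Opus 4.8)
The plan is to mirror the proof of Proposition~\ref{thm:matching_is_metric}, exploiting the fact that $d_\mathrm{E}$ and $d_{\mathrm{E},\rho}$ are precisely the matching distances of \Cref{def:matching,def:fp_matching} with the feasible set restricted to \emph{monotone} matchings. Non-negativity is immediate, since every term in \cref{dist:edit,dist:fp_edit} is a sum of non-negative ground distances and non-negative penalties. I would then dispatch the three metric conditions in turn, re-using the cost estimates from \Cref{sec:matching_distances} wherever they transfer verbatim and isolating the two places where monotonicity must be checked explicitly.

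For symmetry, I would observe that reversing every pair, $\M \mapsto \M' := \{(\elmty,\elmtx):(\elmtx,\elmty)\in\M\}$, is a bijection between monotone matchings of $(\setx,\sety)$ and of $(\sety,\setx)$: the defining condition $i_1<i_2 \iff j_1<j_2$ is invariant under swapping the two coordinates. Since $d(\elmtx,\elmty)=d(\elmty,\elmtx)$ by condition \eqref{metric_condition3:symmetry} and the two penalty sums merely exchange roles, $C(\M)=C(\M')$, so the two minimisations range over identical sets of values. For identity of indiscernibles, the forward direction uses that when $\setx=\sety$ as sequences the identity pairing $\{(\elmtx_i,\elmtx_i)\}_{i=1}^N$ is monotone and has cost $0$. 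Conversely, a cost-$0$ matching forces every matched pair to satisfy $d(\elmtx,\elmty)=0$, hence $\elmtx=\elmty$, and every penalty term to vanish; as in the multiset case this rules out un-matched entries (whose penalties are strictly positive), so the optimal matching is complete. Here the sequence setting requires one extra observation beyond \Cref{thm:matching_is_metric}: a complete monotone matching of two equal-length sequences is an order-preserving bijection of the index sets, hence the identity, giving $\elmtx_i=\elmty_i$ for all $i$ and therefore $\setx=\sety$ as ordered tuples.

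The triangle inequality is the main obstacle and the only step where monotonicity does genuine work. Given optimal monotone matchings $\M_1$ of $(\setx,\setz)$ and $\M_2$ of $(\setz,\sety)$, I would compose them through their shared $\setz$-entries, setting $(\elmtx_i,\elmty_j)\in\M$ whenever $(\elmtx_i,\elmtz_k)\in\M_1$ and $(\elmtz_k,\elmty_j)\in\M_2$ for a common $\elmtz_k$. The cost bound $C(\M)\le C(\M_1)+C(\M_2)$ is established exactly as in Proposition~\ref{thm:matching_is_metric}: matched-through pairs are controlled by $d(\elmtx_i,\elmty_j)\le d(\elmtx_i,\elmtz_k)+d(\elmtz_k,\elmty_j)$, while the various un-matched cases are absorbed into the penalty terms by the same triangle-inequality and bookkeeping arguments (for $d_{\mathrm{E},\rho}$ using the $\rho$-penalty accounting). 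The genuinely new ingredient is that $\M$ is again monotone: if $(\elmtx_{i_1},\elmty_{j_1})$ and $(\elmtx_{i_2},\elmty_{j_2})$ arise via intermediaries $\elmtz_{k_1},\elmtz_{k_2}$, then monotonicity of $\M_1$ gives $i_1<i_2 \iff k_1<k_2$ and monotonicity of $\M_2$ gives $k_1<k_2 \iff j_1<j_2$, so $i_1<i_2 \iff j_1<j_2$. Thus $\M$ is a feasible competitor in \cref{dist:edit} (respectively \cref{dist:fp_edit}), and $d_\mathrm{E}(\setx,\sety)\le C(\M)\le C(\M_1)+C(\M_2)=d_\mathrm{E}(\setx,\setz)+d_\mathrm{E}(\setz,\sety)$, with the identical chain for $d_{\mathrm{E},\rho}$.

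I expect the only real work beyond \Cref{sec:matching_distances} to be checking that this composition is a well-defined matching (each entry used at most once, respecting multiplicities) and that it is closed under the monotonicity constraint; the symmetry and non-negativity arguments, and the cost-accounting inside the triangle inequality, descend directly from the matching-distance proofs.
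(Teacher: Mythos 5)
Your proposal is correct and follows essentially the same route as the paper's own proof: both treat the edit distances as matching distances restricted to monotone matchings, re-use the cost-accounting arguments from \Cref{thm:matching_is_metric,thm:fp_matching_is_metric}, identify the composed matching through $\setz$ and verify its monotonicity by chaining the two order-preserving equivalences, and for identity of indiscernibles note that the unique complete monotone matching of equal-length sequences is the index-wise identity pairing. No gaps to report.
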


Regarding computation, both distances can be evaluated via dynamic programming at a complexity $\mathcal{O}(|\setx| \cdot |\sety|)$, further details of which can be found in \Cref{sec:computing_edit_distance}.



\subsection{Dynamic time warping}

Though the edit distances come with the theoretical benefits of being metrics, when faced with observations of differing lengths, much like the matching distances, they only really take into account pairwise information of matched entries, effectively ignoring un-matched ones.  This similarly motivates the need for a distance without such a feature. Interestingly, an answer can be found with another distance often seen in the time series literature. Namely, the \textit{dynamic time warping} (DTW) distance \citep{Gold2018}. 

In defining the DTW distance, we will use the notation and vernacular of \cite{Gold2018}. Like the edit distance, the DTW distance is based upon finding a minimum cost relation between the two sequences. The key difference between the two is the form of this relation; where the edit distance considered a monotone matching, the DTW considers a \textit{coupling} of the two sequences (\Cref{fig:ex_seq_coupling}). Note this sequence-based coupling differs from the coupling of distributions used to define the EMD (\Cref{sec:emd}). 
 Given two sequences $\seqx$ and $\seqy$, a coupling is a sequence of pairs
 $\C = (p_1,\dots, p_R)$, where each $p_r = (\elmtx_i,\elmty_j)$ for some with $1 \leq i \leq N$ and $1 \leq j \leq M$. To be a coupling, $\C$ must have the first and last entries paired together, that is $p_1 =(\elmtx_1,\elmty_1)$ and $p_R=(\elmtx_N,\elmty_M)$, and must satisfy the following
\begin{equation*}
    \begin{aligned}
        p_r = (x_i,y_j)  \implies p_{r+1} \in \{(\elmtx_i,\elmty_{j+1}), (\elmtx_{i+1},\elmty_j),(\elmtx_{i+1},\elmty_{j+1})\},
    \end{aligned}
\end{equation*}
that is, given $\elmtx_i$ and $\elmty_j$ are paired, one can either (i) pair the next two entries $\elmtx_{i+1}$ and $\elmty_{j+1}$, or (ii) enact some \textit{warping}, where an entry from either sequence is paired with more than one from the other. For example, in \Cref{fig:ex_seq_coupling} we see warping for the first and second entries of $\seqx$. Notice that by definition every entry of one sequence will always be coupled with at least one entry from the other.

To define a distance, one now assigns each coupling $\C$ a cost by summing the pairwise distances of coupled entries before minimising this cost over all couplings, leading to the following.

\begin{definition}
   Given sequences $\seqx$ and $\seqy$, the dynamic time warping distance is given by the following
   \begin{equation}
        \label{eq:dtw}
        \begin{aligned}
            d_{\mathrm{DTW}}(\seqx, \seqy) := \min_{\C}\left\{\sum_{(\elmtx,\elmty) \in \C} d(\elmtx, \elmty) \right\}
        \end{aligned}
    \end{equation}
    where $\C$ is a coupling. 
    \label{def:dtw}
\end{definition}

It should be noted the DTW distance has certain theoretical shortcomings. Specifically, it violates the identity of indiscernibles (\ref{metric_condition2:indentity}) and the triangle inequality (\ref{metric_condition4:tri}). This we summarise with the following result, a proof of which can be found in \Cref{sec:metric_proofs}.

\begin{proposition}
    \label{thm:dtw_metric_cond_violate}
    The dynamic time warping distance $d_{\mathrm{DTW}}$ satisfies metric condition \eqref{metric_condition3:symmetry} (symmetry), but violates conditions \eqref{metric_condition2:indentity} (identity of indiscernibles) and \eqref{metric_condition4:tri} (triangle inequality).
\end{proposition}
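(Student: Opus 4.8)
The plan is to treat the three conditions independently, proving symmetry directly and refuting the other two by explicit counterexamples over the ground space $\X = \mathbb{R}$ with $d(a,b) = |a-b|$, which is a metric and hence an admissible ground distance. For symmetry \eqref{metric_condition3:symmetry}, I would exhibit a cost-preserving bijection between couplings of $(\seqx,\seqy)$ and couplings of $(\seqy,\seqx)$. Given a coupling $\C = (p_1,\dots,p_R)$ of $\seqx$ and $\seqy$ with $p_r = (\elmtx_{i_r}, \elmty_{j_r})$, define $\C'$ by swapping coordinates, $p_r' = (\elmty_{j_r}, \elmtx_{i_r})$. First I would check $\C'$ is a valid coupling of $\seqy$ and $\seqx$: the boundary conditions $p_1 = (\elmtx_1,\elmty_1)$, $p_R = (\elmtx_N,\elmty_M)$ become the required $p_1' = (\elmty_1,\elmtx_1)$, $p_R' = (\elmty_M,\elmtx_N)$, and each admissible step in $\{(\elmtx_i,\elmty_{j+1}),(\elmtx_{i+1},\elmty_j),(\elmtx_{i+1},\elmty_{j+1})\}$ maps under the coordinate swap to an admissible step for the reversed pair. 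This map is an involution, hence a bijection, and since the ground distance is symmetric the cost $\sum_{(\elmtx,\elmty)\in\C} d(\elmtx,\elmty)$ is preserved; minimising over the two identical cost sets gives $d_{\mathrm{DTW}}(\seqx,\seqy) = d_{\mathrm{DTW}}(\seqy,\seqx)$.

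For the failure of identity of indiscernibles \eqref{metric_condition2:indentity}, I need distinct sequences at distance zero. Taking $\setz = (0,1,2)$ and $\seqy = (0,1,1,2)$, the coupling that pairs the single $1$ of $\setz$ with both copies of $1$ in $\seqy$ (a warping step) matches every coupled pair to an equal value, so its cost is $0$; since $\setz \neq \seqy$, this refutes the condition. More generally, any sequence and a time-warp of it serve as a witness.

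For the triangle inequality \eqref{metric_condition4:tri}, I would reuse the pair above together with $\seqx = (0,2)$, so that $\setz$ acts as an intermediate cheaper to reach from each endpoint than the endpoints are from one another. The governing computations are: $d_{\mathrm{DTW}}(\setz,\seqy) = 0$ (as just shown); $d_{\mathrm{DTW}}(\seqx,\setz) = 1$, since the value $1$ in $\setz$ must be paired with a $0$ or a $2$ at cost $1$ while the $0$ and $2$ pair off for free; and $d_{\mathrm{DTW}}(\seqx,\seqy) = 2$, because the two copies of $1$ in $\seqy$ occupy distinct coupled pairs, each costing $1$ whether matched to $0$ or to $2$. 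Hence $2 = d_{\mathrm{DTW}}(\seqx,\seqy) > d_{\mathrm{DTW}}(\seqx,\setz) + d_{\mathrm{DTW}}(\setz,\seqy) = 1 + 0$, violating \eqref{metric_condition4:tri}.

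The main obstacle is locating this counterexample: many natural, symmetric candidates yield equality in the triangle inequality rather than a strict violation, so the construction must exploit the asymmetry whereby a repeated element of $\seqy$ is absorbed for free under warping against $\setz$ but paid for twice against the shorter $\seqx$. Once the configuration is fixed, verifying the three distances is a finite check over couplings (equivalently, filling the dynamic-programming table), which I would include but regard as routine.
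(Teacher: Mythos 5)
Your proposal is correct and takes essentially the same route as the paper: symmetry is argued directly from symmetry of the ground distance, and the other two conditions are refuted by counterexamples whose mechanism is identical to the paper's, namely that warping lets a repeated element be absorbed at zero cost against one sequence while being paid for twice against another. The only difference is cosmetic: the paper uses minimal abstract witnesses in an arbitrary metric space ($\seqx=(\tilde{x})$, $\seqy=(\tilde{x},\tilde{x})$ for identity, and $(\tilde{x},\tilde{x})$, $(\tilde{x})$, $(\tilde{y})$ for the triangle inequality), whereas you use slightly larger concrete sequences in $(\mathbb{R},|\cdot|)$.
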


 Depending on the desired application, this may or may not be a significant issue. In the former case, it can be helpful to consider whether one can ensure satisfaction of at least one of these conditions. This motivates the following extension, obtained by inclusion of a warping penalty.
 
\begin{definition}
   Given sequences $\seqx$ and $\seqy$, the fixed-penalty DTW distance is given by the following
   \begin{equation}
        \label{eq:fp_dtw}
        \begin{aligned}
            d_{\mathrm{DTW}, \rho}(\seqx, \seqy) := \min_{\C}\left\{\sum_{(\elmtx,\elmty) \in \C} d(\elmtx, \elmty)  + \rho \cdot w(\C)\right\}
        \end{aligned}
    \end{equation}
    where
    $$w(\C) :=  |\{(\elmtx_i,\elmty_j) \in \C \, : \, (\elmtx_i,\elmty_{j+1}) \in \C \text{ or } (\elmtx_{i+1},\elmty_j) \in \C \}|, $$
    quantifies the amount of warping in $\C$, whilst $\rho > 0$ is a parameter controlling the penalisation incurred for each instance of warping.
    \label{def:fp_dtw}
\end{definition}

As a result of introducing this warping penalty the distance now satisfies the identity of indiscernibles \eqref{metric_condition2:indentity}, as summarised in the following result. 

\begin{proposition}
    The fixed penalty dynamic time warping distance $d_{\mathrm{DTW},\rho}$ satisfies metric conditions \eqref{metric_condition2:indentity} (identity of indiscernibles) and \eqref{metric_condition3:symmetry} (symmetry), but violates \eqref{metric_condition4:tri} (triangle inequality).
    \label{thm:fp_dtw_metric_conds}
\end{proposition}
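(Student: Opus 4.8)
The plan is to handle the three metric conditions separately, deriving symmetry and the identity of indiscernibles from structural properties of couplings, and disproving the triangle inequality with an explicit counterexample.

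For symmetry, I would exhibit a cost-preserving bijection between couplings of $(\seqx,\seqy)$ and couplings of $(\seqy,\seqx)$, given by swapping coordinates in each pair, $(\elmtx_i,\elmty_j)\mapsto(\elmty_j,\elmtx_i)$. First I would check this maps valid couplings to valid couplings: the endpoint conditions $p_1=(\elmtx_1,\elmty_1)$, $p_R=(\elmtx_N,\elmty_M)$ and the three admissible transition types are all symmetric under interchanging the roles of the two sequences. Next, the ground-distance sum is unchanged because $d$ is itself symmetric by \eqref{metric_condition3:symmetry}, and the warping count $w(\C)$ is unchanged because the condition ``$(\elmtx_i,\elmty_{j+1})\in\C$ or $(\elmtx_{i+1},\elmty_j)\in\C$'' for a pair of $\C$ maps exactly to the analogous condition for the corresponding pair of the swapped coupling. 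Since the cost of each coupling equals that of its image, the two minima in \eqref{eq:fp_dtw} coincide.

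For the identity of indiscernibles, the key observation is that $w(\C)=0$ holds if and only if every step of $\C$ is diagonal, i.e. $p_{r+1}=(\elmtx_{i+1},\elmty_{j+1})$ whenever $p_r=(\elmtx_i,\elmty_j)$: any horizontal or vertical step produces a pair witnessing $w(\C)\geq 1$, while an all-diagonal path never triggers the warping condition. Given $d_{\mathrm{DTW},\rho}(\seqx,\seqy)=0$, both nonnegative summands in \eqref{eq:fp_dtw} must vanish for an optimal coupling $\C^{*}$; since $\rho>0$ this forces $w(\C^{*})=0$, so $\C^{*}$ is diagonal, whence $N=M$ and $p_r=(\elmtx_r,\elmty_r)$, and the vanishing ground-distance sum gives $d(\elmtx_r,\elmty_r)=0$ for all $r$, i.e. $\elmtx_r=\elmty_r$ by the identity of indiscernibles of the ground metric. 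The converse is immediate: the diagonal coupling of $\seqx$ with itself has zero ground cost and no warping, so $d_{\mathrm{DTW},\rho}(\seqx,\seqx)=0$.

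The remaining, and hardest, task is to show the triangle inequality fails; here I would give a direct counterexample. Take $\X=\mathbb{R}$ with $d(a,b)=|a-b|$ and the sequences $\seqx=(0)$, $\seqz=(0,1)$, $\seqy=(0,1,2)$. Since $\seqx$ has length one, the coupling of $(\seqx,\seqy)$ is forced, giving $d_{\mathrm{DTW},\rho}(\seqx,\seqy)=3+2\rho$; likewise $d_{\mathrm{DTW},\rho}(\seqx,\seqz)=1+\rho$, and a short enumeration of the couplings on the $2\times 3$ grid yields $d_{\mathrm{DTW},\rho}(\seqz,\seqy)=1+\rho$ (attained by pairing $0\mapsto 0$, $1\mapsto 1$, $1\mapsto 2$ with a single warp). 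Consequently, for every $\rho>0$,
\[
d_{\mathrm{DTW},\rho}(\seqx,\seqy)=3+2\rho \;>\; 2+2\rho = d_{\mathrm{DTW},\rho}(\seqx,\seqz)+d_{\mathrm{DTW},\rho}(\seqz,\seqy).
\]
The obstacle is finding a counterexample robust to the penalty: warping-equivalent sequences collapse to distance zero and fail to separate, while equal-length sequences reduce to a pointwise sum of ground metrics that \emph{does} satisfy the inequality. The example above circumvents both by using strictly increasing lengths, so that the penalty contributes $2\rho$ to each side and cancels, leaving the strict ground-distance gap to drive the violation.
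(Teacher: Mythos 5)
Your proof is correct, and its overall structure --- direct verification of symmetry and the identity of indiscernibles, followed by an explicit counterexample to the triangle inequality --- is the same as the paper's. For the first two conditions your argument matches the paper's essentially step for step: the paper likewise deduces from $d_{\mathrm{DTW},\rho}(\seqx,\seqy)=0$ that $w(\C^*)=0$, that the optimal coupling is therefore the diagonal one with $|\seqx|=|\seqy|$, and that the vanishing ground distances then force $\seqx=\seqy$; symmetry is dismissed there as trivial, and your coordinate-swap bijection is just a careful spelling-out of that. The genuine difference is the counterexample. The paper reuses the one from its unpenalized DTW proposition: $\seqx=(\tilde{\elmtx},\tilde{\elmtx})$, $\seqz=(\tilde{\elmtx})$, $\seqy=(\tilde{\elmty})$ with $\tilde{\elmtx}\neq\tilde{\elmty}$ in an \emph{arbitrary} ground metric space, giving $2d(\tilde{\elmtx},\tilde{\elmty})+\rho$ on the left against $d(\tilde{\elmtx},\tilde{\elmty})+\rho$ on the right, so failure holds for every ground space containing two distinct points. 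Your example $\seqx=(0)$, $\seqz=(0,1)$, $\seqy=(0,1,2)$ in $(\mathbb{R},|\cdot|)$ is equally valid --- the costs $3+2\rho$, $1+\rho$, $1+\rho$ all check out, and a single counterexample is all the proposition requires --- though it is tied to one particular ground space rather than showing the failure generically. One caveat: your closing remark that equal-length sequences ``reduce to a pointwise sum of ground metrics'' is not true in general; for instance with $\seqx=(0,0,10)$, $\seqy=(0,10,10)$ and small $\rho$, a warped coupling of cost $2\rho$ beats the diagonal coupling's cost of $10$. Since this is motivational commentary rather than a step in your argument it does not affect correctness, but it is worth correcting.
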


Regarding computation, both DTW distances can be evaluated via dynamic programming at a time complexity of $\mathcal{O}(|\seqx| \cdot |\seqy|)$, with further details found in \Cref{sec:computing_dtw_distance}.

\section{Data analysis: Embedding football matches}
\label{sec:data_analysis}

Returning to the in-play football data, we now show how the distances of \Cref{sec:multisets,sec:sequences} can be used to visualise the structure present therein. In particular, given a choice of distance, we use MDS to obtain a two-dimensional representation of the data, often referred to as an embedding, which can then be plotted. 


With the StatsBomb data processed into paths (\Cref{fig:football_ex}), we are left with a sample
$$\setx^{(1)},\dots,\setx^{(n)}$$
where each $\setx^{(i)}$ represents all pass sequences enacted by a particular team in a single match. Note each will lead to two sequences or multisets (one for each team), and for this data set we have 1096 games, leading to $n=2192$ observations. Depending on whether one would like to take order into account, these can be represented as sequences or multisets, that is
\begin{align*}
    \seqx^{(i)} = \left(x_1^{(i)},\dots,x_{N^{(i)}}^{(i)}\right) && \text{or} &&\seqx^{(i)} = \left\{x_1^{(i)},\dots,x_{N^{(i)}}^{(i)}\right\} 
\end{align*}
where $x_j^{(i)}$ denotes the $j$th path appearing in the $i$th observation, an in \Cref{fig:football_ex}

For a given distance between multisets or sequences, MDS outputs an embedding of data points into $m$-dimensional Euclidean space such that the pairwise distances between data points are best preserved. More specifically, each data point $\seqx^{(i)}$ gets associated a vector $\mathbf{x}_i \in \mathbb{R}^m$ such that $||\mathbf{x}_i-\mathbf{x}_j||_2 \approx d(\seqx^{(i)}, \seqx^{(j)})$ for each pair $(i,j)$, where $||\cdot||_2$ denotes the Euclidean norm. Though the dimension $m$ is general, typically we take $m=2$ so that each $\mathbf{x}_i$ can be plotted in 2-dimensional space, thus providing a visual summary of the structure present in the observed sample (with respect to the chosen distance). 


In this analysis, we compare embeddings obtained in this manner for four different distances, two for sequences and multisets respectively. In particular, we consider the following
\begin{itemize}
  \item Sequence distances 
  \begin{enumerate}
    \item $\bar{d}_{\mathrm{E},\rho}$ : Fixed-penalty edit distance (\Cref{def:edit}), normalised via the Steinhaus transform of \cref{eq:steinhaus};
    \item $d_{\mathrm{DTW}}$ : Dynamic time warping distance (\Cref{def:dtw});
  \end{enumerate}
  \item Multiset distances 
  \begin{enumerate}
      \item $\bar{d}_{\mathrm{M},\rho}$ : Fixed-penalty matching distance (\Cref{def:fp_matching}), normalised via the Steinhaus transform of \cref{eq:steinhaus};
      \item $d_{\mathrm{EMD}}$ : Earth mover's distance (\Cref{def:emd}).
  \end{enumerate}
\end{itemize}

We must also choose our ground distance, which in this case amounts to specifying a distance metric between paths. A natural approach here is to consider finding maximally-sized common substructures, such as subpaths and subsequences (\Cref{fig:path_distances}). Suppose $x=(x_1,\dots,x_n)$ and $y=(y_1,\dots,y_m)$ are two paths, where $x_i, y_j \in \V$ for some set of vertices $\V$, for example, for the football data we have $\V$ denoting the set of player positions. One can now define the longest common subsequence (LCS) distance between $x$ and $y$ as follows
 $$d_{\text{LCS}}(x, y) := n + m - 2\delta_{\text{LCS}}$$
 where $\delta_{\mathrm{LCS}}$ is the maximum length of any subsequence shared by both $x$ and $y$. Intuitively, this can be seen as the number of entries of either path \textit{not} included in this maximum common subsequence (underlined entries in \Cref{fig:path_distances_subseq}), or equivalently the minimum number of entries one must delete and insert to transform one path into other. For example, the paths in \Cref{fig:path_distances_subseq} would have a LCS distance of 5. Further details regarding this distance (and its subpath analogue), including proofs of metric conditions and details regarding computation, can be found in \Cref{sec:path_distances}. 
 
The data analyst is free to choose the penalisation parameter $\rho>0$ in $\bar{d}_{\mathrm{E},\rho}$ and $\bar{d}_{\mathrm{M},\rho}$. Here we also consider normalising the ground distance via the Steinhaus transform eq. \eqref{eq:steinhaus}, leading to a ground distance of $\bar{d}_{\mathrm{LCS}}$, so that by the rationale discussed in \Cref{sec:matching_distances} we take $\rho=0.5$ (since $\bar{d}_{\mathrm{LCS}}(x,y) \leq 1$). 

\begin{figure}
    \centering
    \begin{subfigure}{0.48\linewidth}
        \centering
        \includegraphics[width=0.6\textwidth,page=2]{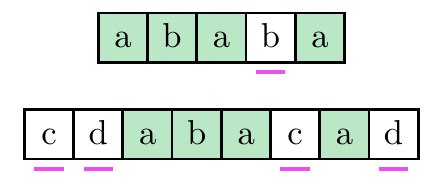}
        \caption{Common subpath}
        \label{fig:path_distances_subpath}
    \end{subfigure}
    \begin{subfigure}{0.48\linewidth}
        \centering
        \includegraphics[width=0.6\textwidth,page=1]{img/distances_paths.pdf}
        \caption{Common subsequence}
        \label{fig:path_distances_subseq}
    \end{subfigure}
    \caption{A comparison of common subsequences and subpaths. In (a) and (b) we see the same pair of paths, with (a) highlighting a common subpath, as indicated by shaded (green) entries, whilst (b) shows a common subsequence. In both cases, these are in fact maximal.}
    \label{fig:path_distances}
\end{figure}

\Cref{fig:football_embeddings} visualises the embeddings obtained for each of these distances. Note, to simplify the comparison the embeddings have been aligned via rotation and reflection, allowable since pairwise Euclidean distances are preserved under such transformations. Here we observe a strong similarity in structure across \Cref{fig:football_embedding_edit,fig:football_embedding_match,fig:football_embedding_emd}, with each showing clear clusters of data points. In contrast, in \Cref{fig:football_embedding_dtw} we see an embedding which is qualitatively different from the others. To highlight what might be driving the structure observed across these embeddings, data points have also been labelled according to the formation a team was playing most often, so that a data point labelled "433" implies the given team spent most of the time in the corresponding match with a formation consisting of 4 defenders, 3 midfielders and 3 attackers. Here one can observe in \Cref{fig:football_embedding_edit,fig:football_embedding_match,fig:football_embedding_emd,fig:football_embedding_dtw} that positions in the respective embedded space appear to be congruent with the formation a team was playing, so that two data points which are near one another therein are likely to be using a similar formation. Moreover, for \Cref{fig:football_embedding_edit,fig:football_embedding_match,fig:football_embedding_emd} there is a strong correspondence between clusters and formations, with the "433" formation being a good example, though some formations, such as "4231" and "442", appear to have more than one cluster. 

\begin{figure}
    \centering
    \begin{subfigure}{0.48\linewidth}
	    \centering
	    \includegraphics[width=0.9\linewidth]{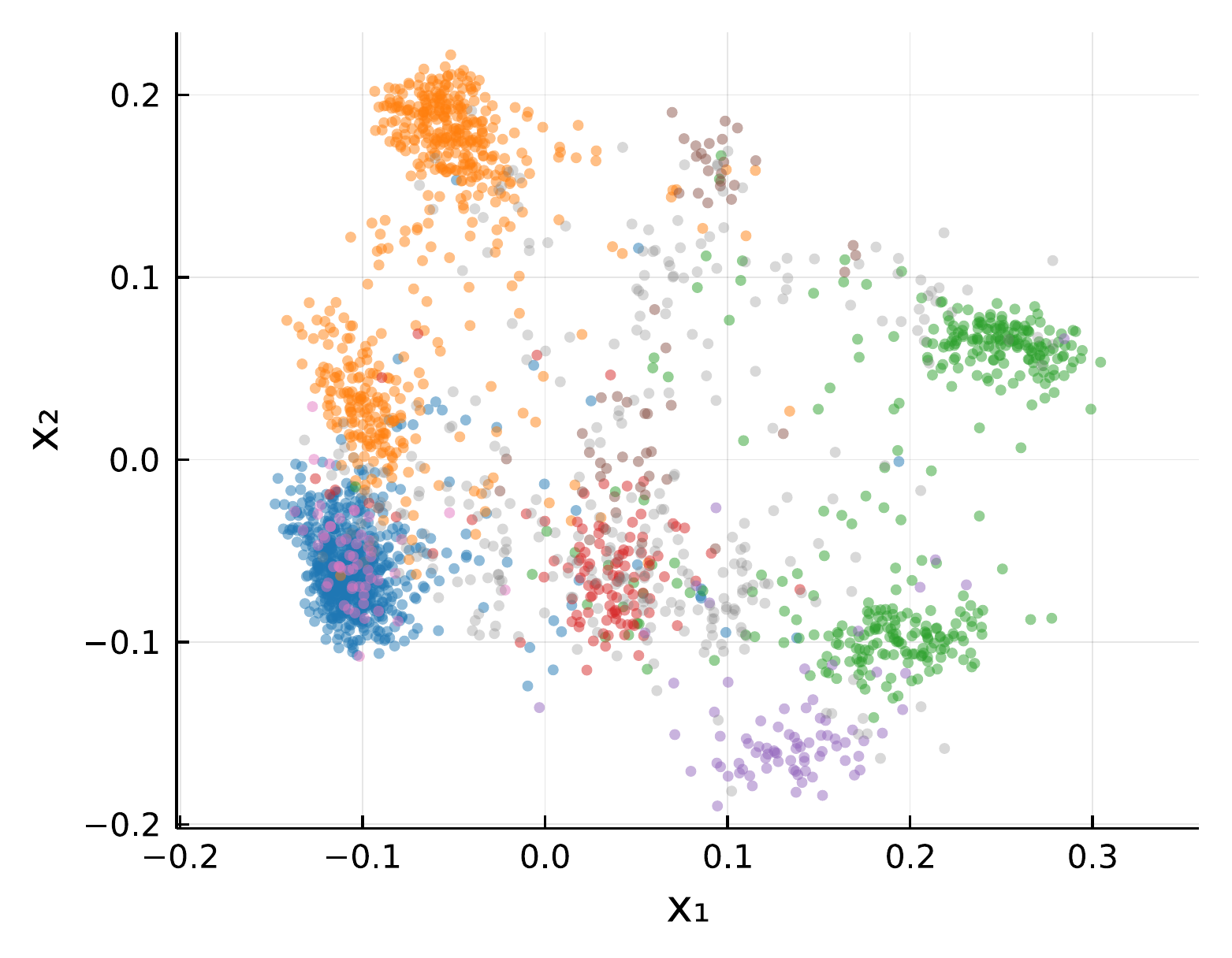}
	    \caption{$\bar{d}_{\mathrm{E},0.5}$}
	    \label{fig:football_embedding_edit}
	\end{subfigure}
	\begin{subfigure}{0.48\linewidth}
	    \centering
	    \includegraphics[width=0.9\linewidth]{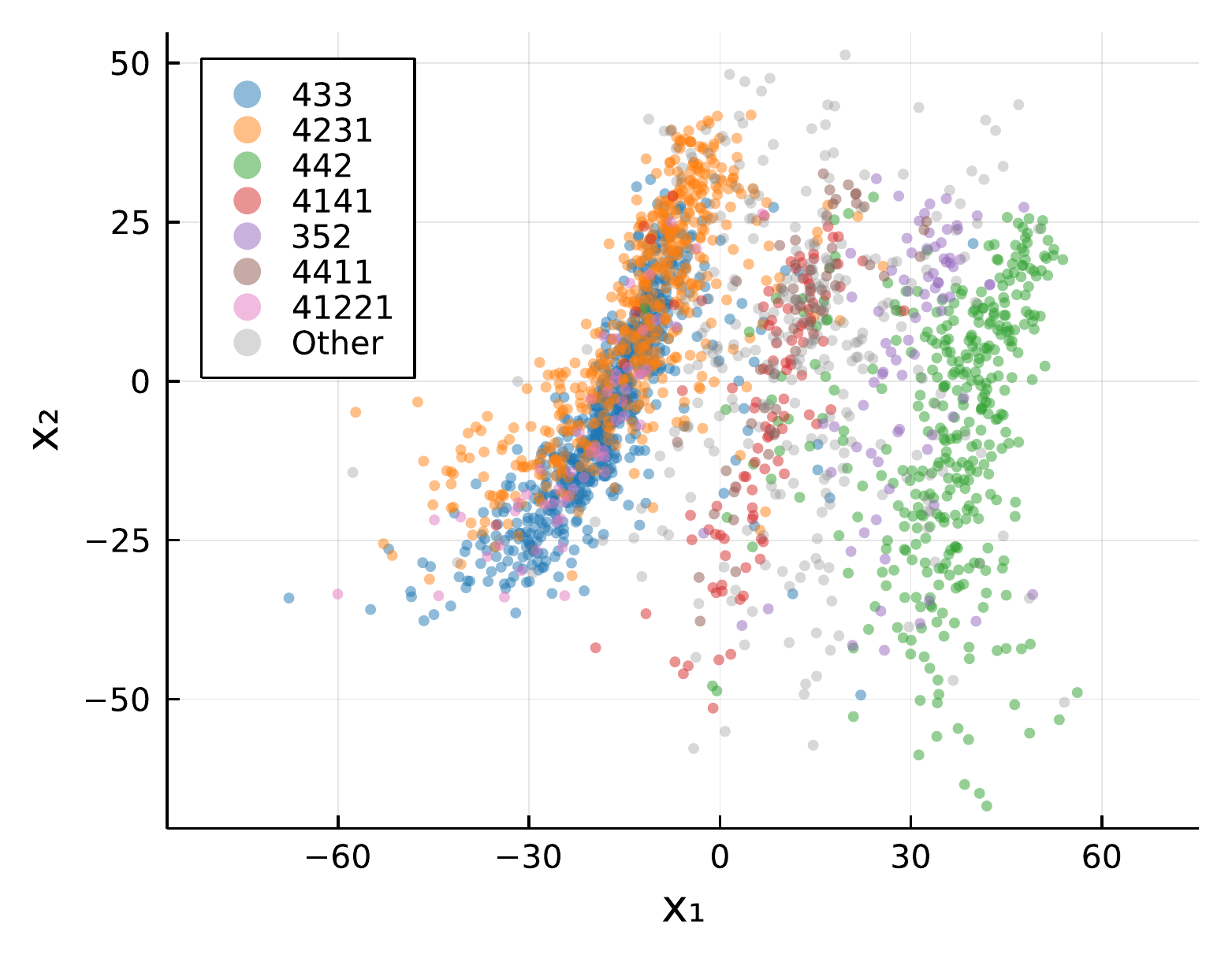}
	    \caption{$d_{\mathrm{DTW}}$}
	    \label{fig:football_embedding_dtw}
	\end{subfigure}
	\begin{subfigure}{0.48\linewidth}
	    \centering
	    \includegraphics[width=0.9\linewidth]{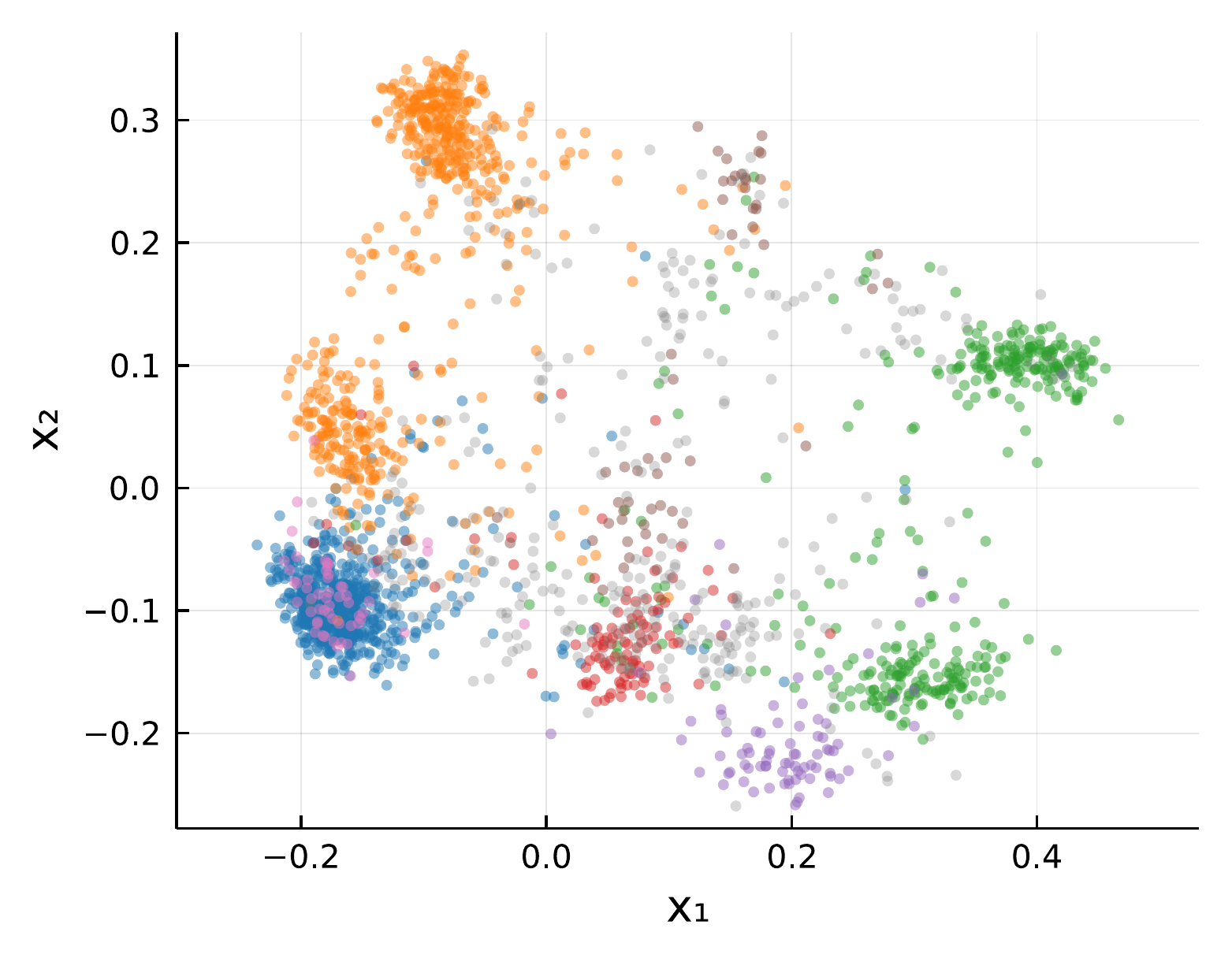}
	    \caption{$\bar{d}_{\mathrm{M},0.5}$}
	    \label{fig:football_embedding_match}
	\end{subfigure}
	\begin{subfigure}{0.48\linewidth}
	    \centering
	    \includegraphics[width=0.9\linewidth]{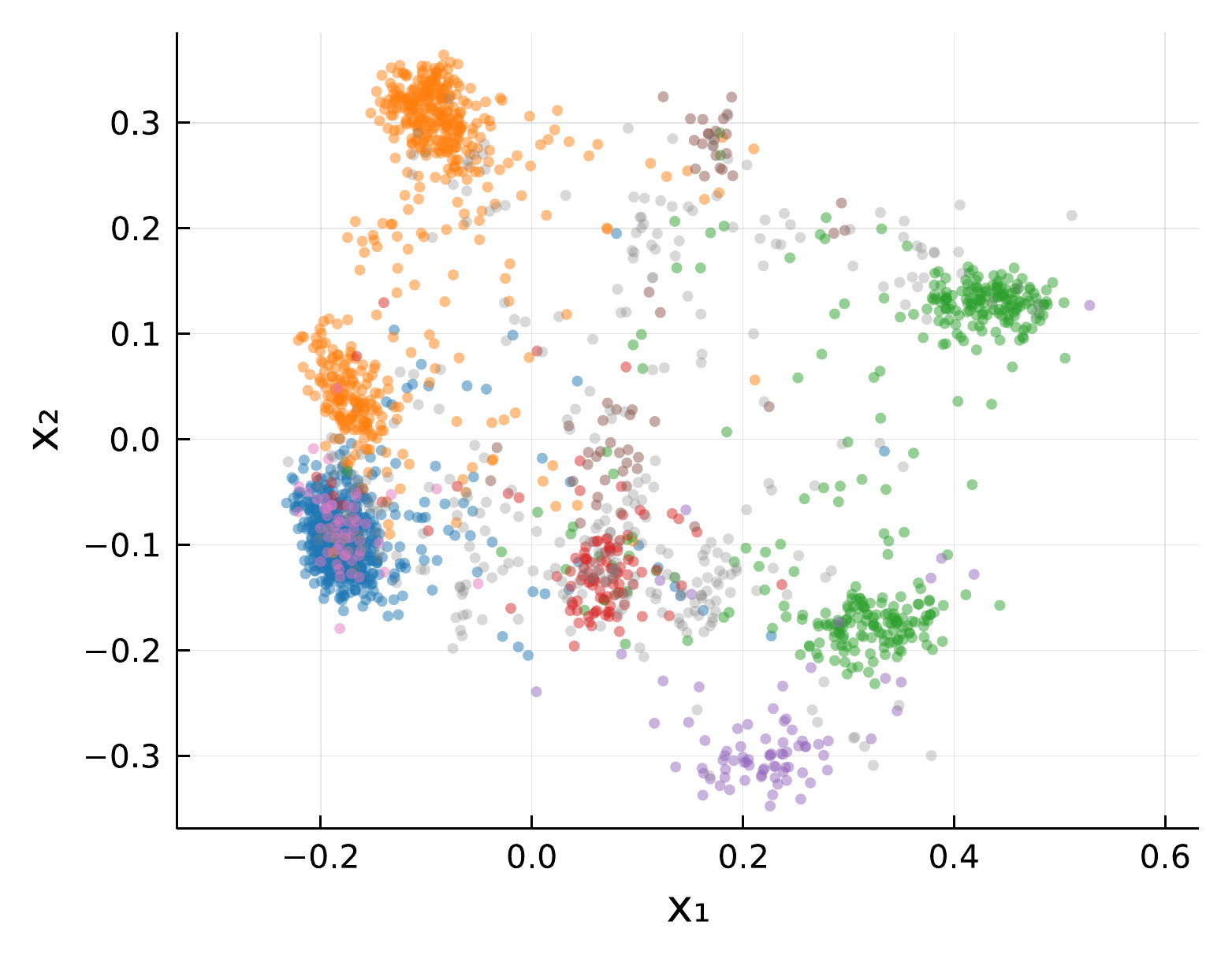}
	   \caption{$d_{\mathrm{EMD}}$}
	   \label{fig:football_embedding_emd}
	\end{subfigure}
	\caption{Embeddings of football matches via different distance measures. In each subplot, data points corresponds to a particular team in a given match, labelled to indicate the formation in which a team spent the most time, where for example "433" denotes a formation consisting of 4 defenders, 3 midfielders and 3 attackers. Here (a) and (b) show embeddings obtained via sequence distances, namely the the (normalised) fixed-penalty edit and DTW distances, whilst (c) and (d) show those obtained via two multiset distances, in particular, the (normalised) fixed-penalty matching and EMD distances, respectively.}
	\label{fig:football_embeddings}
\end{figure}

    

\section{Discussion}

In this paper, we have considered the problem of measuring the dissimilarity of sequences and multisets. Drawing on the wider literature, we have discussed various distances one can invoke, all of which make use of a pre-specified ground distance over the underlying space. For each distance, we have given a high-level intuition, proved theoretical properties and outlined how they can be computed. For certain distances, such as the EMD and DTW distance, we also propose extensions which allow the distances to satisfy additional metric conditions. Finally, we have illustrated how these distances can be used in practice through a novel analysis of an in-play football data set shared by StatsBomb, where we are able to uncover the squad formation based purely on passes between players.

Regarding future work, one could firstly consider whether other distances could be (or have been) defined. For example, can we consider an analogue of the EMD distance for sequences? One could also study through simulation what features each distance can take into account, providing guidance on which distance to use for a given problem or question of interest. There is also scope to expand on the data analysis of this work. For example, one could consider measuring quantitatively the relationship between formation and distance in the StatsBomb data, similar to the analysis of \cite{Donnat2018b}. Finally, note that often one can first aggregate observations to some other form before measuring their distance. As such, it is natural to ask whether one gains anything by using the distances discussed here? For example, the football data could be collapsed to a vector of counts over player positions, or perhaps a multigraph, encoding the number of passes observed between each pair of positions. Distances between these aggregates are likely to be much faster to compute than those between a sequence or multiset of paths, however, there is going to be a loss of information incurred through aggregation. It would be interesting to explore, either through a simulation study or real-data analysis, whether one gains something by taking into account this extra information via use of a multiset, or sequence, distance instead of an aggregate-based distance.


\bibliographystyle{imsart-nameyear} 
\bibliography{library}

\newpage
\begin{appendix}

\begin{table}
    \centering
    \begin{tabular}{|l|l|}
    \hline
        \textbf{Position} & \textbf{Abbreviation} \\ \hline
        Center Back & CB \\ \hline
        Right Defensive Midfield & RDM \\ \hline
        Right Wing Back & RWB \\ \hline
        Right Center Back & RCB \\ \hline
        Goal Keeper & GK \\ \hline
        Left Center Back & LCB \\ \hline
        Left Wing Back & LWB \\ \hline
        Right Attacking Midfield & RAM \\ \hline
        Left Attacking Midfield & LAM \\ \hline
        Center Forward & CF \\ \hline
        Left Defensive Midfield & LDM \\ \hline
        Right Back & RB \\ \hline
        Right Wing & RW \\ \hline
        Left Wing & LW \\ \hline
        Center Defensive Midfield & CDM \\ \hline
        Right Center Midfield & RCM \\ \hline
        Left Back & LB \\ \hline
        Left Center Midfield & LCM \\ \hline
        Center Attacking Midfield & CAM \\ \hline
    \end{tabular}
    \vspace{0.2cm}
    \caption{Abbreviations for player positions used in \Cref{fig:football_ex}.}
    \label{table:position_abbrv}
\end{table}

\section{Distance computation}

\subsection{Matching distances}

\label{sec:computing_matching_distances}

As mentioned in \Cref{sec:matching_distances}, we consider evaluating both matching distances via the Hungarian algorithm \citep{Kuhn1955}, a specialised algorithm proposed to solve the so-called assignment problem. Suppose that one has two sets $A=\{a_1,\dots,a_n\}$ and $B=\{b_1,\dots,b_n\}$, both of size $n$, then assignment problem considers pairing elements of set $A$ with those of set $B$ in an `optimal' way, where the objective is defined by assigning a cost to each possible pairing. Note the labelling of elements here is arbitrary but will serve a purpose in what follows, allowing us to index set elements. 

A pairing of set elements can be encoded via a permutation $\sigma \in S_n$, where $S_n$ denotes the set of all permutation on $n$ symbols, with $\sigma(i)=j$ implying that $a_i \in A$ has been paired with $b_j \in B$. We summarise the cost of pairings in the $n \times n$ matrix $C$, where $C_{ij}$ denotes the cost incurred when $a_i \in A$ is paired with $b_j \in B$. Now, the assignment problem can be stated formally as finding the minimum cost permutation $\sigma$, that is
\begin{equation*}
    \begin{aligned}
    \min_{\sigma \in S_n} \sum_{i=1}^n C_{i, \sigma(i)}
    \end{aligned}
\end{equation*}
the solution of which may not be unique. Observe that though $A$ and $B$ are often assumed to be sets, this formulation works equally well if they are mulitsets, as we will assume them to be.

Given any square $n \times n$ matrix $C$, the Hungarian algorithm will return a permutation $\sigma$ which minimises the cost above. Towards evaluating $d_{\mathrm{M}}$ and $d_{\mathrm{M},\rho}$, we consider constructing a matrix $C$ such that the optimal solution found via the Hungarian algorithm coincides with that required in their respective definitions. Note that due to \Cref{thm:matching_complete,thm:fp_matching_complete}, there are situations in which we need only optimise over complete matchings. In these situations, we can minimise the size of optimisation problem to be solved via the Hungarian algorithm, or equivalently, minimise the size of $C$, as we outline in \Cref{sec:comp_matching_complete}. Alternatively, if one would like to optimise over all matchings, a slightly bigger $C$ can be specified, as detailed in \Cref{sec:comp_matching_all}. 

Given two multisets $\setx$ and $\sety$ the choice of which approach to take for each matching distance can be summarised as follows
\begin{itemize}
    \item For $d_{\mathrm{M}}(\setx,\sety)$, optimise over complete matchings (\Cref{sec:comp_matching_complete})
    \item For $d_{\mathrm{M},\rho}(\setx,\sety)$, the approach depends on whether $\rho$ satisfies the condition of \Cref{thm:fp_matching_complete}. In particular, letting $K = \max_{\elmtx \in \setx, \elmty \in \sety}d(x,y)$ we have
    \begin{itemize}
        \item If $\rho \geq K/2$, optimise over complete matchings (\Cref{sec:comp_matching_complete})
        \item If $\rho < K/2$, optimise over all matchings (\Cref{sec:comp_matching_all}).
    \end{itemize}
\end{itemize}


\subsubsection{Optimising over complete matchings}

\label{sec:comp_matching_complete}
For multisets $\setx$ and $\sety$, suppose without loss of generality we have $|\setx| \leq |\sety|$. We construct the $|\sety| \times |\sety|$ matrix $C$ as follows
\begin{equation}
    \begin{aligned}
    C_{ij} = \begin{cases}
    d(x_i,y_j) & \text{if } i \leq |\setx| \\
    \lambda(y_i) & \text{if } i > |\setx|
    \end{cases}
    \end{aligned}
    \label{eq:cost_matrix_matching_general}
\end{equation}
where $\lambda(\elmty)$ denotes the penalty incurred when $\elmty \in \sety$ is not included in the matching, where for the matching distance $d_{\mathrm{M}}$ we let $\lambda(y) = d(y,\Lambda)$, whilst for the fixed-penalty matching distance $d_{\mathrm{M},\rho}$ we let $\lambda(y) = \rho$. Here, to account for the fact that $\setx$ and $\sety$ may be of different sizes, we effectively
introduce $|\sety| - |\setx|$ dummy elements which entries from $\sety$ can be paired with, where being paired with a dummy element is equivalent to being un-matched. Now, each $\sigma \in S_{|\sety|}$ encodes a matching $\M$ of $\setx$ and $\sety$ given by the following
$$\M = \{(x_i,y_{\sigma(i)}) \, : \, 1 \leq i \leq |\setx|\},$$
which includes all elements of $\setx$ and is thus complete. Moreover, according to $C$ this has the following cost
\begin{equation*}
    \begin{aligned}
    C(\M) &= \sum_{i=1}^{|\sety|} C_{i, \sigma(i)}\\
    &= \sum_{(x,y) \in \M} d(x,y) + \sum_{y \in \M_\sety^c} \lambda(y)\\
    &= \sum_{(x,y) \in \M} d(x,y) + \sum_{x \in \M_\setx^c} \lambda(x) + \sum_{y \in \M_\sety^c} \lambda(y) 
    \end{aligned}
\end{equation*}
where the last line follows since $\M_\setx^c=\emptyset$ by virtue of $\M$ being complete. As such, in optimising over the permutations $\sigma$ via the Hungarian algorithm we are effectively enacting the following optimisation
\begin{equation*}
    \begin{aligned}
    \min_{\M} \left\{\sum_{(x,y) \in \M} d(x,y) + \sum_{x \in \M_\setx^c} \lambda(x) + \sum_{y \in \M_\sety^c} \lambda(y) \right\}
    \end{aligned}
\end{equation*}
where $\M$ is a \textit{complete} matching. Observe that, with the respective $\lambda(\cdot)$ substituted in, this is almost identical to optimisation of \Cref{def:matching,def:fp_matching}, the only difference being that this optimises over only complete matchings. 

\subsubsection{Optimising over all matchings}

\label{sec:comp_matching_all}

For multisets $\setx$ and $\sety$, in this case we construct the $(|\setx|+|\sety|) \times (|\setx|+|\sety|)$ matrix $C$ as follows
\begin{equation*}
    \begin{aligned}
    C_{ij} = \begin{cases}
    d(x_i,y_j) & \text{if } i \leq |\setx| \text{ and } j \leq |\sety| \\
    \lambda(y_i) & \text{if } i > |\setx| \text{ and } j \leq |\sety|\\
    \lambda(x_i) & \text{if } i \leq |\setx| \text{ and } j > |\sety| \\
    0 & \text{if } i > |\setx| \text{ and } j > |\sety|
    \end{cases}
    \end{aligned}
    \label{eq:cost_matrix_matching_general}
\end{equation*}
where again $\lambda(\cdot)$ denotes the penalisation for un-matched entries as stated in \Cref{sec:comp_matching_complete}. In this case, we introduce dummy elements to \textit{both} sets, namely $|\sety|$ in $\setx$ and $|\setx|$ in $\sety$, implying now elements from either set can be un-matched by being paired with a dummy element. Each $\sigma \in S_{|\setx|+|\sety|}$ encodes a matching $\M$ of $\setx$ and $\sety$ given by the following
$$\M = \{(x_i,y_{\sigma(i)}) \, : \, 1 \leq i \leq |\setx|, \, \sigma(i) \leq |\sety| \},$$
where we must include the extra constraint $\sigma(i) \leq |\sety|$ since in this case elements of $\setx$ can be paired with dummy elements, that is, be un-matched. By the definition of $C$, this implies the following cost
\begin{equation*}
    \begin{aligned}
    C(\M) &= \sum_{i = 1}^{|\setx| + |\sety|} C_{i,\sigma(i)} \\
    &= \sum_{(x,y) \in \M} d(x,y) + \sum_{x \in \M_\setx^c} \lambda(x) + \sum_{y \in \M_\sety^c} \lambda(y) 
    \end{aligned}
\end{equation*}
where $\M$ is a matching (not necessarily complete). Again, a comparison with \Cref{def:matching,def:fp_matching} reveals the similarity between the minimisation problems therein and that of the assignment problem parameterised by $C$. Moreover, in this case since we are optimising over \textit{all} matching they are indeed equivalent.
\subsection{Edit distances}

\label{sec:computing_edit_distance}

Both edit distances are special cases of the so-called string edit distance of \cite{Wagner1974}, and as such the dynamic programming algorithm proposed therein can be invoked to compute them. Consider first the edit distance (\Cref{def:edit}). Supposing that $\seqx$ and $\seqy$ are the sequences to be compared, introducing the notation $\seqx_{k:l} = (\elmtx_k, \dots, \elmtx_l)$, the approach is to evaluate $d_{\mathrm{E}}(\seqx_{1:i}, \seqy_{1:j})$ incrementally until $i=|\seqx|$ and $j=|\seqy|$ using the following recursive result
\begin{equation*}
    \begin{aligned}
    d_{\mathrm{E}}(\seqx_{1:i},\seqy_{1:j}) = \min\begin{cases}
    d_{\mathrm{E}}(\seqx_{1:(i-1)}, \seqy_{1:j}) + d(x_i,\Lambda)\\
    d_{\mathrm{E}}(\seqx_{1:i}, \seqy_{1:(j-1)}) + d(y_j,\Lambda) \\ 
    d_{\mathrm{E}}(\seqx_{1:(i-1)}, \seqy_{1:(j-1)}) + d(x_i,y_j)
    \end{cases}
    \end{aligned}
\end{equation*}
where here we are essentially comparing three possibilities (i) the $i$th entry of $\seqx$ is un-matched, (ii) the $j$th entry of $\seqy$ is un-matched, and (iii) the $i$th entry of $\seqx$ is matched with the $j$th entry of $\seqy$. Introducing the notation $C_{ij} = d_{\mathrm{E}}(\seqx_{1:(i-1)},\seqy_{1:(j-1)})$ this is equivalent to filling up the matrix $C$ either row-by-row or column-by-column via the following recursive formula 
\begin{equation*}
    \begin{aligned}
    C_{(i+1)(j+1)} = \min\begin{cases}
    C_{i(j+1)} + d(x_i,\Lambda)\\
    C_{(i+1)j} + d(y_j,\Lambda) \\ 
    C_{ij} + d(x_i,y_j)
    \end{cases}
    \end{aligned}
\end{equation*}
where the final entry corresponds to the desired distance, that is $d_{\mathrm{E}}(\seqx,\seqy)=C_{(|\seqx|+1)(|\seqy|+1)}$.

Note we add one to all indices here since the first column and row of $C$ function as boundary conditions. These correspond to when $i=1$ or $j=1$, that is, when we have values such as $\seqx_{1:0}$ or $\seqy_{1:0}$ appearing in the recursive definition. Towards specifying these values, we see $\seqx_{1:0}$ as an empty sequence, so that when comparing $\seqx_{1:0}$ to $\seqy_{1:j}$ each entry of the latter will be un-matched and hence penalised. This implies
$$C_{1(j+1)}=d_{\mathrm{E}}(\seqx_{1:0},\seqy_{1:j}) = \sum_{k=1}^j d(y_k,\Lambda),$$
for $j=1,\dots,|\seqy|$, whilst by equivalent reasoning we have
$$C_{(i+1)1}=d_{\mathrm{E}}(\seqx_{1:i},\seqy_{1:0}) = \sum_{k=1}^i d(x_k,\Lambda),$$
for $i=1,\dots,|\seqx|$. Finally, we let 
$$C_{11}= d_{\mathrm{E}}(\seqx_{1:0},\seqy_{1:0}) = 0$$
since $\seqx_{1:0}=\seqy_{1:0}$ by virtue of both being the empty sequence.

\Cref{alg:edit_distance} outlines pseudocode for the resulting algorithm to evluate $d_\mathrm{E}$, using this matrix notation. Furthermore, observe that when updating a row (or column) of $C$ one only needs to know the previous row (or column). As such, one need only store the current and previous row, leading to an algorithm which uses less memory and is typically faster. Pseudocode of this light-memory alternative can also be seen in \Cref{alg:edit_distance_light}.

Turning now to the fixed-penalty edit distance (\Cref{def:fp_edit}), the approach more-or-less the same, up to a slight change of the recursive formula. In particular, in this case we have
\begin{equation*}
    \begin{aligned}
    d_{\mathrm{E},\rho}(\seqx_{1:i},\seqy_{1:j}) = \min\begin{cases}
    d_{\mathrm{E},\rho}(\seqx_{1:(i-1)}, \seqy_{1:j}) + \rho \\ 
    d_{\mathrm{E},\rho}(\seqx_{1:i}, \seqy_{1:(j-1)}) + \rho\\
    d_{\mathrm{E},\rho}(\seqx_{1:(i-1)}, \seqy_{1:(j-1)}) + d(x_i,y_j)
    \end{cases}
    \end{aligned}
\end{equation*}
which leads to an analogous definition of matrix $C$, with its corresponding recursive formula given by  
\begin{equation*}
    \begin{aligned}
    C_{(i+1)(j+1)} = \min\begin{cases}
    C_{i(j+1)} +\rho\\
    C_{(i+1)j} + \rho \\ 
    C_{ij} + d(x_i,y_j)
    \end{cases}
    \end{aligned}
\end{equation*}
with $C_{(|\seqx|+1)(|\seqy|+1)}$ again corresponding to the desired distance. Moreover, in this case we have 
\begin{align*}
    C_{(i+1)1}&=i\rho \quad(\text{for }i=0,\dots,|\seqx|)\\
     C_{1(j+1)}&=j\rho \quad(\text{for }j=0,\dots,|\seqy|).
\end{align*}
Pseudocode of the resulting of the resulting algorithm to evaluate $d_{\mathrm{E},\rho}(\seqx,\seqy)$ can be seen in \Cref{alg:fp_edit_distance}, with the light-memory analogue outlined in \Cref{alg:fp_edit_distance_light}.

\subsection{Dynamic time warping distances}

\label{sec:computing_dtw_distance}

Similar to the edit distances, the DTW distances can be evaluated via dynamic programming. In fact, the algorithms are almost identical, differing only in the recursive formulae used. 

First, we outline how to compute $d_{\mathrm{DTW}}(\seqx,\seqy)$ for given sequences $\seqx$ and $\seqy$, following the implementation of \cite{Gold2018}, Sec. 3. Using the notation $\seqx_{k:l} = (\elmtx_k, \dots, \elmtx_l)$, one evaluates $d_{\mathrm{DTW}}(\seqx_{1:i}, \seqy_{1:j})$ incrementally until $i=|\seqx|$ and $j=|\seqy|$ via the following recursive result
\begin{equation*}
    \begin{aligned}
    d_{\mathrm{DTW}}(\seqx_{1:i},\seqy_{1:j}) = d(\elmtx_i,\elmty_j) + \min\begin{cases}
    d_{\mathrm{DTW}}(\seqx_{1:(i-1)}, \seqy_{1:j})\\
    d_{\mathrm{DTW}}(\seqx_{1:i}, \seqy_{1:(j-1)}) \\
    d_{\mathrm{DTW}}(\seqx_{1:(i-1)}, \seqy_{1:(j-1)})
    \end{cases}
    \end{aligned}
\end{equation*}
where here one is essentially comparing three possibilities (i) warping on the $j$th entry of $\seqy$, that is, $\elmty_j$ being paired with more than one element of $\seqx$, (ii) warping on the $i$th entry of $\seqx$, and (iii) no warping, with $\elmtx_i$ and $\elmty_j$ being paired \textit{only} with each other. Note the $d(\elmtx_i,\elmty_j)$ term comes out front of the minimisation since by definition $\elmtx_i$ and $\elmty_j$ must be paired.

Introducing the notation $C_{ij}= d_{\mathrm{DTW}}(\seqx_{1:(i-1)},\seqy_{1:(j-1)})$, the incremental computation can be seen as filling-up the matrix $C$ either row-by-row or column-by-column via the following recursive formula
\begin{equation*}
    \begin{aligned}
    C_{(i+1)(j+1)} = d(\elmtx_i,\elmty_j) + \min \begin{cases}
    C_{i(j+1)} \\
    C_{(i+1)j} \\
    C_{ij}
    \end{cases}
    \end{aligned}
\end{equation*}
with $d_{\mathrm{DTW}}(\seqx,\seqy)= C_{(|\seqx|+1)(|\seqy|+1)}$. As with evaluating the edit distances (\Cref{sec:computing_edit_distance}), we must also pre-specify the first row and column on $C$. Here we again assume $C_{11}=0$ whilst 
\begin{align*}
    C_{(i+1)1} = \infty \quad (\text{for }i=1,\dots,|\seqx|) && C_{1(j+1)} = \infty \quad (\text{for }j=1,\dots,|\seqy|).
\end{align*}
To see why this is the case, consider the second column entries, that is$$C_{i2} = d_{\mathrm{DTW}}(\seqx_{1:(i-1)},\seqy_{1:1}).$$
Observe that since $\seqy_{1:1}=(\elmty_1)$ is a sequence with a single entry, the only valid coupling here is where $\elmty_1$ is paired with every entry of $\seqx_{1:(i-1)}$. By opting for this choice of boundary values for $C$ one essentially ensures this occurs via the recursive formula. In particular, if one considers filling the second column of $C$, one has
\begin{equation*}
    \begin{aligned}
    C_{22} = d(\elmtx_1,\elmty_1) + \min \begin{cases}
    \infty \\
    \infty \\
    0
    \end{cases}
    \end{aligned}
\end{equation*}
so we choose the third option, paring the first two entries with no warping, whilst for $i>2$ we have
\begin{equation*}
    \begin{aligned}
    C_{i2} = d(\elmtx_1,\elmty_1) + \min \begin{cases}
    C_{(i-1)2} \\
    \infty \\
    \infty
    \end{cases}
    \end{aligned}
\end{equation*}
where here we choose the first option, which corresponds to warping on the 1st entry of $\seqx$, that is, $\elmty_1$ being paired with more than one element of $\seqx$. The same reasoning can be used to justify the intial values of the first row by considering filling the second row of $C$, wherein essentially the roles of $\seqx$ and $\seqy$ are swapped. 

\Cref{alg:dtw_distance} outlines the algorithm which fills the matrix $C$ via this recursive formula to obtain the desired distance. As with the edit distances, this procedure only requires knowledge of the previous and current row, and hence a lighter memory alternative can be considered, as detailed in \Cref{alg:dtw_distance_light}.

Turning now to the fixed-penalty DTW distance $d_{\mathrm{DTW},\rho}$, the approach is almost identical, albeit with a slight change in the recursive formula. Namely, to evaluate $d_{\mathrm{DTW},\rho}(\seqx_{1:i},\seqy_{1:j})$ we use the following 
\begin{equation*}
    \begin{aligned}
    d_{\mathrm{DTW},\rho}(\seqx_{1:i},\seqy_{1:j}) = d(\elmtx_i,\elmty_j) + \min\begin{cases}
    d_{\mathrm{DTW},\rho}(\seqx_{1:(i-1)}, \seqy_{1:j}) + \rho\\
    d_{\mathrm{DTW},\rho}(\seqx_{1:i}, \seqy_{1:(j-1)})+ \rho \\
    d_{\mathrm{DTW},\rho}(\seqx_{1:(i-1)}, \seqy_{1:(j-1)})
    \end{cases}
    \end{aligned}
\end{equation*}
where we have simply included the $\rho$ term in the cases corresponding to warping. This leads to analogous algorithms to compute $d_{\mathrm{DTW},\rho}(\seqx,\seqy)$, namely \Cref{alg:fp_dtw_distance}, which does so by incrementally filling a matrix, and the light-memory approach of \Cref{alg:fp_dtw_distance_light}, which stores only the current and previous row.




\section{Proofs}

\label{sec:metric_proofs}

\subsection{Multiset distances}

\begin{proof}[Proof of \Cref{thm:matching_is_metric} (Part 1)]
    \label{proof:matching_is_metric}
	To aide this exposition, we write $d_{\mathrm{M}}$ in terms of its associate cost function as follows
	\begin{equation*}
		\begin{aligned}
			d_{\mathrm{M}}(\setx,\sety) = \min_{\M} C(\M)
			\label{dist:mathcing}
		\end{aligned}
	\end{equation*}
	where
	\begin{equation*}
	    \begin{aligned}
	        C(\M) =  \left(  \sum_{(\elmtx,\elmty) \in \M} d(\elmtx,\elmty) \right) + \sum_{\elmtx \in \M_\setx^c} d(\elmtx, \Lambda) + \sum_{\elmty \in \M_\sety^c} d(\elmty, \Lambda),
	    \end{aligned}
	\end{equation*}
	denoting the cost of the matching $\M$. 

	We first show \eqref{metric_condition2:indentity} holds. Assuming $X = Y$, then one can construct a matching $\M_1$ by pairing equivalent elements of $X$ and $Y$, leading to the following upper bound on the distance 
	\begin{equation}
		\begin{aligned}
			d_{\mathrm{M}}(X,Y) &\leq C(\M_1) \\
			&= \sum_{(x,y) \in \M_1} d(x,y) + 0 + 0 \\
			&= 0
		\end{aligned}
	\end{equation}
	where the second line follow since $\M_1$ includes all elements of $X$ and $Y$ and thus no penalisation will occur, whilst the final line follows since $\M_1$ matches equivalent elements and hence (using the fact $d(\cdot,\cdot)$ is a metric) all pairwise distances will be zero. Note also, since $d_{\mathrm{M}}$ will be a sum of positive values (since $d(\cdot,\cdot)$ is a metric), we also have $d_{\mathrm{M}}\geq 0$. Together this implies $d_{\mathrm{M}}(X,Y)=0$. 
	
	Conversely, assume that $d_{\mathrm{M}}(X,Y)=0$. This implies that both the matching cost and penalisation terms must be zero. Since we implicitly assume no element is equal to the null element $\Lambda$, then the penalty term being zero implies that all elements of $X$ and $Y$ must be included in the matching. Therefore, we have a complete matching with zero cost. Specifically, supposing $\M^*$ is the optimal matching, we have 
	\begin{equation*}
	    \begin{aligned}
	        d_{\mathrm{M}}(\setx,\sety) &=\sum_{(x,y) \in \M^*}d(x,y) \\
	        &= 0.
	    \end{aligned}
	\end{equation*}
	Since $d$ is a metric it is non-negative, and hence
	$$ d(x,y) = 0, \quad \forall \, (x,y) \in \M^*,$$
	which, again using the fact $d$ is a metric, implies
	$$x = y, \quad, \forall \, (x,y) \in \M^*$$
	and hence $X=Y$, confirming satisfaction of \eqref{metric_condition2:indentity}.
	
	The condition \eqref{metric_condition3:symmetry} follows trivially from the symmetry of $d(\cdot,\cdot)$ and the penalisation term. 
	
	The final condition to show is the triangle inequality \eqref{metric_condition4:tri}. Assuming that $\setx=\{\elmtx_1,\dots,\elmtx_n\}$, $\sety=\{\elmty_1,\dots,\elmty_m\}$ and $Z = \{\elmtz_1,\dots,\elmtz_k\}$ are three multisets, we seek to show that 
	$$d_\mathrm{M}(X,Y) \leq d_\mathrm{M}(X,Z) + d_\mathrm{M}(Z,Y).$$
    Now, let $\M^*_1$ and $\M^*_2$ denote optimal matchings for $d_{\mathrm{M}}(X,Z)$ and $d_{\mathrm{M}}(Z,Y)$ respectively, so that 
    \begin{align*}
        d_{\mathrm{M}}(X,Z)=C(\M^*_1) && d_{\mathrm{M}}(Z,Y)=C(\M^*_2)
    \end{align*}
    writing these as
	\begin{equation*}
		\begin{aligned}
			\M_1^* = \{(x_{i_1}, z_{j_1}), \dots, (x_{i_r}, z_{j_r})\} && && \M_2^* = \{(z_{l_1}, y_{k_1}), \dots, (z_{l_s}, y_{k_s})\}.
		\end{aligned}
	\end{equation*}
	Now, $\M^*_1$ and $\M_2^*$ induce a matching $\M_3$ of $X$ and $Y$ as follows
	\begin{align}
		\M_3 = \{(x_i, y_j) \, : \, (x_i, z_k) \in \M^*_1 \text{ and } (z_k, y_j) \in \M^*_2 \text{ for some } z_k \in Z\}
		\label{eq:induced_matching}
	\end{align}
	that is, we pair elements of $X$ and $Y$ if they were paired to the same elements of $Z$. Notice by definition we have 
	\begin{equation*}
	    \begin{aligned}
	        d_{\mathrm{M}}(\seqx, \seqy) \leq C(\M_3),
	    \end{aligned}
	\end{equation*}
	consequently the triangle inequality will follow if we can show the following holds
	\begin{equation}
		\begin{aligned}
			C(\M_3) &\leq d_{\mathrm{M}}(X,Z) + d_{\mathrm{M}}(Z,Y) 
			\label{eq:matching_tri_ineq}.
		\end{aligned}
	\end{equation}
	
	To prove \cref{eq:matching_tri_ineq} we consider every possible term on the LHS and show that this is less than or equal to some unique terms appearing on the RHS. The keys terms appearing on the LHS are (i) pairwise distances for matched elements (ii) penalisations of unmatched elements. 
	
	We first consider (i). By definition of $\M_3$, each pair $(x_i, y_j) \in \M_3$ is associated with some \textit{unique} $(x_i, z_k) \in \M_1^*$ and $(z_k, y_j)\in \M_2^*$, that is, there is some element $z_k \in Z$ which both $x_i$ and $y_j$ are matched to. Furthermore, since $d(\cdot, \cdot)$ is a distance metric it satisfies the triangle inequality, and so
	$$d(x_i, y_j) \leq d(x_i, z_k) + d(z_k, y_j),$$
	and thus each pairwise distance of matched elements on the LHS of \cref{eq:matching_tri_ineq} is less than or equal to some unique terms on the RHS. 
	
	For (ii) consider first the penalisation terms for elements of $X$ not included in the matching $\M_3$, that is $d(x,\Lambda)$ for $x \in(\M_3)^c_X$. We now seek to show that $d(x, \Lambda)$ is less than or equal to some (unique) terms appearing on the RHS of \cref{eq:matching_tri_ineq}. For $x$ to not be in $\M_3$ one of two things must have happened 
	\begin{enumerate}
		\item $(x,z) \in \M_1^*$ for some $z \in Z$ with $(z, y) \not\in \M_2^*$ for any $y \in Y$
		$$\implies \text{ a term on the RHS of } d(x,z) + d(z,\Lambda)$$
		which will also be unique to the pair $(x,z)$. Now, since $d(\cdot,\cdot)$ is a metric it obeys the triangle inequality, thus
		$$d(x,\Lambda) \leq d(x,z) + d(z, \Lambda)$$
		as desired;
		\item Alternatively, we might have $(x,z) \not \in \M_1^*$ for any $z \in Z$
		$$\implies  \text{ a term on the RHS of } d(x,\Lambda),$$
		and thus in this case we trivially have
		$$d(x, \Lambda) \leq d(x,\Lambda).$$
	\end{enumerate}
	In either case, we have a term on the LHS of \cref{eq:matching_tri_ineq} which is less than or equal to some unique terms on the RHS. This argument can be applied similarly to the penalisation terms for elements of $Y$ not in the matching $\M_3$. 
	
	Thus we have that every term on the LHS of \cref{eq:matching_tri_ineq} is less than or equal to some unique terms on the RHS, proving the inequality holds. Thus $d_{\mathrm{M}}$ satisfies condition \eqref{metric_condition4:tri}, completing the proof. 
\end{proof}

\begin{proof}[Proof of \Cref{thm:fp_matching_is_metric} (Part 2)]
    To aide this exposition, we write $d_{\mathrm{M},\rho}$ in terms of its associate cost function as follows
	\begin{equation*}
		\begin{aligned}
			d_{\mathrm{M},\rho}(\setx,\sety) = \min_{\M} C(\M)
			\label{dist:mathcing}
		\end{aligned}
	\end{equation*}
	where
	\begin{equation*}
	    \begin{aligned}
	        C(\M) =  \sum_{(\elmtx,\elmty) \in \M} d(\elmtx,\elmty) + \rho( |\setx| + |\sety| - 2|\M| ) ,
	    \end{aligned}
	\end{equation*}
	denoting the cost of the matching $\M$.

	We first consider condition \eqref{metric_condition2:indentity}. Note firstly that since $d(x, y) \geq 0$ (as it is a metric) and $\rho >0$, this implies $d_{\mathrm{M},\rho}(X,Y) \geq 0$ for all multisets $X$ and $Y$. Now, assuming $X=Y$, one can construct a matching $\M_1$ by pairing equivalent elements of $X$ and $Y$. The existence of this matching leads to the following upper bound on the distance 
	\begin{equation*}
		\begin{aligned}
			d_{\mathrm{M},\rho}(X,Y) &\leq C(\M_1) \\
			&= \sum_{(x,y) \in \M_1} d(x,y) \\
			&= 0
		\end{aligned}
	\end{equation*}
	where the second line follow since $\M_1$ included all elements of $X$ and $Y$ and thus no penalisation will occur, whilst the final line follows since $\M_1$ matches equivalent elements and hence (using the fact that $d(\cdot,\cdot)$ is a metric) all pairwise distances will be zero. This combined with $d_{\mathrm{M},\rho}(\setx,\sety)\geq 0$ implies $d_{\mathrm{M},\rho}(X,Y)=0$. 
	
	Conversely, $d_{\mathrm{M},\rho}(X,Y)=0$ implies both the sum of pairwise distances and penalisation terms must be zero. Thus, if $\M^*$ is the optimal matching, then all elements of $X$ and $Y$ are included in $\M^*$ and we have
	$$d_{\mathrm{M},\rho}=\sum_{(x,y) \in \M^*} d(x,y) = 0.$$
	Now, since $d(\cdot, \cdot)\geq 0$ this implies
	$$d(x,y) = 0, \quad \forall \, (x,y) \in \M^*,$$
	and hence
	$$x = y, \quad \forall \, (x,y) \in \M^*.$$
	Since all elements of either set are included in $\M^*$ this implies $X=Y$, thus confirming satisfaction of \eqref{metric_condition2:indentity}. 
	
	As with \Cref{thm:matching_is_metric} (Part 1), the symmetry condition \eqref{metric_condition3:symmetry} follows trivially from the symmetry of $d(x,y)$ and the penalisation term. 
	
	Finally, we verify condition \eqref{metric_condition4:tri}, the triangle inequality. Here we follow exactly the same steps seen in the proof of \Cref{thm:matching_is_metric} (Part 1). Assuming that $\setx=\{\elmtx_1,\dots,\elmtx_n\}$, $\sety=\{\elmty_1,\dots,\elmty_m\}$ and $Z = \{\elmtz_1,\dots,\elmtz_k\}$ are three multisets, we seek to show that 
	$$d_{\mathrm{M},\rho}(X,Y) \leq d_{\mathrm{M},\rho}(X,Z) + d_{\mathrm{M},\rho}(Z,Y).$$
    Now, let $\M^*_1$ and $\M^*_2$ denote optimal matchings for $d_{\mathrm{M}, \rho}(X,Z)$ and $d_{\mathrm{M},\rho}(Z,Y)$ respectively, that is 
    \begin{align*}
        d_{\mathrm{M},\rho}(X,Z)=C(\M^*_1) && d_{\mathrm{M},\rho}(Z,Y)=C(\M^*_2)
    \end{align*}
    then these again induce a matching $\M_3$ of $\seqx$ and $\seqy$ as in \cref{eq:induced_matching}. Moreover, following the reasoning therein, the triangle inequality will follow if we can show the following inequality holds
	\begin{equation}
		\begin{aligned}
			C(\M_3) &\leq d_{\mathrm{M},\rho}(X,Z) + d_{\mathrm{M},\rho}(Z,Y) 
			\label{eq:fp_matching_tri_ineq},
		\end{aligned}
	\end{equation}
    where the only difference here is in the form of cost function. 
    
	As in the proof of \Cref{thm:matching_is_metric} (Part 1), the route we take here is to show that every term on the LHS of \cref{eq:fp_matching_tri_ineq} is less than or equal to some unique terms appearing on the RHS. Again, we have terms on the LHS of two types (i) pairwise distances for matched elements (ii) penalisation of unmatched elements. 
	
	The argument for showing terms of the form (i) are less than or equal to some unique terms on the RHS is exactly the same as that in the proof of \Cref{thm:matching_is_metric} (Part 1), relying on the fact that $d(\cdot, \cdot)$ is a metric and so obeys the triangle inequality. For brevity, we do not repeat this here.
	
	The argument for terms of the form (ii) is slightly different and so we give full exposition. Considering first the penalisation terms for elements of $X$ not included in the matching $\M_3$, we will have a $\rho$ for each $x \in  (\M_3)^c_X$. We now seek to show that each $\rho$ is less than or equal to some unique terms appearing on the RHS of \cref{eq:fp_matching_tri_ineq}. For $x$ to not be in $\M_3$ one of two things must have happened 
	\begin{enumerate}
		\item $(x,z) \in \M_1^*$ for some $z \in Z$ with $(z, y) \not\in \M_2^*$ for any $y \in Y$
		$$\implies \text{ a term on the RHS of } d(x,z) + \rho$$
		which will also be unique to the pair $(x,z)$. Now, since $d(\cdot,\cdot)$ is a metric it is non-negative, thus
		$$\rho \leq d(x,y) + \rho$$
		as desired;
		\item Alternatively, we might have $(x,z) \not \in \M_1^*$ for any $z \in Z$
		$$\implies  \text{ a term on the RHS of } \rho,$$
		and thus in this case we trivially have
		$$\rho \leq \rho.$$
	\end{enumerate}
	In either case, we again have a term on the LHS of \cref{eq:fp_matching_tri_ineq} which is less than or equal to some unique terms on the RHS. Moreover, thus argument will apply similarly to the penalisation terms for $Y$ without loss of generality. Thus we have that every term on the LHS of \cref{eq:fp_matching_tri_ineq} is less than or equal to some unique terms on the RHS, proving the inequality of \cref{eq:fp_matching_tri_ineq} holds. Thus $d_{\mathrm{M},\rho}$ satisfies condition \eqref{metric_condition4:tri}, completing the proof. 
\end{proof}

\begin{proof}[Proof of \Cref{thm:matching_complete}]

Given two multisets $\setx$ and $\sety$, let
\begin{equation*}
    \begin{aligned}
        C(\M) = \sum_{(\elmtx,\elmty) \in \M} d(x,y) + \sum_{\elmtx \not\in \M_\setx}d(\elmtx, \Lambda) + \sum_{\elmty \not\in \M_\sety} d(\elmty,\Lambda)
    \end{aligned}
\end{equation*}
and, towards proving this result, we assume that any matching $\M'$ for which 
$$C(\M')=\min_\M C(\M)=d_{\text{M}}(X,Y),$$
is \textit{not} complete, seeking a contradiction. There may be more than one such matching, so without loss of generality, let $\M'$ denote any one of these optimal matchings. Since $\M'$ is not complete, there must be a currently un-matched pair, that is, $(\elmtx^*,\elmty^*)$ such that $\elmtx^*\in \setx$ and $\elmty^* \in \sety$ but $\elmtx^* \not\in \M'_\setx$ and $\elmty^* \not\in \M'_\sety$. One can now define a new matching $\M''$ by augmenting $\M'$ as follows 
$$\M'' = \M' \cup \{(\elmtx^*, \elmty^*)\} $$
for which 
\begin{equation}
    \begin{aligned}
        C(\M'') &= \sum_{(\elmtx,\elmty) \in \M''} d(x,y) + \sum_{\elmtx \not\in \M''_\setx}d(\elmtx, \Lambda) + \sum_{\elmty \not\in \M''_\sety} d(\elmty,\Lambda) \\
        &= \sum_{(\elmtx,\elmty) \in \M'} d(x,y) + d(x^*,y^*) + \sum_{\elmtx \not\in \M''_\setx}d(\elmtx, \Lambda) + \sum_{\elmty \not\in \M''_\sety} d(\elmty,\Lambda) \\
        &\leq \sum_{(\elmtx,\elmty) \in \M'} d(x,y) + d(x^*,\Lambda) + d(y^*,\Lambda) + \sum_{\elmtx \not\in \M''_\setx}d(\elmtx, \Lambda) + \sum_{\elmty \not\in \M''_\sety} d(\elmty,\Lambda) \\
        &= \sum_{(\elmtx,\elmty) \in \M'} d(x,y) + \sum_{\elmtx \not\in \M'_\setx}d(\elmtx, \Lambda) + \sum_{\elmty \not\in \M'_\sety} d(\elmty,\Lambda)\\
        &= C(\M')
    \end{aligned}
    \label{eq:matching_complete_ineq}
\end{equation}
where in the third line we use the fact $d(\cdot,\cdot)$ is a distance metric, and hence obeys the triangle inequality. Since $\M'$ was optimal, we must also have $C(\M') \leq C(\M)$ for all matchings $\M$, which combined with \cref{eq:matching_complete_ineq} implies $C(\M'')=C(\M')$, that is, $\M''$ is also an optimal matching. Moreover, we have $|\M''| = |\M'| + 1$. Now, either (i) $\M''$ is complete, or (ii) we can repeat this augmentation, increasing the matching cardinality until it is complete. Either way, we arrive at a matching with is both optimal and complete, contradicting our assumption that all optimal matchings were not complete. The result now follows by contradiction.

\end{proof}
\begin{proof}[Proof of \Cref{thm:fp_matching_complete}]

Given two multisets $\setx$ and $\sety$, let
\begin{equation*}
    \begin{aligned}
        C(\M) = \sum_{(\elmtx,\elmty) \in \M} d(x,y) + \rho(n + m - 2|\M|)
    \end{aligned}
\end{equation*}
where $|\setx|=n$ and $|\sety|=m$, and letting 
$$K =  \max_{\elmtx\in \setx, \elmty\in\sety} d(\elmtx,\elmty)$$
we assume $\rho \geq K/2$. Towards proving the result, further assume any matching $\M'$ for which 
$$C(\M') = \min_\M C(\M) = d_{\mathrm{M},\rho}(X,Y)$$
is \textit{not} complete, seeking a contradiction. As in the proof of \Cref{thm:matching_complete}, without loss of generality we let $\M'$ denote one of these optimal matchings and define a new matching $\M''$ by augmenting $\M'$ with a presently un-matched pair $(x^*,y^*)$, that is 
$$\M'' = \M' \cup \{(x^*,y^*)\}$$
for which 
\begin{equation}
    \begin{aligned}
        C(\M'') &= \sum_{(\elmtx,\elmty) \in \M''} d(x,y) + \rho(n + m - 2|\M''|) \\
        &= \sum_{(\elmtx,\elmty) \in \M'} d(x,y) + d(x^*,y^*) - 2\rho +\rho(n + m - 2|\M'|)\\
        &\leq \sum_{(\elmtx,\elmty) \in \M'} d(x,y) + 2\rho - 2\rho + \rho(n + m - 2|\M'|)\\
        &=\sum_{(\elmtx,\elmty) \in \M'} d(x,y) + \rho(n + m - 2|\M'|) \\
        &= C(\M')
    \end{aligned}
    \label{eq:fp_matching_complete_ineq}
\end{equation}
where in the second line we use the fact that $|\M''|=|\M'|+1$, whilst in the third line we used the fact that 
$$d(x,y) \leq K \leq 2\rho $$
for any $\elmtx\in \setx$ and $\elmty \in \sety$, by definition of $K$ and the assumption regarding $\rho$. As in the proof of \Cref{thm:matching_complete}, since $\M'$ was assumed optimal, \cref{eq:fp_matching_complete_ineq} implies that $\M''$ must also be optimal. Moreover, either (i) $\M''$ is complete, or (ii) we may repeat this augmentation until it is. In either case, we arrive at a matching which is optimal and complete. Hence the result follows by contradiction.
\end{proof}

\begin{proof}[Proof of \Cref{thm:emd_metric_cond}]
    \label{proof:emd_metric_cond}
    In what follows we will use the notation $d_{\mathrm{W}_1}(\mu_\setx, \mu_\sety)$ for the 1-Wasserstein distances between the \textit{distributions} $\mu_\setx$ and $\mu_\sety$, which is known to be a distance metric  \cite[Prop. 2.2]{Peyre2019}. Observe that by our definition of the EMD between multisets (\Cref{def:emd}) we have $d_{\mathrm{EMD}}(\setx,\sety) = d_{\mathrm{W}_1}(\mu_\setx,\mu_\sety)$.
    
    The conditions \eqref{metric_condition3:symmetry} and \eqref{metric_condition4:tri} are inherited naturally. Firstly, we have
    \begin{equation*}
        \begin{aligned}
            d_{\mathrm{EMD}}(\setx,\sety) &= d_{\mathrm{W}_1}(\mu_\setx,\mu_\sety) \\
            &= d_{\mathrm{W}_1}(\mu_\sety,\mu_\setx) \\
            &= d_{\mathrm{EMD}}(\sety,\setx)
        \end{aligned}
    \end{equation*}
    where the second line follows since $d_{\mathrm{W}_1}$ is a metric between distributions, verifying that \eqref{metric_condition3:symmetry} holds. Secondly, for any multisets $\setx$, $\sety$ and $\setz$ we have 
    \begin{equation*}
        \begin{aligned}
            d_{\mathrm{EMD}}(\setx,\sety) &= d_{\mathrm{W}_1}(\mu_\setx,\mu_\sety)\\
            &\leq d_{\mathrm{W}_1}(\mu_\setx,\mu_\setz) + d_{\mathrm{W}_1}(\mu_\setz,\mu_\sety) \\
            &= d_{\mathrm{EMD}}(\setx,\setz) + d_{\mathrm{EMD}}(\setz,\sety)
        \end{aligned}
    \end{equation*}
    where again the second line follows since $d_{\mathrm{W}_1}$ is a metric. Thus \eqref{metric_condition4:tri} also holds. 
    
    We now assume metric condition \eqref{metric_condition2:indentity} holds, seeking a contradiction. To do so, let $\setx$ be a multiset and define $\sety$ via its multiplicity function as follows (for any $x \in \X$)
    $$m_\sety(x) = C \cdot m_\setx(x)$$
    where $C \in \mathbb{Z}_+$, that is, $\sety$ and $\setx$ are proportional. Observe that if $C>1$ then $\setx \not= \sety$ whilst 
    \begin{equation*}
        \begin{aligned}
            \mu_\sety(x) = \frac{m_\sety(x)}{|\sety|} = \frac{C \cdot m_\setx(x) }{C \cdot |\setx|} = \mu_\setx(x)
        \end{aligned}
    \end{equation*}
    for any $x \in \X$, that is, $\mu_\setx=\mu_\sety$. Consequently, we have $\setx \not= \sety$ and 
    $$d_{\mathrm{EMD}}(\setx,\sety)=d_{\mathrm{W}_1}(\mu_\setx,\mu_\sety)=0,$$ thus contradicting the assumption condition \eqref{metric_condition2:indentity} holds.
\end{proof}
\begin{proof}[Proof of \Cref{thm:ex_emd_is_metric}]
\label{proof:ex_emd_metric}
Firstly, since both $d_{\mathrm{EMD}}$ and $d_s$ satisfy metric conditions \eqref{metric_condition3:symmetry} and \eqref{metric_condition4:tri}, so will a linear combination thereof.

As in the proof of \Cref{thm:emd_metric_cond}, we use the notation $d_{\mathrm{W}_1}(\mu_\setx, \mu_\sety)$ for the 1-Wasserstein distance between the \textit{distributions} $\mu_\setx$ and $\mu_\sety$, known to be a distance metric  \cite[Prop. 2.2]{Peyre2019}. Furthermore, by our definition of the EMD between multisets (\Cref{def:emd}) we have $d_{\mathrm{EMD}}(\setx,\sety) = d_{\mathrm{W}_1}(\mu_\setx,\mu_\sety)$.

 Towards proving \eqref{metric_condition2:indentity} holds, assume that $\setx = \sety$, which implies $\mu_{\setx} = \mu_{\sety}$ and $|\setx|=|\sety|$. Since both $d_{\mathrm{W}_1}$ and $d_s$ are metrics this implies $d_{\mathrm{W}_1}(\mu_\setx, \mu_{\sety}) = d_s(|\setx|,|\sety|) = 0$, and so
$$d_{\mathrm{sEMD}}(\setx, \sety) =  \tau \cdot 0 + (1-\tau) \cdot 0 = 0.$$
Conversely, assume that $d_{\mathrm{sEMD}}(\setx, \sety) = 0$. Being a linear combination of non-neagtive terms, this implies both $d_{\mathrm{EMD}}(\setx, \sety)=0$, and $d_s(|\setx|,|\sety|) = 0$. Now, since $d_s$ is a metric we have $|\setx|=|\sety|$, whilst since $d_{\mathrm{EMD}}(\setx,\sety) = d_{\mathrm{W}_1}(\mu_\setx,\mu_\sety)$ we have $d_{\mathrm{W_1}}(\mu_\setx,\mu_\sety)=0$ which, since $d_{\mathrm{W}_1}$ is a metric, implies $\mu_\setx=\mu_\sety$, which together imply for any $\elmtx \in \X$ we must have 
\begin{equation*}
    \begin{aligned}
        \mu_\setx(\elmtx) &= \mu_\sety(\elmtx) \\
	\implies \frac{m_\setx(\elmtx)}{|\setx|} &= \frac{m_\sety(\elmtx)}{|\sety|} \\
	\implies m_\setx(\elmtx) &= m_\sety(\elmtx) \qquad (\text{since } |\setx|=|\sety|)
    \end{aligned}
\end{equation*}
and hence $m_\setx = m_\sety$, that is, $\setx = \sety$. This confirms \eqref{metric_condition2:indentity} and completes the proof. 
\end{proof}

\subsection{Sequence distances}

\label{sec:proofs_sequences}

\begin{remark}
	Both $d_{\mathrm{E}}$ and $d_{\mathrm{E},\rho}$ can be seen as special cases of the so-called \textit{string edit distance} proposed by \cite{Wagner1974}. We could, therefore, conclude right away that both are indeed distance metrics. However, for completeness, and to emphasise the close connections with the matching distances, we proceed to prove these results, emulating the structure and approach in the proof of \Cref{thm:matching_is_metric,thm:fp_matching_is_metric}.
\end{remark}

\begin{proof}[Proof of \Cref{thm:edit_distance_metric} (Part 1)]

    To aide this exposition, we write $d_{\mathrm{E}}(\seqx, \seqy)$ in terms of its cost function as follows
    \begin{align*}
		d_{\mathrm{E}}(\seqx,\seqy) = \min_{\M} C(\M)
	\end{align*}
	where $\M$ denotes a monotone matching of $\seqx$ and $\seqy$ and 
	\begin{equation*}
	    \begin{aligned}
	        C(\M) =  \sum_{(\elmtx,\elmty) \in \M} d(\elmtx,\elmty)  + \sum_{\elmtx \in \M_\setx^c} d(\elmtx, \Lambda) + \sum_{\elmty \in \M_\sety^c} d(\elmty, \Lambda)
	    \end{aligned}
	\end{equation*}
	denotes the cost of the matching $\M$.
	
	We first consider metric condition \eqref{metric_condition2:indentity} (identity of indiscernibles). Firstly, since $d$ is a metric we have $d(x,y) \geq 0$, which implies $d_\mathrm{E}(X,Y)\geq 0$ for any sequences $X$ and $Y$. Now, assuming that $X = Y$, then letting $n=|\seqx|=|\seqy|$ this implies 
	$$x_i = y_i \quad \text{ for } i=1, \dots, n.$$ 
	Consequently, we can trivially construct a monotone matching $\M^*$ which pairs equivalent entries
	\begin{align}
		\M^* = \{(x_1,y_1),\dots,(x_n,y_n)\}.
		\label{eq:edit_dist_identity_matching}
	\end{align}
	The existence of this matching thus leads to an upper bound on the distance 
	\begin{equation*}
		\begin{aligned}
			d_\mathrm{E}(X,Y) &\leq C(\M^*) \\
			&= \sum_{i=1}^n d(x_i, y_i) + 0  + 0 = 0,
		\end{aligned}
	\end{equation*}
	which, combined with $d_{\mathrm{E}}(\seqx,\seqy)\geq0$, implies $d_\mathrm{E}(X,Y)=0$. Conversely, assume that $d_\mathrm{E}(X,Y) = 0$. This implies that both penalisation terms are zero, and therefore every entry of each sequence is included in the matching. Furthermore, this implies the sequences are of equal length, that is, $|\seqx|=|\seqy|$. The only possible \textit{monotone} matching which includes all sequence entries is the $\M^*$ seen in \cref{eq:edit_dist_identity_matching}, thus
	\begin{equation*}
		\begin{aligned}
			 d_{\mathrm{E}}(\seqx,\seqy) = \sum_{i=1}^n d(x_i,y_i) = 0 
		\end{aligned}
	\end{equation*}
	which, since $d(x,y) \geq 0$, implies
	\begin{equation*}
    	\begin{aligned}
    		  d(x_i, y_i) = 0 \quad (\text{for } i = 1,\dots n).
    	\end{aligned}
	\end{equation*}
    since $d$ is a metric it satisfies condition \eqref{metric_condition2:indentity}, and thus we have 
	\begin{equation*}
		\begin{aligned}
			x_i = y_i \quad(\text{for } i=1,\dots,n)
		\end{aligned}
	\end{equation*}
	that is, $X=Y$, confirming that $d_{\mathrm{E}}$ satisfies \eqref{metric_condition2:indentity}. 
	
	The symmetry condition \eqref{metric_condition3:symmetry} follows trivially from the symmetry of $d(x,y)$ and the penalisation terms. 
	
	We now finish with confirming the triangle inequality \eqref{metric_condition4:tri} is satisfied. The approach is almost identical to the proof of \Cref{thm:matching_is_metric} (Part 1), with one key difference: we must ensure all matchings are monotone. Assuming that $X=(x_1,\dots,x_n)$, $Y=(y_1,\dots,y_m)$ and $Z = (z_1,\dots,z_l)$ are three sequences, we seek to show that 
	$$d_\mathrm{E}(X,Y) \leq d_\mathrm{E}(X,Z) + d_\mathrm{E}(Z,Y).$$
	With $\M^*_1$ and $\M_2^*$ denoting optimal monotone matchings for $d_\mathrm{E}(X,Z)$ and $d_\mathrm{E}(Z,Y)$ respectively, that is 
	\begin{align*}
			d_\mathrm{E}(X,Z) = C(\M_1^*) && d_\mathrm{E}(Z,Y) =  C(\M_2^*)
	\end{align*}
	these induce the following matching $\M_3$ of $X$ and $Y$ 
	\begin{align}
		\M_3 = \{(x_i, y_j) \, : \, (x_i, z_k) \in \M^*_1 \text{ and } (z_k, y_j) \in \M^*_2 \text{ for some } z_k \in Z\}
		\label{eq:edit_distance_proof_induced_matching}
	\end{align}
	that is, we match entries of $X$ and $Y$ if they were matched to the same entry of $Z$. 
	
	We now confirm this is a monotone matching. Recall that $\M_3$ is monotone if for any pairs $(x_{i_1}, y_{j_1})$ and $(x_{i_2}, y_{j_2})$ in $\M_3$ we have 
	$$i_1 < i_2 \iff j_1 < j_2.$$
	By the definition of $\M_3$ there exists $z_{k_1}$ and $z_{k_2}$ in $Z$ such that 
	\begin{equation*}
		\begin{aligned}
			(x_{i_1}, z_{k_1}) \in \M_1^* && (z_{k_1}, y_{j_1}) \in \M_2^* \\
			(x_{i_2}, z_{k_2}) \in \M_1^* && (z_{k_2}, y_{j_2}) \in \M_2^*
		\end{aligned}
	\end{equation*}
	Furthermore, since $\M_1^*$ and $\M_2^*$ are monotone we have 
	\begin{equation*}
		\begin{aligned}
			i_1 < i_2 \iff k_1 < k_ 2 && && \text{ and } && && k_1 < k_2 \iff j_1 < j_2
		\end{aligned}
	\end{equation*}
	which therefore implies
	\begin{align*}
		i_1 < i_2 \iff k_1 < k_2 \iff j_1 < j_2
	\end{align*}
	and hence $\M_3$ is also monotone. Now that we have shown the induced matching is indeed monotone, observe that by definition we have the following 
	$$d_{\mathrm{E}}(\seqx,\seqy) \leq C(\M_3)$$
	which implies the triangle inequality will hold if we can show the following inequality is satisfied
	\begin{equation}
		\begin{aligned}
			C(\M_3) \leq d_\mathrm{E}(X,Z) + d_\mathrm{E}(Z,Y).
		\end{aligned}
		\label{eq:edit_distance_proof_ineq}
	\end{equation}
	The argument for this is identical to that used to show \cref{eq:matching_tri_ineq} in the proof of \Cref{thm:matching_is_metric} (Part 1). For brevity, we do not repeat the steps and henceforth assume \cref{eq:edit_distance_proof_ineq} holds. Thus the triangle inequality \eqref{metric_condition4:tri} holds, completing the proof.
\end{proof}
\begin{proof}[Proof of \Cref{thm:edit_distance_metric} (Part 2)]
    To aide this exposition, we write $d_{\mathrm{E},\rho}(\seqx, \seqy)$ in terms of its cost function as follows
    \begin{align*}
		d_{\mathrm{E},\rho}(\seqx,\seqy) = \min_{\M} C(\M)
	\end{align*}
	where $\M$ denotes a monotone matching of $\seqx$ and $\seqy$ and 
	\begin{equation*}
	    \begin{aligned}
	        C(\M) =  \sum_{(\elmtx,\elmty) \in \M} d(\elmtx,\elmty) + \rho( |\setx| + |\sety| - 2|\M| ) 
	    \end{aligned}
	\end{equation*}
	denotes the cost of the matching $\M$.
	
	Conditions \eqref{metric_condition2:indentity} and \eqref{metric_condition3:symmetry} can be shown to hold by following the same reasoning seen in the proof of \Cref{thm:edit_distance_metric} (Part 1). For brevity, we therefore do not repeat the details. We will, however, outline the argument confirming satisfaction of the triangle inequality \eqref{metric_condition4:tri}. 
	
	Assuming $X=(x_1,\dots,x_n)$, $Y=(y_1,\dots,y_m)$ and $Z=(z_1,\dots,z_l)$ are sequences, we are looking to show 
	$$d_{\mathrm{E},\rho}(X,Y) \leq d_{\mathrm{E},\rho}(X,Z) + d_{\mathrm{E},\rho}(Z,Y).$$
	Letting $\M_1^*$ and $\M_2^*$ denote the optimal monotone matchings of $d_{\mathrm{E},\rho}(X,Z)$ and $d_{\mathrm{E},\rho}(Z,Y)$ respectively, that is 
	\begin{align*}
			d_{\mathrm{E},\rho}(X,Z) = C(\M_1^*) && d_{\mathrm{E},\rho}(Z,Y) =  C(\M_2^*)
	\end{align*}
	we can find a monotone matching $\M_3$ of $X$ and $Y$ induced by $\M_1^*$ and $\M_2^*$ as was done in the proof of \Cref{thm:edit_distance_metric} (Part 1), that is, take $\M_3$ as in \cref{eq:edit_distance_proof_induced_matching}, where we matched entries of $X$ and $Y$ if they were matched to the same entry of $Z$.
	
	The next step in proving \eqref{metric_condition4:tri} is to confirm the following inequality holds 
	\begin{equation}
		\begin{aligned}
			C(\M_3)\leq d_{\mathrm{E},\rho}(X,Z) + d_{\mathrm{E},\rho}(Z,Y),
		\end{aligned}
		\label{eq:fp_edit_distance_tri_ineq}
	\end{equation}
    for which we again appeal arguments in a previous proof. Specifically, that of \Cref{thm:fp_matching_is_metric} (Part 2), where a similar inequality was shown to hold for the fixed penalty matching distance, namely that of \cref{eq:fp_matching_tri_ineq}. Recall the argument therein used properties of $\M_3$ to show that every term on the LHS is less than or equal to some unique terms on the RHS. Since the same $\M_3$ has been taken in the present case, the argument can also be applied here. As such, it will henceforth be assumed \cref{eq:fp_edit_distance_tri_ineq} holds. Thus the triangle inequality \eqref{metric_condition4:tri} is satisfied, completing the proof.
\end{proof}

\begin{proof}[Proof of \Cref{thm:dtw_metric_cond_violate}]

    For ease of reference, recall the DTW distance between sequences $\seqx$ and $\seqy$ is given by the following 
    \begin{equation*}
        \begin{aligned}
            d_{\mathrm{DTW}}(\seqx, \seqy) = \min_{\C}\left\{\sum_{(\elmtx,\elmty) \in \C} d(\elmtx, \elmty) \right\}
        \end{aligned}
    \end{equation*}
    where $\C$ is a coupling. 
    
    Observe the symmetry condition \eqref{metric_condition3:symmetry} follows trivially from the symmetry of the ground distance $d(\cdot,\cdot)$, by virtue of it being a metric.
    
    We now show that conditions \eqref{metric_condition2:indentity} and \eqref{metric_condition4:tri} are violated by providing counterexamples. Beginning with \eqref{metric_condition2:indentity}, consider the following two sequences
    \begin{align*}
        \seqx &= (\elmtx_1) &  \seqy &= (\elmty_1, \elmty_2)\\
        &=(\tilde{x}) & &=(\tilde{x},\tilde{x}) 
    \end{align*}
    where $\tilde{\elmtx} \in \X$ denotes an arbitrary element of the underlying space. Clearly, we have $\seqx \not = \seqy$. However, there is only one valid coupling of $\seqx$ and $\seqy$, namely $\C = ((\elmtx_1,\elmty_1),(\elmtx_1,\elmty_2))$. Consequently, the DTW distance is given by
    \begin{equation*}
        \begin{aligned}
            d_{\mathrm{DTW}}(\seqx, \seqy) &= \sum_{(\elmtx,\elmty) \in \C} d(\elmtx, \elmty)\\
            &= d(\elmtx_1,\elmty_1) + d_I(\elmtx_1, \elmty_2)\\
            &= d(\tilde{\elmtx},\tilde{\elmtx}) + d(\tilde{\elmtx}, \tilde{\elmtx}) = 0 + 0
        \end{aligned}
    \end{equation*}
    violating condition \eqref{metric_condition2:indentity}. 
    
    Turning now to condition \eqref{metric_condition4:tri}, consider the following three sequences
    \begin{equation}
        \begin{aligned}
            \seqx &= (\elmtx_1,\elmtx_2) && \seqz &= (\elmtz_1) && \seqy &= (\elmty_1) \\
            &=(\tilde{x},\tilde{x}) && &=(\tilde{x}) && &=(\tilde{y})
        \end{aligned}
        \label{eq:dtw_tri_ineq_counter_ex}
    \end{equation}
    where $\tilde{x},\tilde{y} \in \X$ with $\tilde{x}\not = \tilde{y}$. Now, the only valid coupling of $\seqx$ and $\seqz$ is given by $\C_{\seqx\seqz}=((\elmtx_1,\elmtz_1),(\elmtx_2,\elmtz_1))$, similarly the only coupling of $\seqz$ and $\seqy$ is given by $\C_{\seqz\seqy}=((\elmtz_1,\elmty_1))$, whilst for $\seqx$ and $\seqy$ this will be $\C_{\seqx\seqy}=((\elmtx_1,\elmty_1),(\elmtx_2,\elmty_1))$. This therefore implies 
    \begin{equation*}
        \begin{aligned}
            d_{\mathrm{DTW}}(\seqx,\seqy) &= \sum_{(x,y) \in \C_{\seqx\seqy}} d(x,y) \\
            &= d(\elmtx_1,\elmty_1) + d(\elmtx_2,\elmty_1) \\
            &=d(\tilde{\elmtx},\tilde{\elmty}) + d(\tilde{\elmtx}, \tilde{\elmty})\\
            &=2d(\tilde{\elmtx}, \tilde{\elmty})
        \end{aligned}
    \end{equation*}
    whilst
    \begin{equation*}
        \begin{aligned}
            d_{\mathrm{DTW}}(\seqx,\seqz) + d_{\mathrm{DTW}}(\seqz,\seqy) &= \sum_{(x,z) \in \C_{\seqx\seqz}}d(x,z) + \sum_{(z,y) \in \C_{\seqz\seqy}}d(z, y)\\ 
            &= \left[d(\elmtx_1, \elmtz_1) + d(\elmtx_2,\elmtz_1)\right] + \left[d(\elmtz_1, \elmty_1)\right]\\
            &=\left[d(\tilde{\elmtx}, \tilde{\elmtx}) + d(\tilde{\elmtx}, \tilde{\elmtx})\right] + \left[d(\tilde{\elmtx}, \tilde{\elmty})\right]\\
            &= d(\tilde{\elmtx}, \tilde{\elmty})
        \end{aligned}
    \end{equation*}
    where we have used the fact, since $d$ is a metric, we have $d(\elmtx, \elmtx)=0$. Now, since $\tilde{\elmtx} \not = \tilde{\elmty}$ we have $d(\tilde{\elmtx}, \tilde{\elmty})>0$, implying
    \begin{equation*}
        \begin{aligned}
            d_{\mathrm{DTW}}(\seqx,\seqy) = 2d(\tilde{\elmtx},\tilde{\elmty}) > d(\tilde{\elmtx}, \tilde{\elmty}) = d_{\mathrm{DTW}}(\seqx,\seqz) + d_{\mathrm{DTW}}(\seqz,\seqy)
        \end{aligned}
    \end{equation*}
    and \eqref{metric_condition4:tri} is violated, as desired. This completes the proof.
\end{proof}
\begin{proof}[Proof of \Cref{thm:fp_dtw_metric_conds}]
    For ease of reference, recall the fixed-penalty DTW distance between sequences $\seqx$ and $\seqy$ is given by the following 
    \begin{equation}
        \begin{aligned}
            d_{\mathrm{DTW}, \rho}(\seqx, \seqy) = \min_{\C}\left\{\sum_{(\elmtx,\elmty) \in \C} d(\elmtx, \elmty)  + \rho \cdot w(\C)\right\}
        \end{aligned}
    \end{equation}
    where
    $$w(\C) :=  |\{(\elmtx_i,\elmty_j) \in \C \, : \, (\elmtx_i,\elmty_{j+1}) \in \C \text{ or } (\elmtx_{i+1},\elmty_j) \in \C \}|, $$
    quantifies the amount of warping in $\C$, whilst $\rho > 0$ is a parameter controlling the penalisation incurred for each instance of warping.
    
    We first show that condition \eqref{metric_condition2:indentity} (identity of indiscernibles) holds. To do so, let $\seqx$ and $\seqy$ be two sequences such that $d_{\mathrm{DTW},\rho}(\seqx,\seqy)=0$. Assuming that $\C^*$ denotes an optimal coupling, that is
    \begin{equation*}
        \begin{aligned}
            d_{\mathrm{DTW},\rho}(\seqx,\seqy) = \sum_{(x,y) \in \C^*} d(\elmtx,\elmty) + \rho \cdot w(\C^*)
        \end{aligned}
    \end{equation*}
    implying
    \begin{equation}
        \begin{aligned}
            \sum_{(x,y) \in \C^*} d(\elmtx,\elmty) + \rho \cdot w(\C^*)=0,
        \end{aligned}
        \label{eq:fp_dtw_identity_eq1}
    \end{equation}
    where we here use the fact that $d_{\mathrm{DTW},\rho}(\seqx,\seqy)=0$. Notice since $d(x,y) \geq 0$, $w(\C)\geq0$ and $\rho >0$ this implies each term in \cref{eq:fp_dtw_identity_eq1} must be zero. In particular, we must have $w(\C^*)=0$, that is, no warping has taken place. Thus, each entry of $\seqx$ is paired with exactly one from $\seqy$. Observe this implies $|\seqx|=|\seqy|=N$ and furthermore the only coupling possible is the following 
    $$\C^* = \{(\elmtx_i,\elmty_i) \, : \, i=1,\dots,N\}$$
    that is, we pair the first entries, second entries, and so on. Moreover, due to \cref{eq:fp_dtw_identity_eq1} the sum of pairwise distances in $\C^*$ must be zero, implying
    \begin{equation*}
        \begin{aligned}
            d_{\mathrm{DTW},\rho}(\seqx, \seqy) &= \sum_{(\elmtx,\elmty) \in \C^*}d(\elmtx,\elmty)\\
            &= \sum_{i=1}^N d(\elmtx_i,\elmty_i)=0
        \end{aligned}
    \end{equation*}
    and now since $d$ is a metric we have $d(\elmtx_i, \elmty_i) \geq0$ this implies $d(\elmtx_i,\elmty_i) = 0$ for $i=1,\dots,N$. Finally, using again the fact $d$ is a metric this implies $\elmtx_i=\elmty_i$ for $i=1,\dots,N$ and consequently we have $\seqx=\seqy$. 
    
    Conversely, assume that $\seqx$ and $\seqy$ are sequences such that $\seqx=\seqy$. With $\C^*$ again denoting the coupling obtained by pairing the $i$th entry of $\seqx$ with the $i$th entry of $\seqy$, that is
    $$\C^* = \{(\elmtx_i,\elmty_i) \, : \, i=1,\dots,N\}$$
    where $N=|\seqx|=|\seqy|$, this implies the following upper bound 
    \begin{equation*}
        \begin{aligned}
            d_{\mathrm{DTW},\rho}(\seqx,\seqy) &\leq \sum_{(\elmtx,\elmty) \in \C^*}d(\elmtx,\elmty) + \rho\cdot  w(\C^*) \\
            &= \sum_{i=1}^N d(\elmtx_i,\elmty_i)\\
            &=0
        \end{aligned}
    \end{equation*}
    where in the second line we use the fact $w(\C^*)=0$, whilst the third line follows since $\elmtx_i=\elmty_i$ for $i=1,\dots,N$ by assumption and $d$ is a metric. Finally, since $d_{\mathrm{DTW},\rho}(\seqx,\seqy)\geq0$ by definition, this implies we must have $d_{\mathrm{DTW},\rho}(\seqx,\seqy)=0$. This confirms that condition \eqref{metric_condition2:indentity} holds.
    
    The symmetry condition \eqref{metric_condition3:symmetry} follows trivially from the symmetry of the ground distance $d$ and the penalisation term. 
    
    We now show the triangle inequality \eqref{metric_condition4:tri} is violated. Here as a counterexample we consider the sequence $\seqx$, $\seqy$ and $\seqz$ defined in \cref{eq:dtw_tri_ineq_counter_ex} as seen in the proof of \Cref{thm:dtw_metric_cond_violate}, with $\C_{\seqx\seqz}$, $\C_{\seqz\seqy}$ and $\C_{\seqx\seqy}$ the associated couplings. As in the proof of \Cref{thm:dtw_metric_cond_violate}, this implies 
    \begin{equation*}
        \begin{aligned}
            d_{\mathrm{DTW},\rho}(\seqx,\seqy) &= \sum_{(x,y) \in \C_{\seqx\seqy}} d(x,y) + \rho \cdot w(\C_{\seqx\seqy})\\
            &= d(\elmtx_1,\elmty_1) + d(\elmtx_2,\elmty_1) + \rho \\
            &=d(\tilde{\elmtx},\tilde{\elmty}) + d(\tilde{\elmtx}, \tilde{\elmty}) + \rho\\
            &=2d(\tilde{\elmtx}, \tilde{\elmty}) + \rho
        \end{aligned}
    \end{equation*}
    whilst
    \begin{equation*}
        \begin{aligned}
            d_{\mathrm{DTW},\rho}(\seqx,\seqz) + d_{\mathrm{DTW},\rho}(\seqz,\seqy) &= \left(\sum_{(x,z) \in \C_{\seqx\seqz}}d(x,z)+ \rho \cdot w(\C_{\seqx\seqz})\right) \\
            & \quad + \left(\sum_{(z,y) \in \C_{\seqz\seqy}}d(z, y)+ \rho \cdot w(\C_{\seqz\seqy})\right)\\ 
            &= \left[d(\elmtx_1, \elmtz_1) + d(\elmtx_2,\elmtz_1)\right] + \left[d(\elmtz_1, \elmty_1)\right]\\
            &=\left[d(\tilde{\elmtx}, \tilde{\elmtx}) + d(\tilde{\elmtx}, \tilde{\elmtx}) + \rho\right] + \left[d(\tilde{\elmtx}, \tilde{\elmty})\right]\\
            &= d(\tilde{\elmtx}, \tilde{\elmty}) + \rho
        \end{aligned}
    \end{equation*}
    where we have used the fact, since $d$ is a metric, we have $d(\elmtx, \elmtx)=0$. Now, since $\tilde{\elmtx} \not = \tilde{\elmty}$ we have $d(\tilde{\elmtx}, \tilde{\elmty})>0$, implying
    \begin{equation*}
        \begin{aligned}
            d_{\mathrm{DTW},\rho}(\seqx,\seqy) = 2d(\tilde{\elmtx},\tilde{\elmty}) + \rho > d(\tilde{\elmtx}, \tilde{\elmty}) + \rho = d_{\mathrm{DTW},\rho}(\seqx,\seqz) + d_{\mathrm{DTW},\rho}(\seqz,\seqy)
        \end{aligned}
    \end{equation*}
    thus violating \eqref{metric_condition4:tri}, as desired. This completes the proof.
\end{proof}

\section{Path distances}

\label{sec:dist_interactions}

In this section, we provide further details regarding two path distances, one of which was invoked in the data analysis of \Cref{sec:data_analysis}. Both are defined via maximally-sized substructures shared by the two paths being compared, considering in particular common subsequences and subpaths (\Cref{fig:path_distances}).

For a path $\elmtx = (x_1,\dots, x_n)$, where each $x_i \in \V$ with $\V$ some vertex set, we denote a \textit{subpath} of $x$ from index $i$ to $j$ by the following
$$x_{i:j} = (x_i, \dots, x_j) $$
where $1 \leq i \leq j \leq n$ (\Cref{fig:path_distances_subpath}).
More generally, assume that $\bm{v} = (v_1, \dots, v_s)$ with $1 \leq v_1 < v_2 < \dots < v_s \leq n$, then a \textit{subsequence} of $x$ can be obtained by indexing with $\bm{v}$ as follows
$$x_{\bm{v}} = (x_{v_1}, \dots, x_{v_s})$$
which will be of length $s$ (\Cref{fig:path_distances_subseq}). Observe that every subpath of $x$ is also a subsequence, making a subsequence the more general of the two structures. 

Given another path $y=(y_1,\dots,y_m)$, one also can consider the notion of \textit{common} subpaths and subsequences. Namely, a common subpath of $x$ and $y$ occurs when we have
$$x_{i:j}=y_{l:k}$$
for some $1 \leq i \leq j \leq n$ and $1 \leq l \leq k \leq m$. Similarly, a common subsequece of $x$ and $y$ occurs when 
$$x_{\bm{v}}=y_{\bm{u}}$$
for some $1 \leq v_1 < v_2 < \dots < v_s \leq n$ and $1 \leq u_1 < u_2 < \dots < u_s \leq m$. 

The more similar $x$ and $y$ are the larger we might expect their common subpaths or subsequences to be. Following this rationale, one can define distances between $x$ and $y$ by finding common subpaths or subsquences which \textit{maximumal}, that is, one for which there exist no other common subpaths or subsequences of greater length. This leads to the following definitions.

\begin{definition}[Longest common subsequence distance]
    For two paths $x=(x_1,\dots,x_n)$ and $y=(y_1,\dots,y_m)$ the longest common subsequence (LCS) distance is given by the following 
    \begin{equation*}
        d_\mathrm{LCS}(x, y) := n + m - 2\delta_\mathrm{LCS}
    \end{equation*}
    where 
    \begin{equation*}
         \delta_\mathrm{LCS} = \max\{|\bm{v}| = |\bm{u}|  \, : \, x_{\bm{v}}=y_{\bm{u}} \},
    \end{equation*}
    where $|\bm{v}|$ denotes the length of $\bm{v}$, so that $\delta_{\mathrm{LSP}}$ denotes the maximum length subsequence common to both $x$ and $y$.
    \label{def:lcs}
\end{definition}

\begin{definition}[Longest common subpath distance]
    For two paths $x=(x_1,\dots,x_n)$ and $y=(y_1,\dots,y_m)$ the longest common subpath (LSP) distance is given by the following 
    $$d_\mathrm{LSP}(x,y) := n + m - \delta_\mathrm{LSP}$$ 
    where 
    $$\delta_\mathrm{LSP} = \max\{|i:j| = |l:k|  \, : \, x_{i:j}=y_{l:k}\}, $$ 
    denoting the maximum length subpath common to both $x$ and $y$.
    \label{def:lsp}
\end{definition}

Both the LCS and LSP distances can be shown to be metrics, that is, they satisfy all three metric conditions, as we summarise via the following result.

\begin{proposition}
    Both $d_{\mathrm{LCS}}$ and $d_{\mathrm{LSP}}$ staisfy metric conditions \eqref{metric_condition2:indentity}-\eqref{metric_condition4:tri}.
    \label{thm:lcs_lsp_metrics}
\end{proposition}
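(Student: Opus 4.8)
The plan is to handle both distances at once by writing each in the common form $d(x,y)=|x|+|y|-2\delta(x,y)$, where $\delta(x,y)$ is the length of a longest common substructure of $x$ and $y$ (a subsequence for $d_{\mathrm{LCS}}$, a contiguous subpath for $d_{\mathrm{LSP}}$). Throughout I would use the elementary bound $\delta(x,y)\le\min(|x|,|y|)$, valid because a common substructure cannot be longer than the shorter of the two paths. With this form, symmetry \eqref{metric_condition3:symmetry} and the identity of indiscernibles \eqref{metric_condition2:indentity} are immediate, and the triangle inequality \eqref{metric_condition4:tri} is the one substantial step, which I would establish for both distances through a single position-counting argument.

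Symmetry holds because a common substructure of $x$ and $y$ is equally one of $y$ and $x$, so $\delta(x,y)=\delta(y,x)$, while $|x|+|y|$ is symmetric. For \eqref{metric_condition2:indentity}, non-negativity follows from $d(x,y)=|x|+|y|-2\delta(x,y)\ge|x|+|y|-2\min(|x|,|y|)=\big||x|-|y|\big|\ge 0$. If $x=y$ the whole path is a common substructure, so $\delta(x,y)=|x|=|y|$ and $d(x,y)=0$; conversely $d(x,y)=0$ forces $2\delta(x,y)=|x|+|y|$, which together with $\delta(x,y)\le\min(|x|,|y|)$ is only possible when $\delta(x,y)=|x|=|y|$, and a common substructure of full length $|x|=|y|$ must exhaust both paths, giving $x=y$.

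For the triangle inequality, given paths $x,y,z$ I would fix a longest common substructure of $x$ and $y$ and one of $y$ and $z$, and record the set of positions each occupies \emph{inside} $y$, say $A,B\subseteq\{1,\dots,|y|\}$ with $|A|=\delta(x,y)$ and $|B|=\delta(y,z)$. The heart of the proof is the claim that the positions in $A\cap B$ induce a common substructure of $x$ and $z$, so that $\delta(x,z)\ge|A\cap B|$. Granting this, inclusion--exclusion gives $|A\cap B|=|A|+|B|-|A\cup B|\ge\delta(x,y)+\delta(y,z)-|y|$, hence $\delta(x,z)\ge\delta(x,y)+\delta(y,z)-|y|$, and substituting into the common form yields $d(x,y)+d(y,z)-d(x,z)=2\big(\delta(x,z)-\delta(x,y)-\delta(y,z)+|y|\big)\ge 0$, which is exactly \eqref{metric_condition4:tri}.

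The main obstacle is verifying the claim $\delta(x,z)\ge|A\cap B|$, and this is where the two cases require different bookkeeping. For $d_{\mathrm{LCS}}$, the common subsequences are order-preserving matchings of positions, so each $c\in A\cap B$ satisfies $x_i=y_c=z_k$ for matched indices $i$ in $x$ and $k$ in $z$; as $c$ increases through $A\cap B$ both $i$ and $k$ increase, producing an order-preserving common subsequence of $x$ and $z$ of length $|A\cap B|$. For $d_{\mathrm{LSP}}$, the substructures are contiguous, so $A$ and $B$ are intervals, their intersection $A\cap B$ is again an interval, and the subpath of $y$ it indexes is simultaneously a contiguous subpath of $x$ and of $z$, giving a common subpath of length $|A\cap B|$ (with the empty-intersection case trivial, since then $|A|+|B|\le|y|$ and the bound is vacuous). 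Once the claim is secured in each case, the inclusion--exclusion step and the rearrangement above finish \eqref{metric_condition4:tri}, completing the proof that both $d_{\mathrm{LCS}}$ and $d_{\mathrm{LSP}}$ are metrics.
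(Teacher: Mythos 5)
Your proposal is correct and takes essentially the same route as the paper's own proof: identity of indiscernibles via the bound $\delta\le\min(|x|,|y|)$, symmetry trivially, and the triangle inequality via precisely the paper's key lemma---that the intersection of the two matched index sets inside the intermediate path induces a common substructure of the outer paths---followed by inclusion--exclusion. The only discrepancy is notational: you wrote $d_{\mathrm{LSP}}(x,y)=|x|+|y|-2\delta$, whereas the paper's \Cref{def:lsp} states $n+m-\delta_{\mathrm{LSP}}$; your factor of $2$ is the intended reading (the paper's own algorithm returns $n+m-2\delta$), since without it $d_{\mathrm{LSP}}(x,x)=|x|\neq 0$ and the proposition would be false.
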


\begin{proof}[Proof of \Cref{thm:lcs_lsp_metrics}]

Consider first condition \eqref{metric_condition2:indentity} (identity of indiscernibles). If we have two paths $x=(x_1,\dots,x_n)$ and $y=(y_1,\dots,y_m)$ such that $d_{\mathrm{LCS}}(x,y)=0$, this implies 
\begin{equation}
     \begin{aligned}
     n+m-2\delta_{\mathrm{LCS}}=0.
     \end{aligned}
     \label{eq:path_metric_proof1}
\end{equation}
Moreover, by definition $\delta_\mathrm{LCS} \leq \min(n,m)$, since a common subsequence cannot be longer than the shorter path. We claim that this implies $n=m$. Towards doing so, if we assume $n<m$ this implies 
$$n+m > 2n \geq 2\delta_\mathrm{LCS}$$
where we have used the face $\delta_\mathrm{LCS} \leq n$, which contradicts \cref{eq:path_metric_proof1}. A similar contradiction can be made if we assume $n > m$, and consequently we must have $n=m$. Substituting this into \cref{eq:path_metric_proof1} leads to
$\delta_\mathrm{LCS}=n=m$
which implies that $x$ and $y$ share a common subsequence of the same length as themselves, that is $x=y$. Conversely, if $x=y$ then it should be clear that the maximum common subsequence will be the one including all their entries, that is $\delta_\mathrm{LCS}=n=m$ and hence $$d_\mathrm{LCS}(x,y)=n+m-2\delta_\mathrm{LCS}=0.$$ 
This proves that \eqref{metric_condition2:indentity} holds for the LCS distance, and an identical argument can be used to show it similarly holds for the LSP distance.

The symmetry condition \eqref{metric_condition3:symmetry} for both the LCS and LSP distances follows trivially from the symmetry in the definition of common subsequences and subpaths, respectively.

Finally we turn to the triangle inequality \eqref{metric_condition4:tri}, considering first the LCS distance. Assume that $x=(x_1,\dots,x_n)$, $y=(y_1,\dots,y_m)$ and $z=(z_1,\dots,z_k)$ are three paths and that $\delta_{xz}$, $\delta_{zy}$ and $\delta_{xy}$ are such that 
\begin{align*}
    d_\mathrm{LCS}(x,y)=n+m-2\delta_{xy} && d_\mathrm{LCS}(x,z)=n+k-2\delta_{xz} && d_\mathrm{LCS}(z,y)=n+m-2\delta_{zy}, 
\end{align*}
then, assuming the triangle inequality holds, we have
$$d_\mathrm{LCS}(x,y) \leq  d_\mathrm{LCS}(x,z) + d_\mathrm{LCS}(z,y)$$
which is equivalent to the following 
$$n+m-2\delta_{xy} \leq n+k-2\delta_{xz} + n+m-2\delta_{zy}$$
which is true if and only if 
\begin{equation}
    \begin{aligned}
    \delta_{xz} + \delta_{zy} - k \leq \delta_{xy}.
    \end{aligned}
    \label{eq:lcs_lsp_proof_tri_ineq}
\end{equation}
Notice that if \cref{eq:lcs_lsp_proof_tri_ineq} holds then one can trace the implications back to conclude the triangle inequality also holds. Towards doing so, we consider a finding the common subsequence between $x$ and $y$ induced by those between $x$ and $z$ and $y$ and $z$, which will allow us to obtain the desired lower bound. 

To aide this exposition we introduce some notation. In particular, for two subsequences $\bm{v}$ and $\bm{u}$ of $[s]=(1,\dots,s)$ we can extend the notion of unions and intersections used for sets, that is $\bm{v} \cup \bm{u}$ and $\bm{v} \cap \bm{u}$ respectively, where if $\bm{w}=\bm{v} \cap \bm{u}$ then each entry thereof $w_i$ appears in both $\bm{v}$ and $\bm{u}$, whilst if $\bm{w} = \bm{v} \cup \bm{u}$ then each $w_i$ appears in at least one of $\bm{u}$ and $\bm{v}$. Moreover, with $|\bm{v}|$ denoting the length of subsequence $\bm{v}$, as for sets the following identity will hold 
$$|\bm{v}| + |\bm{u}| - |\bm{v} \cap \bm{u}|=|\bm{v} \cup \bm{u}|.$$
Now suppose that $\bm{v}_{xz}$ and $\bm{v}_{zy}$ index a maximal common subsequences of $z$ with $x$ and $y$ respectively, that is, both are subsequences of $[k]$ such that
\begin{align*}
     x_{\bm{v}_{xz}}=z_{\bm{u}_{xz}} && z_{\bm{v}_{zy}}=y_{\bm{u}_{zy}} 
\end{align*}
for some subsequences $\bm{v}_{xz}$ of $[n]$ and $\bm{u}_{zy}$ of $[m]$, where $|\bm{u}_{xz}|=\delta_{xz}$ and $|\bm{v}_{xz}|=\delta_{zy}$. Notice now the entries of $z$ indexed by $\bm{u}_{xz} \cap \bm{v}_{zy}$ will induce a common subsequence of $x$ and $y$ (by considering the associated entries of each). As such, if we let $\delta^*=|\bm{u}_{xz} \cap \bm{v}_{zy}|$ we will have 
$$\delta^* \leq \delta_{xy}$$
by virtue of $\delta_{xy}$ being the \textit{maximal} length of a common subsequence between $x$ and $y$. Also by the inclusion-exclusion-like identity above we will have
\begin{equation*}
    \begin{aligned}
    \delta_{xz} + \delta_{zy} - \delta^* = |\bm{u}_{xz} \cup \bm{v}_{zy}| \leq k
    \end{aligned}
\end{equation*}
where the inequality here follows since $\bm{u}_{zx}$ and $\bm{v}_{zy}$ are subsequences of $[k]$ (since they index $z$, which is of length $k$). Combining these last two inequalities thus leads to the following 
\begin{equation*}
    \begin{aligned}
    \delta_{xz} + \delta_{zy} - k \leq \delta^* \leq \delta_{xy},
    \end{aligned}
\end{equation*}
hence confirming \cref{eq:lcs_lsp_proof_tri_ineq} holds and proving the triangle inequality holds for the LCS distance. 

A similar argument can be used to prove that condition \eqref{metric_condition4:tri} also holds for the LSP distance. For brevity we will not give full exposition here, but we note the only key difference will be the notion of intersections and unions. If we introduce the shorthand notation $(i:j)=(i,\dots,j)$ where $1 \leq i\leq j \leq s$, denoting the subpath of $[s]$ from $i$ to $j$ (notice this is consistent with notation used in \Cref{def:lsp}), then we naturally have the following
\begin{align*}
    (i:j) \cap (l:k) = (\max(i,l):\min(j,k)) && (i:j) \cup (l:k) = (\min(i,l):\max(j,k)),
\end{align*}
and moreover if $|(i:j)| = j-i+1$ denotes subpath length we will similarly have the following identity 
$$|(i:j)| + |(l:k)| - |(i:j) \cap (l:k)|=|(i:j) \cup (l:k)|.$$
With these results one can follow the rationale used for the LCS distance, replacing subsequences with subapaths, to show that the triangle inequality is satisfied. 

Thus conditions \eqref{metric_condition2:indentity}, \eqref{metric_condition3:symmetry} and \eqref{metric_condition4:tri} all hold for both the LCS and LSP distances, completing the proof. 

\end{proof}

Finally, we discuss computation. Both of these distances can be computed via dynamic programming, much like the edit and DTW distances (\Cref{sec:computing_edit_distance,sec:computing_dtw_distance}), with a time complexity of $\mathcal{O}(nm)$. Infact, the LCS distance can be seen as an instance of the fixed-penalty edit distance $d_{\mathrm{E},\rho}$ with ground distance given by 
$$d(x_i,y_j) = \begin{cases}
0 & \text{if } x_i=y_j \\
2 & \text{otherwise}
\end{cases}$$
and $\rho=1$. Consequently, one can apply \Cref{alg:fp_edit_distance} or \Cref{alg:fp_edit_distance_light} directly, substituting in these values for $d(\cdot,\cdot)$ and $\rho$.


The approach to evaluate $d_\mathrm{LSP}$ is slightly different. In this case, we essentially scan over $x$ and $y$ and keep track of the common subpaths seen. Formally, we construct an $n \times m$ matrix $Q$ incrementally via the following recursive formula
$$Q_{(i+1)(j+1)} = \begin{cases}
Q_{ij} + 1 & \text{if } x_i=y_j \\
0 & \text{otherwise}
\end{cases},
$$
where when common subpaths appear between $x$ and $y$ one will see increments in $Q$ diagonally. The maximum length of a subpath can thus be obtained by taking the element-wise maximum of $Q$, that is $\delta_\mathrm{LSP} = \max_{ij} Q_{ij}$, which can then be plugged into \Cref{def:lsp} to compute $d_\mathrm{LSP}(x,y)$. We summarise this in \Cref{alg:lsp_distance}, where we keep track of the maximum in $Q$ as it is filled. Moreover, a lighter-memory algorithm is outlined in \Cref{alg:lsp_distance_light}, making use of the fact we only need to know the current and previous rows of $Q$.

\label{sec:path_distances}

\section{Pseudocode}

\begin{algorithm}[H]
	\SetAlgoLined
	\KwData{Sequences $X = (x_1,\dots,x_n)$ and $Y=(y_1,\dots,y_m)$}
	\KwIn{$d(\cdot, \cdot)$ a distance metric between sequence entries}
	\KwResult{$d_{\mathrm{E}}(X,Y)$ (\Cref{def:edit})}
	$C \in \mathbb{R}^{(n+1)\times (m+1)}$\;
	$C_{11}=0$\;
	$C_{(i+1)1} = C_{i1} + d(x_i,\Lambda)$ (for $i = 1,\dots,n$)\;
	$C_{1(j+1)} = C_{1j} + d(y_j,\Lambda)$ (for $j = 1,\dots,m$)\;
	
	\For{$i = 1,\dots, n$}{
		\For{$j = 1,\dots, m$}{
			$C_{(i+1)(j+1)} = \min\begin{cases}
			C_{ij} + d(x_i,y_j)\\
			C_{(i+1)j} + d(y_j,\Lambda) \\
			C_{i(j+1)} + d(x_i,\Lambda)
			\end{cases}$
		}
	}
	\Return{$C_{(n+1)(m+1)}$}
	\vspace{0.2cm}
	\caption{Evaluating edit distance $d_{\mathrm{E}}$}
	\label{alg:edit_distance}
\end{algorithm}
\begin{algorithm}[H]
	\SetAlgoLined
	\KwData{Sequences $X = (x_1,\dots,x_n)$ and $Y=(y_1,\dots,y_m)$}
	\KwIn{$d(\cdot, \cdot)$ a distance metric between sequence entries}
	\KwResult{$d_{\mathrm{E}}(X,Y)$  (\Cref{def:edit})}
	$Z^{\text{prev}}, Z^{\text{curr}} \in \mathbb{R}^{(m+1)}$\;
	$Z^{\mathrm{prev}}_1=0$, $Z^{\mathrm{curr}}_1=0$\;
	$Z^{\text{prev}}_{i+1} = Z^{\text{prev}}_i + d(y_i,\Lambda)$ (for $i = 1,\dots,m$)\;
	\For{$i = 1,\dots, n$}{
		$Z^\text{curr}_1 = Z^\text{curr}_1 + d(x_i, \Lambda)$\;
		\For{$j = 1,\dots, m$}{
			$Z^\text{curr}_{j+1} = \min\begin{cases}
			Z^\text{prev}_{j} + d(x_i,y_j) \\ Z^\text{prev}_{j+1} + d(x_i,\Lambda) \\ Z^\text{curr}_{j} + d(y_j,\Lambda)
			\end{cases}$
		}
		$Z^\mathrm{prev} = Z^\text{curr}$
	}
	\Return{$Z^\mathrm{curr}_{m+1}$}
	\vspace{0.2cm}
	\caption{Evaluating edit distance $d_\mathrm{E}$ (light memory)}
	\label{alg:edit_distance_light}
\end{algorithm}

\newpage
\begin{algorithm}[H]
	\SetAlgoLined
	\KwData{Sequences $X = (x_1,\dots,x_n)$ and $Y=(y_1,\dots,y_m)$}
	\KwIn{$d(\cdot, \cdot)$ a distance metric between sequence entries}
	\KwResult{$d_{\mathrm{E},\rho}(X,Y)$ (\Cref{def:fp_edit})}
	$C \in \mathbb{R}^{(n+1)\times (m+1)}$\;
	$C_{(i+1)1} = i \rho$ (for $i = 0,\dots,n$)\;
	$C_{1(j+1)} = j \rho $ (for $j = 0,\dots,m$)\;
	
	\For{$i = 1,\dots, n$}{
		\For{$j = 1,\dots, m$}{
			$C_{(i+1)(j+1)} = \min\begin{cases}
			C_{ij} + d(x_i,y_j)\\
			C_{i(j+1)} + \rho \\
			C_{(i+1)j} + \rho
			\end{cases}$
		}
	}
	\Return{$C_{(n+1)(m+1)}$}
	\vspace{0.2cm}
	\caption{Evaluating fixed-penalty edit distance $d_{\mathrm{E},\rho}$}
	\label{alg:fp_edit_distance}
\end{algorithm}

\begin{algorithm}[H]
	\SetAlgoLined
	\KwData{Sequences $X = (x_1,\dots,x_n)$ and $Y=(y_1,\dots,y_m)$}
	\KwIn{$d(\cdot, \cdot)$ a distance metric between sequence entries}
	\KwResult{$d_{\mathrm{E},\rho}(X,Y)$ (\Cref{def:fp_edit})}
	$Z^{\text{prev}}, Z^{\text{curr}} \in \mathbb{R}^{(m+1)}$\;
	$Z^{\mathrm{prev}}_1=0$, $Z^{\mathrm{curr}}_1=0$\;
	$Z^{\text{prev}}_{i+1} = Z^{\text{prev}}_i + \rho$ (for $i = 1,\dots,m$)\;
	\For{$i = 1,\dots, n$}{
		$Z^\text{curr}_1 = Z^\text{curr}_1 + d(x_i, \Lambda)$\;
		\For{$j = 1,\dots, m$}{
			$Z^\text{curr}_{j+1} = \min\begin{cases}
			Z^\text{prev}_{j} + d(x_i,y_j) \\ Z^\text{prev}_{j+1} + \rho \\ Z^\text{curr}_{j} + \rho
			\end{cases}$
		}
		$Z^\mathrm{prev} = Z^\text{curr}$
	}
	\Return{$Z^\mathrm{curr}_{m+1}$}
	\vspace{0.2cm}
	\caption{Evaluating fixed-penalty edit distance $d_{\mathrm{E},\rho}$ (light memory)}
	\label{alg:fp_edit_distance_light}
	
\end{algorithm}

\newpage

\begin{algorithm}[H]
	\SetAlgoLined
	\KwData{Sequences $X = (x_1,\dots,x_n)$ and $Y=(y_1,\dots,y_m)$}
	\KwIn{$d(\cdot, \cdot)$ a distance metric between sequence entries}
	\KwResult{$d_{\mathrm{DTW}}(X,Y)$ (\Cref{def:dtw})}
	$C \in \mathbb{R}^{(n+1)\times (m+1)}$\;
	$C_{11}=0$\;
	$C_{(i+1)1} =\infty$ (for $i = 1,\dots,n$)\;
	$C_{1(j+1)} =\infty$ (for $j = 1,\dots,m$)\;
	
	\For{$i = 1,\dots, n$}{
		\For{$j = 1,\dots, m$}{
			$C_{(i+1)(j+1)} = d(\elmtx_i,\elmty_j) + \min\{
			C_{ij},C_{(i+1)j},C_{i(j+1)}\}$
			}
	}
	\Return{$C_{(n+1)(m+1)}$}
	\vspace{0.2cm}
	\caption{Evaluating dynamic time warping distance $d_{\mathrm{DTW}}$}
	\label{alg:dtw_distance}
\end{algorithm}

\begin{algorithm}[H]
	\SetAlgoLined
	\KwData{Sequences $X = (x_1,\dots,x_n)$ and $Y=(y_1,\dots,y_m)$}
	\KwIn{$d(\cdot, \cdot)$ a distance metric between sequence entries}
	\KwResult{$d_{\mathrm{DTW}}(X,Y)$ (\Cref{def:dtw})}
	$Z^{\text{prev}}, Z^{\text{curr}} \in \mathbb{R}^{(m+1)}$\;
	$Z^{\mathrm{prev}}_1=0$, $Z^{\mathrm{curr}}=\infty$\;
	$Z^{\text{prev}}_{i+1} = \infty$ (for $i = 1,\dots,m$)\;
	\For{$i = 1,\dots, n$}{
		\For{$j = 1,\dots, m$}{
			$Z^\text{curr}_{j+1} = d(\elmtx_i,\elmty_j) + \min\{Z^\mathrm{prev}_j, Z^{\mathrm{curr}}_j, Z^{\mathrm{prev}}_{j+1}\}$
		}
		$Z^\mathrm{prev} = Z^\text{curr}$
	}
	\Return{$Z^\mathrm{curr}_{m+1}$}
	\vspace{0.2cm}
	\caption{Evaluating dynamic time warping distance $d_{\mathrm{DTW}}$ (light memory)}
	\label{alg:dtw_distance_light}
\end{algorithm}

\newpage

\newpage
\begin{algorithm}[H]
	\SetAlgoLined
	\KwData{Sequences $X = (x_1,\dots,x_n)$ and $Y=(y_1,\dots,y_m)$}
	\KwIn{$d(\cdot, \cdot)$ a distance metric between sequence entries}
	\KwResult{$d_{\mathrm{DTW},\rho}(X,Y)$ (\Cref{def:fp_dtw})}
	$C \in \mathbb{R}^{(n+1)\times (m+1)}$\;
	$C_{11}=0$\;
	$C_{(i+1)1} =\infty$ (for $i = 1,\dots,n$)\;
	$C_{1(j+1)} =\infty$ (for $j = 1,\dots,m$)\;
	
	\For{$i = 1,\dots, n$}{
		\For{$j = 1,\dots, m$}{
			$C_{(i+1)(j+1)} = d(\elmtx_i,\elmty_j) + \min\{
			C_{ij},C_{(i+1)j}+\rho,C_{i(j+1)}+\rho\}$
			}
	}
	\Return{$C_{(n+1)(m+1)}$}
	\vspace{0.2cm}
	\caption{Evaluating fixed-penalty dynamic time warping distance $d_{\mathrm{DTW},\rho}$}
	\label{alg:fp_dtw_distance}
\end{algorithm}

\begin{algorithm}[H]
	\SetAlgoLined
	\KwData{Sequences $X = (x_1,\dots,x_n)$ and $Y=(y_1,\dots,y_m)$}
	\KwIn{$d(\cdot, \cdot)$ a distance metric between sequence entries}
	\KwResult{$d_{\mathrm{DTW},\rho}(X,Y)$ (\Cref{def:fp_dtw})}
	$Z^{\text{prev}}, Z^{\text{curr}} \in \mathbb{R}^{(m+1)}$\;
	$Z^{\mathrm{prev}}_1=0$, $Z^{\mathrm{curr}}_1=\infty$\;
	$Z^{\text{prev}}_{i+1} = \infty$ (for $i = 1,\dots,m$)\;
	\For{$i = 1,\dots, n$}{
		\For{$j = 1,\dots, m$}{
			$Z^\text{curr}_{j+1} = d(\elmtx_i,\elmty_j) + \min\{Z^\mathrm{prev}_j, Z^{\mathrm{curr}}_j+\rho, Z^{\mathrm{prev}}_{j+1}+\rho\}$
		}
		$Z^\mathrm{prev} = Z^\text{curr}$
	}
	\Return{$Z^\mathrm{curr}_{m+1}$}
	\vspace{0.2cm}
	\caption{Evaluating fixed-penalty dynamic time warping distance $d_{\mathrm{DTW},\rho}$ (light memory)}
	\label{alg:fp_dtw_distance_light}
\end{algorithm}

\newpage 

\begin{algorithm}[H]
	\SetAlgoLined
	\KwData{Paths $x = (x_1,\dots,x_n)$ and $y=(y_1,\dots,y_m)$}	\KwResult{$d_{\mathrm{LSP}}(x,y)$ (\Cref{def:lsp})}
	$Q \in \mathbb{\mathbb{Z}_+}^{(n+1)\times (m+1)}$\;
	$Q_{11}=0$\;
	$Q_{(i+1)1} = 0$ (for $i = 1,\dots,n$)\;
	$Q_{1(j+1)} = 0$ (for $j = 1,\dots,m$)\;
	$\delta = 0$\;
	\For{$i = 1,\dots, n$}{
		\For{$j = 1,\dots, m$}{
		    \uIf{$x_i=y_j$}{
		        $Q_{(i+1)(j+1)} = Q_{ij} + 1$
		        
		        $\delta = \max\left(z, Q_{(i+1)(j+1)}\right)$
		    }
			\Else{
			    $Q_{(i+1)(j+1)} = 0$
			}
		}
	}
	\Return{$n + m - 2\delta$}
	\caption{Evaluating LSP distance $d_{\mathrm{LSP}}$}
	\label{alg:lsp_distance}
\end{algorithm}

\begin{algorithm}[H]
	\SetAlgoLined
	\KwData{Paths $x = (x_1,\dots,x_n)$ and $y=(y_1,\dots,y_m)$}	\KwResult{$d_{\mathrm{LSP}}(x,y)$ (\Cref{def:lsp})}
	$Z^\mathrm{prev}, Z^\mathrm{curr} \in \mathbb{Z}_+^{(m+1)}$\;
	$Z^\mathrm{prev}_{i+1}=Z^\mathrm{curr}_{i+1}=0$ (for $i=0,\dots,m$)\;
	$\delta = 0$\;
	\For{$i = 1,\dots, n$}{
		\For{$j = 1,\dots, m$}{
		    \uIf{$x_i=y_j$}{
		        $Z^\mathrm{curr}_{j+1} = Z^\mathrm{prev}_j + 1$
		        
		        $\delta = \max\left(z, Z^\mathrm{curr}_{j+1}\right)$
		    }
			\Else{
			    $Z^\mathrm{curr}_{j+1} = 0$
			}
		}
		$Z^\mathrm{prev}=Z^\mathrm{curr}$
	}
	\Return{$n + m - 2\delta$}
	\caption{Evaluating LSP distance $d_{\mathrm{LSP}}$ (light memory)}
	\label{alg:lsp_distance_light}
\end{algorithm}


	

\end{appendix}

\end{document}